\documentclass[pra,aps,floatfix,amsmath,superscriptaddress,twocolumn,longbibliography,nofootinbib]{revtex4-2}

\usepackage{amsfonts}%
\usepackage{amsmath}%
\setcounter{MaxMatrixCols}{30}%
%\setcounter{tocdepth}{1}
% disable subsections and subsubsections in the TOC
\makeatletter
\def\l@subsubsection#1#2{}
\makeatother
\usepackage{amssymb}%
\usepackage{graphicx}
\usepackage{color}
\usepackage{lipsum}
\usepackage{comment}
\usepackage{hyperref}
%\usepackage{multicol}
%\usepackage{natbib}
%\usepackage{columns}
%BeginMSIPreambleData
\providecommand{\U}[1]{\protect\rule{.1in}{.1in}}
%EndMSIPreambleData
\newtheorem{theorem}{Theorem}

\newtheorem{lemma}[theorem]{Lemma}

\newtheorem{proposition}[theorem]{Proposition}

\newenvironment{proof}[1][Proof]{\noindent\textbf{#1.} }{\ \rule{0.5em}{0.5em}}

\newcommand{\be} {\begin{equation}}
\newcommand{\ee} {\end{equation}}

\newcommand{\bes} {\begin{subequations}}
\newcommand{\ees} {\end{subequations}}
\newcommand{\bea} {\begin{eqnarray}}
\newcommand{\eea} {\end{eqnarray}}
\newcommand{\beq}{\begin{equation}}
\newcommand{\eeq}{\end{equation}}

\def\>{\rangle}
\def\<{\langle}
\def\Tr{\textrm{Tr}}

\renewcommand{\max}{\textrm{max}}

\newcommand{\ignore}[1]{}

\definecolor{nblue}{rgb}{0.2,0.2,0.7}
\definecolor{ngreen}{rgb}{0.2,0.6,0.2}
\definecolor{nred}{rgb}{0.7,0.2,0.2}
\definecolor{nblack}{rgb}{0,0,0}

\def\RF{\text{RF}}
\def\i{\text{in}}

\def\o{\text{out}}
\def\s{\text{S}}

\begin{document}

\title{Universal Quantum Emulator}

%\setstretch{1.2}

\author{Iman Marvian}
\affiliation{Departments of Physics, Duke University, Durham, NC 27708, USA}
\affiliation{Department of Electrical and Computer Engineering, Duke University, Durham, NC 27708, USA}
\affiliation{Research Laboratory of Electronics, Massachusetts Institute of Technology, Cambridge, MA 02139}
\author{Seth Lloyd}
\affiliation{Research Laboratory of Electronics, Massachusetts Institute of Technology, Cambridge, MA 02139}
\affiliation{Department of Mechanical Engineering, Massachusetts Institute of Technology, Cambridge, MA 02139}

%\date{\today}

\begin{abstract}
We propose a quantum algorithm that emulates the action of an unknown unitary transformation on a given input state, using multiple copies of some unknown sample input states of the unitary and their corresponding output states. The algorithm does not assume any prior information about the unitary to be emulated or the sample input states. To emulate the action of the unknown unitary, the new input state is coupled to the given sample input-output pairs in a coherent fashion. Remarkably, the runtime of the algorithm is logarithmic in $D$, the dimension of the Hilbert space, and increases polynomially with $d$, the dimension of the subspace spanned by the sample input states. Furthermore, the sample complexity of the algorithm—i.e., the total number of copies of the sample input-output pairs needed to run the algorithm—is independent of $D$ and polynomial in $d$. In contrast, the runtime and sample complexity of incoherent methods, i.e., methods that use tomography, are both linear in $D$. The algorithm is \emph{blind}, in the sense that, at the end, it does not learn anything about the given samples or the emulated unitary. This algorithm can be used as a subroutine in other algorithms, such as quantum phase estimation.

\end{abstract}

\maketitle

\section{Introduction}

A universal quantum simulator is a machine that can be programmed to mimic the dynamics of other quantum systems \cite{Lloyd:96}. The time evolution of the simulator obeys the same equations of motion as the evolution of the simulated system. A universal quantum emulator, on the other hand, is a machine that mimics the input-output relation of another system, by looking to the output of that system on some sample input states. Unlike a simulator, an emulator does not need to obey the same dynamical equations as of the emulated system.

In this paper, we introduce a quantum algorithm that emulates the action of an unknown unitary transformation (or its inverse) on unknown given input states. The algorithm couples one copy of the given input state to multiple copies of some unknown sample input-output pairs, i.e., copies of some input states of the unitary and their corresponding output states.  A priori, the algorithm  does not need to have any information about the unitary to be emulated, or the input and output states.  Naturally, the algorithm emulates the action of the unitary on states in the subspace spanned by the previously given input states, which could be much smaller than the system Hilbert space.  Indeed,  as we further explain below, the algorithm becomes particularly more efficient than naive approaches based on tomography  in regimes where the dimension of the subspace, denoted as $d$, is much smaller than the dimension of the system Hilbert space, denoted as $D$.

Obviously, having multiple copies of sample input-output pairs one can find an approximate classical description of these states in a standard basis via state tomography. Then, based on this information, one can determine the classical description of a unitary transformation that transforms the input states to the corresponding output states. However, for large Hilbert spaces this approach is highly inefficient and impractical: First of all, state tomography requires lots of copies of the sample states. Second, to find a unitary that relates the input and output states, one needs to solve a set of linear equations with a large number of variables. And finally, even if one finds an example of such unitary, in general, this unitary cannot be implemented efficiently. 

More precisely, to emulate the action of an unknown unitary transformation (or its inverse) using the approaches based on tomography the runtime and the sample complexity, that is the total number of copies of the samples needed to run the algorithm, are both lower bounded by $\Omega(D\times d)$, where $D$ is the dimension of the system Hilbert space, and $d$ is the dimension of the subspace spanned by the sample input states. In contrast, the runtime of the algorithm proposed in this work is $\mathcal{O}(\log D\times \text{poly}(d))$, where poly($d$) denotes a polynomial in $d$.  Furthermore, the sample complexity of this algorithm is independent of $D$ and poly($d$). Therefore, under the practical assumption that the unitary transformation should be emulated in a subspace of the Hilbert space, whose dimension $d$ is finite or poly-logarithmic in the dimension of the Hilbert space, our algorithm has exponentially better runtime and sample complexity.

It is interesting to compare this result with the scenario studied in \cite{bisio2010optimal}, where one wants to \emph{learn} an unknown unitary $U$ by applying it for a finite number of times to some quantum states, so that later, when one does not have access to $U$, one can reproduce its effect on a new input state. It turns out that the strategy that maximizes the average fidelity, where the average is taken over all states in the Hilbert space, is an incoherent measure-and-rotate strategy, i.e., a method that uses tomography \cite{bisio2010optimal}. In contrast to this result, our work shows that under the practical assumption that the action of the unitary should be emulated in a low-dimensional subspace, and not the entire Hilbert space, the coherent methods are much more powerful than the incoherent ones.

The main working principle behind our algorithm is simple and intuitive: It first employs a novel technique to find the coordinates of the new input state relative to a frame defined by the sample input states in a randomized fashion and encodes these coordinates in some ancillary qubits. Next, it reconstructs the state that has exactly the same coordinates relative to the frame defined by the output sample states in a similar way. Therefore, instead of using tomography, which aims to find the description of the given sample quantum states relative to some background (classical) reference frame, we use the given samples themselves as quantum reference frames \cite{QRF_BRS_07, MS11, Modes, gour2008resource, marvian2008building}. The superiority of this method over approaches based on tomography is essentially another manifestation of the general principle that using a random background reference frame can make a problem much more complicated, and often, a simpler description of physical phenomena can be found in terms of the relational degrees of freedom.   %It is also worth noting that even though this algorithm uses a randomized technique, it always generates an output with high fidelity with the desired output state. Therefore, it can be used as a subroutine in larger algorithms, such as quantum phase estimation \cite{}.

%This algorithm is \emph{blind}, in the sense that, in the end, it does not learn anything about the given samples or the emulated unitary. Although the algorithm uses a randomized technique, it consistently generates an output with high fidelity to the desired output state. Therefore, it can be employed as a subroutine in larger algorithms, such as quantum phase estimation. Another potential application of this algorithm is to use it to cancel the effect of an unknown unitary channel without performing process tomography and finding the description of the unitary. Furthermore, the algorithm could be valuable in cases where we can efficiently prepare the input and corresponding outputs of a unitary transformation, but we do not know how to implement the unitary itself. 

\section{Preliminaries}

In the following, 
 $$S_\i=\big\{\rho_k^\i\ \ : k=1,\cdots,K\big\}$$
 denotes the set of sample input states of an unknown unitary $U$, and 
 $$S_\o=\{\rho_k^\o=U \rho_k^\i U^\dag\ \ : k=1,\cdots,K\}$$
    are the corresponding output states.  We first present the algorithm for the special case of pure sample states,  where 
$$\rho_k^\i=|\phi^\i_k\rangle\langle\phi_k^\i| \ \ \ \  : k=1,\cdots,K\ ,$$    
 and
$$\rho_k^\o=|\phi^\o_k\rangle\langle\phi_k^\o|=U  |\phi^\i_k\rangle\langle\phi_k^\i|U^\dag\ : k=1,\cdots,K\ .$$    
Later, we  explain how the algorithm can  be generalized to the case of mixed states as well.  Let 
    \be
    \mathcal{H}_\i=\text{Span}_\mathbb{C}\{|\phi^\i_k\rangle : k=1,\cdots,K\}\ ,
    \ee 
and  $\mathcal{H}_\o=U\mathcal{H}_\i,$  be the subspaces  spanned by  the input and output vectors, respectively, and 
    \be
d=\dim(\mathcal{\mathcal{H}_\i})=\dim(\mathcal{H}_\o)\ , 
\ee   
 be their dimension. When the input states are mixed, $\mathcal{H}_\i$ can be defined as the subspace spanned by the union of the supports of ${\rho_k^\i}$, and $\mathcal{H}_\o = U\mathcal{H}_\i$.  As we explain later, the algorithm requires multiple copies of each sample state in $S_\i$  and $S_\o$ (Interestingly, at the end of the algorithm most of these states remain almost unchanged!).

 %We assume $K$ the number of different sample input states in $S_\i$ is $\text{poly}(d)$. 
    
    \subsection{The necessary and sufficient condition for emulation}
    
 To be able to emulate the action of an unknown  unitary transformation   $U$ on the input subspace $\mathcal{H}_\i$,  we naturally  need to assume that the  set of input states $S_\i$ uniquely determines the action of $U$ on this subspace, up to a global phase.
 To understand this requirement consider the following example:  Suppose the goal is to emulate the action of a qubit unitary $U=e^{i \alpha Z}$, which rotates the state of qubit around the z-axis by an unknown angle $\alpha\in[0,2\pi)$.  If this  unitary is applied on the sample input states $|0\rangle$ and $|1\rangle$, then the corresponding output states $e^{i\alpha}|0\rangle$ and $e^{-i\alpha}|1\rangle$ do not contain any information about the unknown angle $\alpha$, because $\alpha$ only shows up  as a global phase, which is not observable.  Therefore, even though the sample input states span the entire Hilbert space, it is impossible to emulate the action of this unknown unitary.  Of course, in this example, the issue can be easily rectified by using a non-orthogonal basis as the sample input states. In the following, we formulate this property more generally.  

Suppose a pair of unitaries  $U$ and $V$ satisfy
         \begin{equation}\label{cond1}
U \rho_k^\i U^\dag=V \rho_k^\i V^\dag \ \ \   :  \forall \rho_k^\i \in S_\i \  ,
  \end{equation} 
  which means their output density operators are identical for all input density operators in   $S_\i$. Then, we require that they should  also satisfy 
  \begin{equation}\label{cond2}
U|\psi\rangle\langle\psi|U^\dag= 
  V|\psi\rangle\langle\psi|V^\dag \  \ \  :\  \forall |\psi\rangle\in  \mathcal{H}_\i \    , 
\end{equation} 
  which means, up to a possible global phase, $U$ and $V$ act identically on all input states in $\mathcal{H}_\i $.  Eq.(\ref{cond1}) implies  that the unitary $V^\dag U$  commutes with all elements of  $S_\i$.  Eq.(\ref{cond2}), on the other hand,  means that $V^\dag U$  commutes  with $\mathcal{L}(\mathcal{H}_\i)$, the set of linear operators with support  restricted to $\mathcal{H}_\i$. This  means $V^\dag U$ acts as a global phase on $\mathcal{H}_\i$.

 It follows that  Eq.(\ref{cond1}) implies Eq.(\ref{cond2})  if, and only if
\begin{equation}\label{comm}
\text{Comm}(S_\i)=\text{Comm}(\mathcal{L}(\mathcal{H}_\i))\ ,
\end{equation}
or, equivalently  if, and only if, 
  \begin{equation}\label{alg}
  \text{Alg}_\mathbb{C}(S_\i)=\mathcal{L}(\mathcal{H}_\i)\ ,
  \end{equation}
  where $\text{Comm}(\cdot)$ denotes, the commtuant of a set of operators, and $ \text{Alg}_\mathbb{C}(S_\i)$ denotes the 
 complex associative algebra generated by  $S_\i$, i.e., the set of polynomials generated by elements of  $S_\i$ with complex coefficients.   The equivalence of Eq.(\ref{comm}) and Eq.(\ref{alg}) can be seen, e.g.,  using the bicommutant theorem, or, using the characterization of the finite-dimensional complex associative algebras \cite{Kribs:2005:180501}.  Note that in this argument states $\rho_k^\i$ do not need to be pure.  In summary,  
\begin{proposition}\label{prop1}
Let $\mathcal{H}_\i$ be the subspace spanned by the supports of (possibly mixed) density operators $S_\i=\{\rho_k^\i: k=1,\cdots K\}$. Then,  equations 
\be
\rho_k^\o=U \rho_k^\i U^\dag\ \ \ \  : k=1,\cdots K \ ,
\ee
 uniquely determine the action of an unknown unitary $U$ on this subspace (up to a possible global phase) if, and only if, $\text{Alg}_\mathbb{C}(S_\i)=\mathcal{L}(\mathcal{H}_\i)$.     
\end{proposition}
 Therefore, in the following, we naturally assume that this condition is satisfied.   As an example, suppose  $S_\i$ contains $d=\text{dim}(\mathcal{H}_\i)$ linearly independent  pairwise non-orthogonal pure states $\{|\phi^\i_k\rangle\}$, such that 
\be\label{non}
\langle\phi^\i_j|\phi^\i_k\rangle\neq 0\ \ \ \ \ \ \ \ :\ j,k=1,\cdots, d\ .
\ee
Then, for any $ j,k=1,\cdots, d$, it holds that
$$|\phi^\i_j\rangle\langle\phi^\i_k|=  \frac{|\phi^\i_j\rangle\langle\phi^\i_j|\phi^\i_k\rangle\langle\phi^\i_k|}{\langle\phi^\i_j|\phi^\i_k\rangle}\in \text{Alg}_\mathbb{C}(S_\i)\ ,$$
which, in turn, implies this algebra is equal to $\mathcal{L}(\mathcal{H}_\i)$.\footnote{
Note that Eq.(\ref{non}) is not necessary for the algorithm to work. We only need the weaker assumption in Eq.(\ref{alg}).}

\begin{figure}[t]
  \includegraphics[scale=1.25]{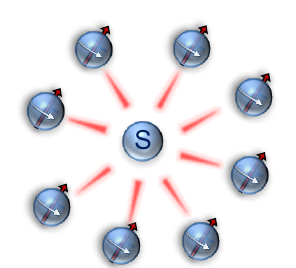}
  \caption{\textbf{A Physical Interpretation of Density Matrix Exponentiation:} Consider a spin-half system $S$ coupled to a set of $n$ ancillary spin-half systems via the Heisenberg exchange interaction, with the total Hamiltonian $H=J \sum_i \vec{\sigma}_S\cdot \vec{\sigma}_i$.  Suppose each ancillary system is initially prepared in the pure state $|\phi\rangle$. Then, as we turn on the interaction, the system S starts rotating around the axis defined by the Bloch vector of state  $|\phi\rangle$. More precisely, for $n\gg 1$, the system’s evolution can be approximated by the Hamiltonian  $(n+1)J  |\phi\rangle\langle\phi|$ (See Appendix \ref{App:DME2}). Indeed, this can be thought of as  a simple model for how a (classical) magnet interacts with a spin half-system. Note that each Heisenberg interaction   $J \vec{\sigma}_S\cdot \vec{\sigma}_i$  can be written as a linear combination of SWAP and the identity operator. Therefore, exactly the same dynamics is observed if we couple the system S to the ancillary spin-half systems via the SWAP Hamiltonian. A similar result can be established for systems with Hilbert space of arbitrary dimension. Density matrix exponentiation technique proposed in \cite{lloyd2014quantum} can be thought of as the Trotterized version of this evolution, where the system S sequentially interacts with each of these $n$ ancillary systems.}
  \label{FigDen}
\end{figure}

\begin{figure*}
\includegraphics[width=\textwidth,height=6.8cm]{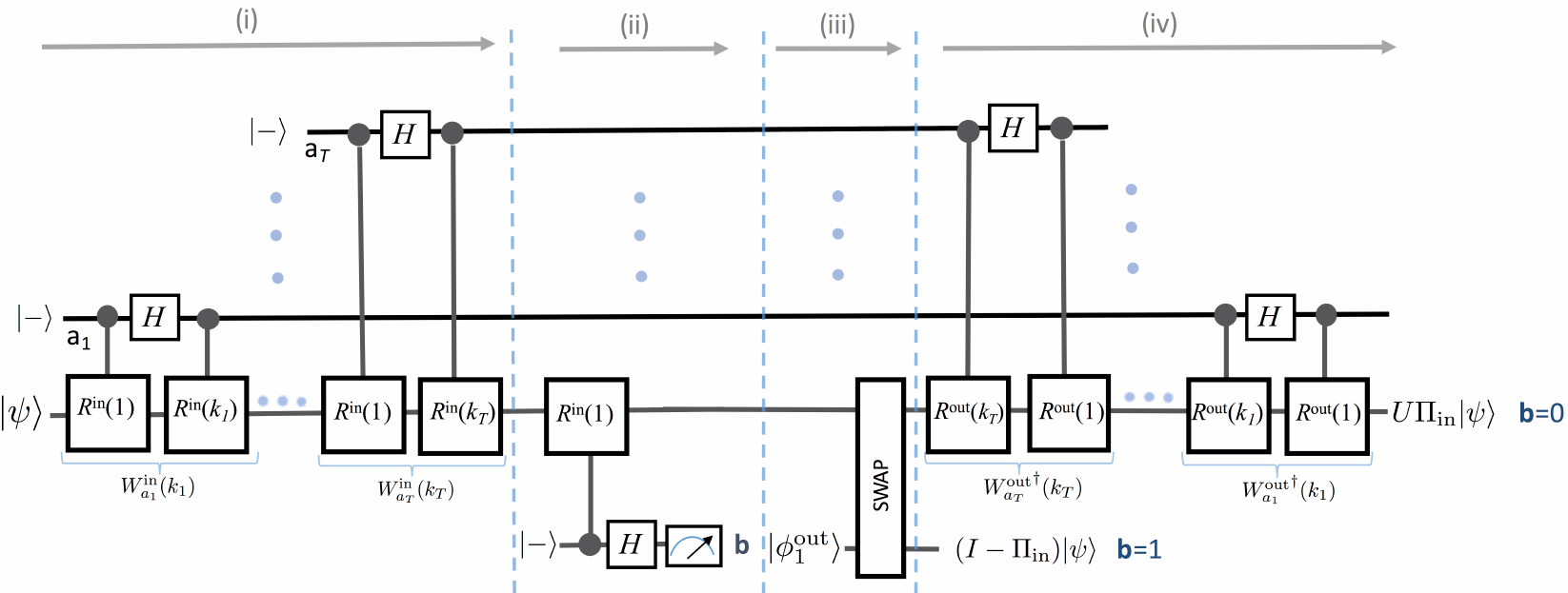}
\caption{The quantum circuit for emulating  unitary transformation $U$ for the special case of pure input-output sample pairs. Here $k_1,\cdots , k_T$ are $T=\text{poly}(d)$ integers chosen uniformly at  random from integers  $2,\cdots ,K$.  We use the given  copies of sample states in $S_\i$ and $S_\o$  to approximately implement the controlled-reflections $R_a^\i(k)$ and $R_a^\o(k)$, respectively. } 
\label{Fig}
\end{figure*}

%Here $R^\i(k)= e^{i \pi |\phi^\i_k\rangle\langle\phi^\i_k|}$ and $R^\o(k)= e^{i \pi |\phi^\o_k\rangle\langle\phi^\o_k|}$ are reflections about the input and output states in $S_\i$ and $S_\o$ respectively.  In step (ii) we perform a measurement in  $\{|0\rangle,|1\rangle\}$ basis.  
  %  As we show in appendix using multiple copies of  the input states 
  %$S_\i=\{|\phi^\i_k\rangle\langle\phi_k^\i|: k=1,\cdots,K\}$ and the corresponding output states $S_\o=\{|\phi^\o_k\rangle\langle\phi_k^\o|=U |\phi^\i_k\rangle\langle\phi_k^\i| U^\dag : k=1,\cdots,K\}$. 
  %Here $R^\i(k)= e^{i \pi |\phi^\i_k\rangle\langle\phi^\i_k|}$ and $R^\o(k)= e^{i \pi |\phi^\o_k\rangle\langle\phi^\o_k|}$ are reflections about these input and output states.  We use the given  copies of  states in $S_\i$ and $S_\o$  to simulate these controlled reflections.  Integers $k_1,\cdots,K_T$ are chosen uniformly at  random from integers  $1,\cdots, K$, where $T=\text{poly}(d)$. The measurement is in the standard basis. If we project the qubit to state $|1\rangle$, we terminate the algorithm and return: "out of the input subspace". Otherwise, the algorithm maps the input state to state  $ U\Pi_\i |\psi\rangle/\sqrt{\langle\psi|\Pi_\i |\psi\rangle}$, where $\Pi_\i$ is the projector to the subspace spanned by the input states $S_\i$.

\subsection{Controlled reflections}

%, and we only need to irreversibly consume  just one of the given states.
 The main use of the given copies of sample states $S_\i$ and $S_\o$ is to simulate \emph{controlled-reflections} about these states.  Let 
\be
R(k)= \exp({i \pi |\phi_k\rangle\langle\phi_k|})=I-2|\phi_k\rangle\langle\phi_k|\  \ ,
\ee
 be the reflections about  state $|\phi_k\rangle$, where   we have suppressed the superscripts \emph{in} and \emph{out} in both sides. In the following algorithm, we need to implement the controlled-reflections  $R_a^\i(k)$ and $R_a^\o(k)$, defined as 
\begin{align}\label{eq:cont}
R_a(k)&=   |0\rangle\langle 0|_a \otimes I +   |1\rangle\langle 1|_a \otimes \exp({i \pi |\phi_k\rangle\langle\phi_k|})\ ,
\end{align}
where $a$ labels the control qubit,  $I$ is the identity operator on the main system, and  again we have suppressed the superscripts \emph{in} and \emph{out} on both sides.   

Using the given  copies of the sample states, we can efficiently simulate these controlled-reflections via the \emph{density matrix exponentiation} technique of \cite{lloyd2014quantum} (See also \cite{kjaergaard2022demonstration}  for an experimental demonstration of this technique). 
   In Fig.\ref{FigDen} and Appendix \ref{App:DME2} we explain a simple physical interpretation of this technique, as  
   simulating the Heisenberg interaction between the main system and the copies of sample states. Applying this technique and using $n$ copies of an unknown state $\sigma$, one can implement the unitary $\exp(-i t\sigma)$ or its controlled version $|0\rangle\langle 0| \otimes I + |1\rangle\langle 1| \otimes \exp(-i t\sigma)$ for any real $t$, with an error of $\epsilon=\mathcal{O}\left(\frac{t^2}{n}\right)$, as quantified in terms of the trace distance \cite{lloyd2014quantum}  (See also \cite{kimmel2017hamiltonian} for an error analysis and the proof of optimality of this technique).   As we further explain  in Appendix \ref{app:SWAP},  using a method previously employed in \cite{pichler2016measurement}, this unitary  can be realized with a circuit that uses only $\Theta(n\times \log D)$ 
 single-qubit gates and  controlled-controlled-SWAP gates.   

% \cite{nielsen2000quantum}
  
%We also show that at the end of this process the used $n$ copies of state $|\phi\rangle$ remains almost unchanged,  i.e. the fidelity of their joint state with their initial state $|\phi\rangle^{\otimes n}$ is larger than or equal to $(n-1)/(n+1)$. 

Therefore, in the following, where we present the algorithm, we  assume  all the controlled-reflections $\{R_a(k):  1\le k\le K\}$ can be  efficiently implemented.  Finally, note that for some applications of the emulator algorithm,  one may have the classical description of some of the sample input-output states. In this case 
the controlled-reflections can be implemented directly, without using density matrix exponentiation.\\

\noindent\textbf{Notation:} In the following  $H_a$ denotes the Hadamard gate $H$ acting on ancilla qubit $a$,  which is defined by $H|0\rangle=|+\rangle$ and $H|1\rangle=|-\rangle$ with $|\pm\rangle=(|0\rangle\pm|1\rangle)/\sqrt{2}$. To simplify the presentation, we use the notation 
\be
W_{a}(k)\equiv  R_{a}(k) H_a R_a(1)\ ,
\ee 
which describes the composition of two controlled-reflections $R_a(1)$ and $R_{a}(k)$ with a Hadamard $H_a$  gate in between them  (See Fig.(\ref{Fig})).  Note that we have again suppressed \emph{in} and \emph{out} superscripts on both sides.  The algorithm also uses a SWAP gate defined by $\text{SWAP} |\nu\rangle|\mu\rangle=|\mu\rangle|\nu\rangle$, for any pair of states $|\mu\rangle$ and $ |\nu\rangle$.

%% Later, we present several generalizations of this algorithm, including to the case where the given samples contain mixed states. 

\section{The algorithm (Special case)}

In this section we discuss the algorithm for the universal quantum emulator, in the special case where all the sample input-output pairs are pure states.  Fig.(\ref{Fig}) presents the quantum circuit that emulates the action of an unknown unitary transformation $U$ on any given  state $|\psi\rangle$ in the input subspace $\mathcal{H}_\i$.  For a general input state, which is not restricted to this subspace, this circuit first projects the state to this subspace, and if successful, then applies the unitary $U$ to it. 

In this algorithm $(k_1,\cdots , k_T)$ are $T$ integers chosen uniformly at  random from integers  $2,\cdots,K$, where $T$ is a constant that determines the precision of the algorithm, and we choose it to be polynomial in $d$ the dimension of the subspace $\mathcal{H}_\i$, and independent of $D$.  Furthermore, state $|\phi_1^\i\rangle$ (and $|\phi_1^\o\rangle$) is one of the sample input states (and its corresponding output) which is chosen  randomly at the beginning of the algorithm, and is fixed during the algorithm.  In steps (i) and (iv) of the algorithm we implement, respectively,  the unitaries $W_{a_i}^{\i}(k_i)$ and ${W_{a_i}^{\o}}^\dag(k_i)$ on the system and qubit $a_i$, for $i=1,\cdots,T$.  As we explained before, all the conditional reflections $ R^\i_{a}(k) $ and $ R^\o_{a}(k) $ can be  efficiently simulated using the given copies of states $|\phi_k^\i\rangle$ and $|\phi_k^\o\rangle$. 

 In step (ii) of the algorithm we perform a qubit measurement in the computational  basis  $\{|0\rangle,|1\rangle\}$.  This measurement allows us to determine if  the initial state $|\psi\rangle$ is in the input subspace  
 $\mathcal{H}_\i$ or not.  
In particular,   with  probability approximately  $1-\langle\psi|\Pi_\i|\psi\rangle$ the qubit is projected to state $|1\rangle$, corresponding to  outcome $b=1$,  in which case  the system will be projected to a state close to $(I-\Pi_\i)|\psi\rangle/\sqrt{1- \langle\psi|\Pi_\i|\psi\rangle}$, where $\Pi_\i$ is the projector to the subspace $\mathcal{H}_\i$.   
 On the other hand, with probability  approximately equal to $\langle\psi|\Pi_\i|\psi\rangle$ we obtain the outcome $b=0$, in which case the output state of the algorithm is close to $U \Pi_\i |\psi\rangle/\sqrt{\langle\psi|\Pi_\i|\psi\rangle}$. In this case the algorithm consumes a copy of state $|\phi_1^\o\rangle$, and  returns a copy of state $|\phi_1^\i\rangle$.

A few remarks are in order: 
\begin{itemize}

\item As we show below, although the algorithm uses random integers $(k_1,\cdots , k_T)$, for sufficiently large $T$, it \emph{always} transforms the input state $|\psi\rangle\in\mathcal{H}_\i$ to a state with high fidelity to the desired output state $U|\psi\rangle$.

\item By reversing the role of the input and output sample states in the above circuit, we can implement the inverse unitary $U^\dag$ instead of $U$, using the same set of sample states.

\item As we explain in Sec. \ref{Sec:comp}, and Appendix \ref{App:Sec:comp}, a modified version of this circuit can be implemented using only $\lceil \log (T+1)\rceil$ ancillary qubits (instead of $T$ qubits), which achieves exactly the same fidelity. 

\item In general, the sample input and output states may live in different Hilbert spaces. As long as the Gram matrices of the two sets are equal, such that $\langle\phi^\i_j |\phi^\i_k\rangle=\langle\phi^\o_j |\phi^\o_k\rangle$ for $j,k=1,\cdots, K$, the algorithm  realizes the corresponding unitary  transformation from the input subspace $\mathcal{H}_\i$ to the output subspace $\mathcal{H}_\o$.         

%\item To simplify the presentation and the analysis of the algorithm, here we have assumed random
%integers $(k_1,\cdots,k_T)$ are chosen from integers $1,\cdots ,K$, which means with probability $1/K$ we apply two successive controlled-reflections with respect to state $|\phi_1^\i\rangle$ (or $|\phi_1^\o\rangle$).  By choosing the random integers from the interval $2, \cdots, K$ we avoid these successive controlled- reflections, which leads to a slightly better performance.

\end{itemize}

\subsection{How it works: Coherent erasing}\label{Sec:alg}
We first assume the initial state $|\psi\rangle$ is in the input subspace $\mathcal{H}_\i$, and then  consider the general case.  To understand step (i) of the algorithm, it is useful to  focus on the reduced state of the system.  As explained below, from this  point of view, during step (i) we are trying to erase the initial state of the  system and push it into state $|\phi_1^\i\rangle$,  a state which is  chosen randomly from the sample input set $S_\i$. This is achieved via repetitive \emph{measuring and conditional mixing}. 

Consider the first round of step (i), where we randomly choose integer $k_1$ in $2,\cdots,K$, and apply the unitary $W(k_1)=R_{a_1}(k_1) H_{a_1} R_{a_1}(1)$ to the system and qubit $a_1$ initially prepared in state $|-\rangle$.  The effect of this transformation on the reduced state of the system can be interpreted in the following way: we perform the projective measurement  with projectors $$ P=|\phi_1^\i\rangle\langle \phi_1^\i|\ ,\ \ \ \   P^\perp=I-P.$$ 
Then, if the system is found in state $|\phi_1^\i\rangle$, we leave it unchanged;  otherwise, we apply the random reflection $R^\i(k_1)$ to it. Tracing over  qubit $a_1$ and averaging over the values of $k_1$, we find that the overall effect of this transformation on the reduced state of the system can be described by the quantum channel 
\be
\mathcal{W}(\rho)=P\rho P + \mathcal{D}(P^\perp \rho P^\perp)\ ,
\ee
 where
\beq
 \mathcal{D}(\rho)=\frac{1}{K-1} \sum^K_{k=2} R^\i(k)\ \rho\ R^\i(k)\ . 
 \eeq
In step (i) we repeat the above process  $T$ times with $T$ different ancillary qubits and  random integers. Using the facts that (1) ancillary qubits are initially uncorrelated with each other, and (2) different random integers $k_1,\cdots , k_T$ are statistically independent  of each other, we find that  at the end of step (i) the average reduced state of the system is described by state $\mathcal{W}^T(|\psi\rangle\langle\psi|)$. %Note that if the initial state of system $|\psi\rangle$ is in the subspace $\mathcal{H}_\i$, then the system remains inside this subspace for all $T$ (More precisely, the support of the reduced state of the system will be restricted to $\mathcal{H}_\i$). 

%Note that channels $\mathcal{D}$ and $\mathcal{W}$ never map an initial state inside $\mathcal{H}_\i$ out of this subspace. Therefore for $|\psi\rangle\in \mathcal{H}_\i$,  the support of $\mathcal{W}^T(|\psi\rangle\langle\psi|)$ is  restricted to $\mathcal{H}_\i$. 

Next, recall that, as stated in Eq.(\ref{alg}),   the input set $S_\i$ generates the full matrix algebra in $\mathcal{H}_\i$, an assumption that is crucial for being able to uniquely determine the action of $U$ on $\mathcal{H}_\i$. Remarkably, this assumption now translates to the fact that inside $\mathcal{H}_\i$,
channel $\mathcal{W}$ has a unique fixed point state, namely $|\phi_1^\i\rangle\langle\phi_1^\i|$  and $\mathcal{W}^T(|\psi\rangle\langle\psi|)$ converges exponentially fast to this state.

\begin{proposition}\label{prop2}
If $\text{Alg}_\mathbb{C}(S_\i)=\mathcal{L}(\mathcal{H}_\i)$, then for any state $\rho$ with support restricted to  $\mathcal{H}_\i$\be\label{limit}
\lim_{T\rightarrow \infty} \mathcal{W}^T(\rho)=|\phi_1^\i\rangle\langle\phi_1^\i|\ .
\ee
\end{proposition}
\begin{proof}
Since the reflections $R^\i(k)$ are block-diagonal with respect to the input subspace $\mathcal{H}_\i$, under channels $\mathcal{D}$ and $\mathcal{W}$ any initial state $\sigma$  with support restricted to $\mathcal{H}_\i$ remains inside this subspace. Furthermore, it can be easily seen that  for any  integer $t\ge 1$, 
\be\label{mono}
\langle\phi_1^\i|\mathcal{W}^t(\sigma)|\phi_1^\i\rangle \ge \langle\phi_1^\i|\sigma|\phi_1^\i\rangle\ ,
\ee
which means the probability of finding the system  in state $|\phi_1^\i\rangle$ is non-decreasing with $t$.   Indeed,  Eq.(\ref{mono}) holds as equality only when for all $j=1,\cdots, t$,  the operator $\mathcal{D}^{j}(P^\perp \sigma P^\perp)$ remains orthogonal to $|\phi_1^\i\rangle$, in which case
  $$\mathcal{W}^t(\sigma)=P\sigma P+ \mathcal{D}^t(P^\perp \sigma P^\perp)\ ,$$
and the difference between the right- and left-hand sides of   Eq.(\ref{mono}) becomes  
 $\langle \phi_1^\i| \mathcal{D}^t(P^\perp \sigma P^\perp)|\phi_1^\i\rangle$.   
 In the following, we show that
the assumption $\text{Alg}_\mathbb{C}(S_\i)=\mathcal{L}(\mathcal{H}_\i)$  implies that for any non-zero  positive semi-definite operator $A$ on $\mathcal{H}_\i$,
\be\label{boundg}
\langle \phi_1^\i|\mathcal{D}^{t}(A)|\phi_1^\i\rangle=\Tr\big(A \mathcal{D}^{t}(|\phi_1^\i\rangle\langle \phi_1^\i|)\big) >0\ ,
\ee
  for some integer $t\in\{0, 1,\cdots,d-1\} $.  Applying this to the operator  $A=P^\perp \sigma P^\perp$ we find that, unless this operator is zero, which happens only when $\sigma=|\phi_1^\i\rangle\langle \phi_1^\i|$, 
$  \langle \phi_1^\i|\mathcal{D}^t(P^\perp \sigma P^\perp) |\phi_1^\i\rangle >0$,  and therefore Eq.(\ref{mono}) is a strict inequality for some $t\le d-1$. In conclusion,  the fidelity $f(T)=\langle\phi_1^\i|\mathcal{W}^T(\rho)|\phi_1^\i\rangle$  is non-decreasing with $T$ and unless $f(T)=1$,  $f(T+d-1)>f(T)$. Since $f(T)\le 1$, by the monotone convergence theorem, we arrive at Eq.(\ref{limit}).

To complete the proof, we prove the bound  in Eq.(\ref{boundg}) holds for some integer $t\in\{0, 1,\cdots,d-1\} $.  First, note that the support of  
 the positive semi-definite operator $\mathcal{D}^{j}(|\phi_1^\i\rangle\langle \phi_1^\i|)$ is  equal to   the span of the set of vectors 
$$V_j=\{ R^\i(k_j)\cdots R^\i(k_1)|\phi_1^\i\rangle:  k_1,\cdots, k_j=2,\cdots, K \}\ ,$$
with $V_0=\{|\phi_1^\i\rangle\}$.  The dimension of the span of the union of these sets, that is the span of   $\cup_{j=0}^t V_j$,  monotonically increases with $t$ until it reaches $d=\text{dim}(\mathcal{H}_\i)$, because otherwise we have found a proper subspace of $\mathcal{H}_\i$
which is invariant under the action of $\{R^\i(k): k=1,\cdots, K\}$,  in contradiction with  the assumption $\text{Alg}_\mathbb{C}(S_\i)=\mathcal{L}(\mathcal{H}_\i)$.  Then, since by assumption $A$ is a non-zero positive semi-deifnite operator with support restricted to $\mathcal{H}_\i$, $\langle\eta|A|\eta\rangle>0$ for a vector in $|\eta\rangle \in\cup_{j=0}^{d-1} V_j$. We conclude that  $\Tr\big(A \mathcal{D}^{t}(|\phi_1^\i\rangle\langle \phi_1^\i|)\big)> 0$ 
for some integer $t\in\{0, 1,\cdots,d-1\} $. This completes the proof of the proposition.     \end{proof}\\

To summarize, in step (i)  we \emph{erase} the initial state of the system and push it into state $|\phi_1^\i\rangle$. The fact that we have enough resources to   uniquely determine the action of unitary $U$ on any state in the input subspace $\mathcal{H}_\i$, translates to the fact that any state in this subspace can be erased in a coherent unitary fashion. 

 But, recall that the entire process is performed via unitary transformations and quantum information is conserved during any unitary evolution. This means that all the information about state $|\psi\rangle$ should now be encoded in the ancillary qubits $\textbf{a}=a_1\cdots a_T$. Furthermore, since we start with a global pure state, and since at the end of step (i)   the reduced state of the main system is close to a pure state, namely $|\phi_1^\i\rangle$,  we find that the joint reduced state of ancillary qubits $\textbf{a}$ should also be close to a pure state,  denoted by $|\Psi(\textbf{k})\rangle_\textbf{a}$. 
Note that in addition to state $|\psi\rangle$,  state $|\Psi(\textbf{k})\rangle_\textbf{a}$  also depends on the sample set $S_\i$, and the random integers $\textbf{k}\equiv (k_1,\cdots,k_T)$. 
We conclude that at the end of step (i) with high probability the system and ancillary qubits $\textbf{a}$ are in the product state 
\beq\label{EqAnc}
\left[{W_{a_{T}}^\i}({k_{T}})\cdots  {W_{a_{1}}^\i}({k_{1}}) \right]|\psi\rangle |-\rangle^{\otimes T} \approx |\phi_1^\i\rangle |\Psi(\bf{k})\rangle_\textbf{a}  \ .
\eeq
Next, in step (ii) we perform a controlled reflection followed by a single-qubit measurement, whose effect on the system is equivalent to the measurement with projectors $P=|\phi_1^\i\rangle\langle \phi_1^\i|, P^\perp=I-P$.  For any initial state $|\psi\rangle\in\mathcal{H}_\i$,
 at this point we know that  with probability almost 1 the system should be in state $|\phi_1^\i\rangle$, in which case the measurement 
projects the ancillary qubit to state $|0\rangle$. On the other hand, if we project the qubit to state $|1\rangle$ it means the erasing has not been successful.

Finally, assuming $|\psi\rangle\in\mathcal{H}_\i$, we find that applying steps (iii) and (iv) maps the system  to state $U |\psi\rangle$. This can be seen by multiplying both sides of Eq.(\ref{EqAnc}) in unitary $U$,  and using  the facts that $U|\phi_1^\i\rangle=|\phi_1^\o\rangle$,  and 
\be
(U\otimes I_a)  W^\i_a(k) (U^\dag\otimes I_a)=W^\o_a(k)\ .
\ee
 This implies
\beq\label{enc}
\left[{W_{a_{1}}^\o}^\dag({k_{1}})\cdots  {W_{a_{T}}^\o}^\dag({k_{T}}) \right] |\phi_1^\o\rangle |\Psi(\textbf{k})\rangle_\textbf{a} \approx U|\psi\rangle |-\rangle^{\otimes T}\ .
\eeq
Eq.(\ref{enc}) means that, after step (ii) by preparing the system in the output sample state $|\phi_1^\o\rangle$, which is given to us, and running  all the operations in step (i) backward, with  unitaries $R_{a}^\i(k)$ replaced by $R_{a}^\o(k)$, we obtain the system in the desired  state $U|\psi\rangle$ and ancilla qubits go back to their initial state $|-\rangle^{\otimes T}$.
 
Using the fact that all unitaries $W_{a}(k)=R_{a}(k) H_a R_a(1)$ act trivially on the subspace orthogonal to $\mathcal{H}_\i$, it can be easily seen that for a general input $|\psi\rangle$, which is not contained in $\mathcal{H}_\i$, the algorithm first  performs a projective measurement that projects the input state into the subspace $\mathcal{H}_\i$, or the orthogonal subspace. Then,  if the system is found to be in $\mathcal{H}_\i$, which  corresponds to outcome $b=0$ in the measurement in step (ii), it applies the unitary $U$ to the component of state in this subspace.

In Appendix  \ref{App:1}  we prove the following quantitative version of the above result: Suppose we implement the quantum circuit presented in Fig. \ref{Fig}, using perfect controlled-reflections $R_a(k)$. Let $\mathcal{E}_{U}$ be the quantum channel describing  the overall effect of the circuit on the system, in the case where  we do not post-select based on the outcome of the measurement in step (ii), which means we do not care if erasing has been successful or not. Then, for an arbitrary input state  $\rho$ (not necessarily restricted to subspace $\mathcal{H}_\i$), the fidelity  of $\mathcal{E}_{U}(\rho)$ and the desired state $U\rho U^\dag$ satisfies
\beq\label{main_eq}
F(\mathcal{E}_{U}(\rho), U\rho U^\dag)\ge p_\text{erase}(\rho) = \langle\phi^\i_1|\mathcal{W}^T(\rho)|\phi^\i_1\rangle\ .
\eeq
Here,  $ p_\text{erase}(\rho)$ is the probability  that we have successfully erased the state of system (and pushed it into state $|\phi^\i_1\rangle$), which corresponds to outcome $b=0$ in the measurement, and fidelity $F(\sigma_1,\sigma_2)=\|\sqrt{\sigma_1}\sqrt{\sigma_2}\|_1$, where  $\|\cdot\|_1$ denotes the $l_1$ norm, i.e., the sum of singular values\footnote{Note that here we follow the definition of fidelity in \cite{nielsen2000quantum}. Some authors call this quantity "the square root of fidelity".} . (Indeed, we show a stronger result: The fidelity of the joint state of the main system and the ancilla qubits  with the ideal desired state $U\rho U^\dag \otimes |-\rangle\langle -|^{\otimes T}$ is lower bounded by $p_\text{erase}(\rho)$). On the other hand, if we postselect to the cases where the erasing has been successful, then for pure input state $\rho$, the fidelity between the output of the algorithm and the desired state $U\rho U^\dag$ is lower bounded by $\sqrt{p_\text{erase}(\rho)}$.

Eq.(\ref{main_eq}) best captures the working principle behind this algorithm, which can  be called \emph{emulating via coherent erasing}. 
Note that using this equation we can determine for which input states, the emulation works better: if we have the required resources to coherently erase state $\rho$ and bring the system to a pure state which we know how transforms under unitary $U$, then we can emulate the action of unitary $U$ on $\rho$. 

It is also worth mentioning that this algorithm is 
distortion free, in the following sense: For any unitary $U$, the overall channel $\mathcal{E}_U$ can be decomposed as 
\be
\mathcal{E}_U=\mathcal{U}\circ \mathcal{E}_I\ , 
\ee  
where $\mathcal{U}(\cdot)=U(\cdot)U^\dag$ is the ideal channel corresponding to the unitary $U$, and $\mathcal{E}_I$ is a fixed channel, independent of $U$ that only depends on the sample input states.

Interestingly, as we show in Appendix \ref{App:1}, Eq.(\ref{main_eq})   holds in a much more general setting: suppose we run the above algorithm with any other choice of  unitaries $W_a^\i(\lambda)$ and  $W^\o_a(\lambda)=(U\otimes I_a)  W^\i_a(\lambda) (U^\dag\otimes I_a)$ that couple the system to a qubit $a$, where $\lambda$ is chosen randomly from a finite set $\Lambda$, according to the probability distribution $p(\lambda)$. Then, Eq.(\ref{main_eq}) still holds for the channel 
\be
\mathcal{W}(\tau)=\sum_{\lambda\in\Lambda} p(\lambda)\ \Tr_a\big(W_a^\i(\lambda) [\tau\otimes |-\rangle\langle-|_a] {W_a^\i}^\dag(\lambda) \big)\ .
\ee  
As we explain in Appendix \ref{App:1}, this generalization is useful to extend  the algorithm to the case where the samples  contain mixed states.

%The crucial property of the particular unitaries $W_a(k)=R_{a}(k) H_{a} R_{a}(1)$  chosen above is that  for this particular choice  (1) we know how to efficiently simulate the controlled-reflections  $R_{a}(k)$   using the given copies of sample input-output states, and (2) the channel $\mathcal{W}$ has a unique  pure fixed state in $\mathcal{H}_\i$, which is one of the sample input states, namely state $|\phi^\i_1\rangle$. This guarantees that for large enough $T$, the probability of erasure, $p_\text{erase}(\rho) = \langle\phi^\i_1|\mathcal{W}^T(\rho)|\phi^\i_1\rangle$, and hence the fidelity of emulation, approaches one. 
\subsection{Runtime and error analysis}
%To study the runtime and error of the algorithm, we first study the idealized version of the algorithm, in which we run  the circuit in Fig (\ref{Fig}) using ideal controlled-reflections. Then, we take into account the imperfection in simulating the controlled-reflections using the given copies of sample states.

%It follows from the above discussion, and in particular Eq.(\ref{main_eq}),  that if we run the circuit in Fig (\ref{Fig}) using ideal controlled-reflections, then the runtime of the algorithm is mainly determined by the convergence rate of map $\mathcal{W}^T$ for initial states in $\mathcal{H}_\i$.  Using the definition $\mathcal{W}(\rho)=P\rho P + \mathcal{D}(P^\perp \rho P^\perp)$, this convergence can be related to   the mixing time of the random unitary channel $\mathcal{D}=\frac{1}{K} \sum^K_{k=1} e^{i\pi |\phi_k^\i\rangle\langle\phi_k^\i|} (\cdot) e^{i\pi |\phi_k^\i\rangle\langle\phi_k^\i|}$ for initial states in $\mathcal{H}_\i$.  

To analyze the runtime and error in the algorithm, first, we ignore the error due to imperfect implementation of controlled reflections via density matrix exponentiation and study the circuit in Fig. (\ref{Fig}) with ideal controlled reflections.  Under this assumption,  Eq.(\ref{main_eq}) determines the fidelity of the output of the circuit with the ideal output.  As  shown in Appendix \ref{App:Runtime}, for any state $\rho$ with support restricted to $\mathcal{H}_\i$, the right-hand side of this bound can be rewritten as 
\begin{align}\label{bou9}
F(\mathcal{E}_U(\rho),U\rho U^\dag) &\ge \langle\phi_1^\i|\mathcal{W}^T(\rho)|\phi_1^\i\rangle= 1-\Tr\left(\mathcal{D}_\perp^T(\rho)\right)
\ , 
\end{align} 
where 
\be\label{Dperp}
\mathcal{D}_\perp(\cdot)=\frac{1}{K-1} \sum_{k=2}^K A_k (\cdot) A_k\ , 
\ee
and
\be
 A_k=P^\i_\perp R^\i(k) P^\i_\perp=P^\i_\perp-2 P^\i_\perp |\phi^\i_k\rangle\langle\phi^\i_k| P^\i_\perp \ ,
 \ee
where $P^\i_\perp$ is the projector to $(d-1)$-dimensional subspace of $\mathcal{H}_\i$ orthogonal to $|\phi^\i_1\rangle$.  For sufficiently large $T$,  the right-hand side of Eq.(\ref{bou9}) is approximately $1-\mathcal{O}(\lambda_\perp^T)$,
 where  $\lambda_\perp$ is the eigenvalue of  $\mathcal{D}_\perp$ with the largest magnitude.   Since $\mathcal{D}_\perp$ is a completely-positive self-dual map,  $\lambda_\perp$  is positive. Furthermore, proposition  \ref{prop2} together with Eq.(\ref{bou9}) imply that $\lim_{T\rightarrow \infty} \mathcal{D}_\perp^T(\rho)=0$ for all $\rho$, which in turn implies $0\le \lambda_\perp<1$.  Indeed, using the vectorization of $\mathcal{D}_\perp$, we find
 \be\label{lambda}
 \lambda_\perp=\frac{1}{K-1}\Big\|  \sum_{k=2}^K A_k\otimes A_k^\ast \Big\|_\infty< 1 \ ,
 \ee
where $A_k^\ast$ denotes the complex conjugate of $A_k$.

Using  Eq.(\ref{bou9}) in Appendix \ref{App:Runtime} we show that to achieve  
 the trace distance $\frac{1}{2}\|\mathcal{E}_U(\rho)-U\rho U^\dag\|_1 \le \epsilon_\text{id}$, it suffices to choose  
\beq\label{bound42}
T\ge  \frac{\ln (2\epsilon^{-2}_\text{id}[d-1]) }{1-\lambda_\perp} \ .
\eeq

 It turns out that in the actual algorithm for the universal quantum emulator, the dominant source of error arises from the imperfect implementation of the controlled reflections, which are realized using the provided copies of sample states.  
  In Appendix \ref{App:error2} we show that for initial state with support restricted to $\mathcal{H}_\i$, the action of unitary $U$ can be implemented with error  $\epsilon>0$ in the trace distance, using 
\beq\label{Ntot}
N_\text{tot}=\widetilde{\mathcal{O}}\big( \frac{d^2\times \epsilon^{-1}}{(1-\lambda_\perp)^2} \big)\ 
\eeq
total copies of sample states, and with 
\be\label{Ttot}
t_\text{tot}=\widetilde{\mathcal{O}}(N_\text{tot}\times \log D)
\ee
 elementary gates, where $\widetilde{\mathcal{O}}$ suppresses more slowly-growing terms (namely, multiplicative terms in $\log  \epsilon^{-1}$ and $\log d$). Note that the only place where $D$, the dimension of the Hilbert space, shows up in this analysis is in the implementation of  the controlled-reflections using the given copies of sample states. 
Also, it is worth noting that, as one should expect from the discussion in the previous section, the runtime and sample complexity depend on the spectral properties of the channel $\mathcal{W}$, which determine the rate of convergence to its fixed point $|\phi_1^\infty\rangle$.  

%It turns out that this spectral gap has an interesting information-theoretic interpretation in terms of quantum Fisher information (See Appendix \ref{}). In particular, if this gap is small for the given samples, then there are families of unitaries which are hard to emulate with any algorithm, because the sample inputs have low sensitivity with respect to those unitaries.

%\subsection{Optimality}
%The algorithm presented in this paper efficiently emulates the action of an unknown unitary (or its inverse) that we do not have access to it. The algorithm does not assume any prior information about the unitary or the given samples. In other words, in this framework we do not choose the sample input states of the unitary. 

%We showed that, in contrast with the approaches based on tomography, whose runtime and sample complexity both scales (at least) linearly with $D$, the runtime  of this algorithm is only logarithmic in $D$, whilst its sample complexity $N_\text{tot}$ is independent of $D$. Note that in the approaches based on tomography, after finding the classical description of sample states, to find a unitary that transforms the sample input states to the the corresponding output states one needs to solve some linear equations with $\mathcal{O}(D)$ variables. 

%In contrast with the approaches  based on tomography, whose runtime and sample complexity both scales (at least) linearly  with $D$,  the runtime $t_\text{tot}$ of the above algorithm  is only logarithmic in $D$, whilst  its sample complexity $N_\text{tot}$ is independent of $D$. 

This analysis shows that  the gate and sample complexities of this algorithm should be close to optimal. In particular,  for general unitary $U$,  the state of each qubit in the desired output state $U|\psi\rangle$ depends non-trivially on the state of all other qubits in state $|\psi\rangle$. This means 
the minimum number of elementary gates needed to implement the algorithm is lower bounded by $\Omega(\log D)$. Furthermore, it can be easily seen that the lowest achievable runtime scales, at least, linearly with $d$, the dimension of the input subspace. (Indeed, just to check if the given state is inside the input subspace spanned by the sample states, one needs to interact with at least $d$ different sample states, which requires order $d$ gates.)  We conclude that the complexity of the proposed algorithm cannot be improved drastically.  
The following example shows the significance of the gap $1-\lambda_\perp$.

 %This means that the runtime of the algorithm is bounded by the minimum time required for information about one qubit in the system to propagate to all other qubits. Since the quantum circuit is formed from local unitary gates each acting on few qubits, this time  grows with $\log D$, the number of qubits. For instance, in the case where qubits are all on a line and  interact only with their nearest neighbors, this time is of the order $\log D$. 

\subsection{Example:  Input subspace with dimension $d=2$}
%1-2 |\langle\phi^\i_1|\phi^\i_2\rangle|^2 (1-|\langle\phi^\i_1|\phi^\i_2\rangle|^2)

For a system with the Hilbert space with arbitrary dimension $D$, consider an input subspace $\mathcal{H}_\i$ with dimension $d=2$ and two sample input states $|\phi_1^\i\rangle$ and $|\phi_2^\i\rangle$ in this subspace. Note that in this simple case the algorithm does not involve any
randomization: After controlled-reflection with respect
to $|\phi_1^\i\rangle$ we always apply the Hadamard on the ancillary
qubit, and then apply controlled-reflection with respect to state $|\phi_2^\i\rangle$. 

Then, using Eq.(\ref{lambda})  one can easily see that $\lambda_\perp=\cos^2 \theta $, 
where $|\langle\phi^\i_1|\phi^\i_2\rangle|=\cos({\theta}/{2})$, which means $\theta$ is the angle between  the Bloch vectors associated to $|\phi^\i_1\rangle$ and $|\phi^\i_2\rangle$.  Without loss of generality assume $|\phi_1^\i\rangle=|0\rangle$,
and define $|1\rangle$ to be the normalized vector in $\mathcal{H}_\i$ orthogonal to $|0\rangle$.  Then, the definition in Eq.(\ref{Dperp}) means that  
$$\mathcal{D}_\perp(\rho)= (\cos\theta)^2\  |1\rangle\langle 1|\rho |1\rangle\langle 1|\ .
$$ 
Together with Eq.(\ref{bou9}), this implies that for any input state $\rho$ restricted to this subspace
\be
F(\mathcal{E}_U(\rho),U\rho U^\dag) \ge 1-(\cos\theta)^{2T} \ .
\ee
For $\theta=0$ and $\theta=\pi$, we get $\lambda_\perp=(\cos\theta)^2=1$, which means 
the convergence does not happen. These two cases correspond to the identical and orthogonal sample states, respectively. Indeed, these are exactly the cases where the condition in Eq.(\ref{alg}) does not hold, and naturally the algorithm fails. On the other hand, for $\theta=\pi/2$ the perfect fidelity 1 can be  achieved with $T=1$. This corresponds to the case where the sample input states are coming from a pair of mutually unbiased bases.

To better appreciate the significance of the gap's value, $1-\lambda_\perp=(\sin\theta)^2$, let us suppose the goal is to emulate the unitary $\exp(i\alpha P)$ for an unknown $\alpha\in [0,2\pi)$, where $P=|\phi^\i_1\rangle\langle\phi^\i_1|$. Then, the given copies of sample output states $|\phi_k^\o\rangle: k=1,2$, should provide information about the unknown parameter $\alpha$ needed to implement this unitary. We can quantify this information using the Quantum Fisher Information \cite{Helstrom:book, Holevo:book,   braunstein1994statistical} of the output states with respect to the parameter $\alpha$.
 Under this family of unitaries, the input state $|\phi^\i_1\rangle$ remains invariant, up to a global phase, which means Quantum Fisher Information of the output state $|\phi^\o_1\rangle=e^{i\alpha}|\phi^\i_1\rangle$ is zero.  For the family of output state $|\phi^\o_2\rangle= \exp(i\alpha P)|\phi^\i_2\rangle$ Quantum Fisher Information  is determined by  the variance of the observable $P$ for state, as   
\be
I_\alpha=4\times \Big[\langle\phi^\i_2|P^2|\phi^\i_2\rangle-\langle\phi^\i_2|P|\phi^\i_2\rangle^2\Big]=1-\lambda_\perp\ .
\ee
Therefore, a smaller gap $1-\lambda_\perp$ implies that each copy of sample output states carries a lesser amount of Quantum Fisher Information about the unknown parameter $\alpha$. This, in turn, means that a larger number of copies of the sample states are needed to implement the unitary $\exp(i\alpha P)$ with a fixed accuracy. 

\begin{comment}
. Furthermore, assuming $\rho$ is restricted to this 2D subspace, we have 
\be
\mathcal{W}(\rho)=|0\rangle\langle 0|\rho|0\rangle\langle 0|+ R(2)^\i|1\rangle\langle 1|\rho|1\rangle\langle 1|R(2)^\i\ ,
\ee
where $R(2)^\i=I-2|\phi^\i_2\rangle\langle\phi^\i_2|$. 
\end{comment}

\subsection{Approximate transformations}

In the above discussions we always assumed that the unitary relation $U|\phi_k^\i\rangle=|\phi_k^\o\rangle$
 between the sample input and output states holds exactly. Now suppose this relation holds only approximately.  In particular, assume instead of these output states we are given output states $|\widetilde{\phi}_k^\o\rangle: l=1,\cdots, K$, satisfying 
\be\label{bound1}
|\langle\widetilde{\phi}^\o_k| U|\phi_k^\i\rangle|^2\ge 1-\delta^2  \ \ \ \ \ : k=1,\cdots, K\ ,
\ee 
for $\delta \ge 0$ and some unitary $U$. 
 Then,  if one uses these sample input-output pairs,  the circuit implements an approximate version of the unitary $U$. As before, let $\mathcal{E}_U$ be the channel  describing the output of the circuit in Fig. \ref{Fig}  when we run the circuit with the ideal controlled-reflections with respect to the sample states  
$\{|\phi^\i_k\rangle\}$ and  $\{|\phi_k^\o\rangle=U|\phi_k^\i\rangle\}$, and let $\widetilde{\mathcal{E}}$ be the modified  channel 
realized by the circuit that uses controlled-reflections with respect to states   
$\{|\widetilde{\phi}_k^\o\rangle\}$ instead of $\{|\phi_k^\o\rangle\}$. Then, as we show in Appendix \ref{App:approx}, for any input state $\rho$ the trace distance between the outputs of the two channels  is bounded by
\be\label{bound4}
\frac{1}{2}\left\|\mathcal{E}_U(\rho)-\widetilde{\mathcal{E}}(\rho)\right\|_1\le T\times 4\delta \ .
\ee
Therefore, assuming $T=\text{poly}(d)$, the algorithm implements unitary $U$ with the additional error $\delta\times \text{poly}(d)$.

\subsection{State compression algorithm}\label{Sec:comp}

In the above analysis, we did not need to explicitly determine how the state of ancilla qubits evolve during the algorithm.  Rather, we  argued that at the end of step (i)  (almost) all the information about the input state $|\psi\rangle$ is encoded in the  ancillary qubits. In the following, we determine the explicit form of this encoded state. This also  clarifies an interesting application  of this algorithm for state compression.

Let $|t\rangle_\textbf{a}=|1\rangle^{\otimes t}\otimes  |0\rangle^{\otimes (T-t)}$ be the state of qubits $a_1\cdots a_T$, in which $a_{t+1} \cdots a_{T}$ are all in state $|0\rangle$, and the rest of qubits are in state $|1\rangle$. Then, it can be easily seen that at the end of step (i) the joint state of the system and ancilla qubits $\textbf{a}=a_1\cdots a_T$ is given by
\beq\label{Eq_expan}
|\phi^\i_1\rangle\otimes \big(\sum_{t=0}^{T-1}  \langle\phi^\i_1|\psi(t,\textbf{k})\rangle \  |t\rangle_\textbf{a}\big) +|\psi(T,\textbf{k})\rangle\otimes |T\rangle_\textbf{a}  \ ,
\eeq
where the (unnormalized) vectors $|\psi(t,\textbf{k})\rangle$ are defined via the recursive relation 
\beq\label{rec2}
|\psi(t+1,\textbf{k})\rangle=R^\i(k_{t+1}) P^\perp |\psi(t,\textbf{k})\rangle \ ,
\eeq
and $|\psi(0,\textbf{k})\rangle=|\psi\rangle$. 
 The argument in the previous section implies that for initial state $|\psi\rangle\in\mathcal{H}_\i$, the typical  norm of $|\psi(t,\textbf{k})\rangle$ is exponentially small in $t$, and hence   for sufficiently large $T$ the last term in Eq.(\ref{Eq_expan}) is negligible.

  Note that this expansion only involves a $(T+1)$-dimensional subspace of the $2^T$-dimensional Hilbert space of the ancillary qubits.  That is, the information about state $|\psi\rangle$ is encoded in ancilla qubits using a unary coding.   Indeed, rather than $T$ qubits used in the circuit Fig.\ref{Fig}, the same algorithm can be algorithm can be realized using only $\lceil \log (T+1)\rceil$ ancillary qubits (See Appendix \ref{App:Sec:comp}). This modified version of the algorithm can be useful as a state compression algorithm, which is of independent interest:  Suppose two distant parties are given the same sample states in an unknown $d$-dimensional subspace of a Hilbert space of dimension $D\gg d$. Then, to transfer an unknown given state in this subspace, one party can use step (i) of this algorithm to compress the given state in $t^\ast=\text{poly}(\log(d))$ ancillary qubits, and send it to the other party. The receiver then can recover the original state, with error exponentially small in $t^\ast$

\begin{figure}[t]
  \includegraphics[scale=.35]{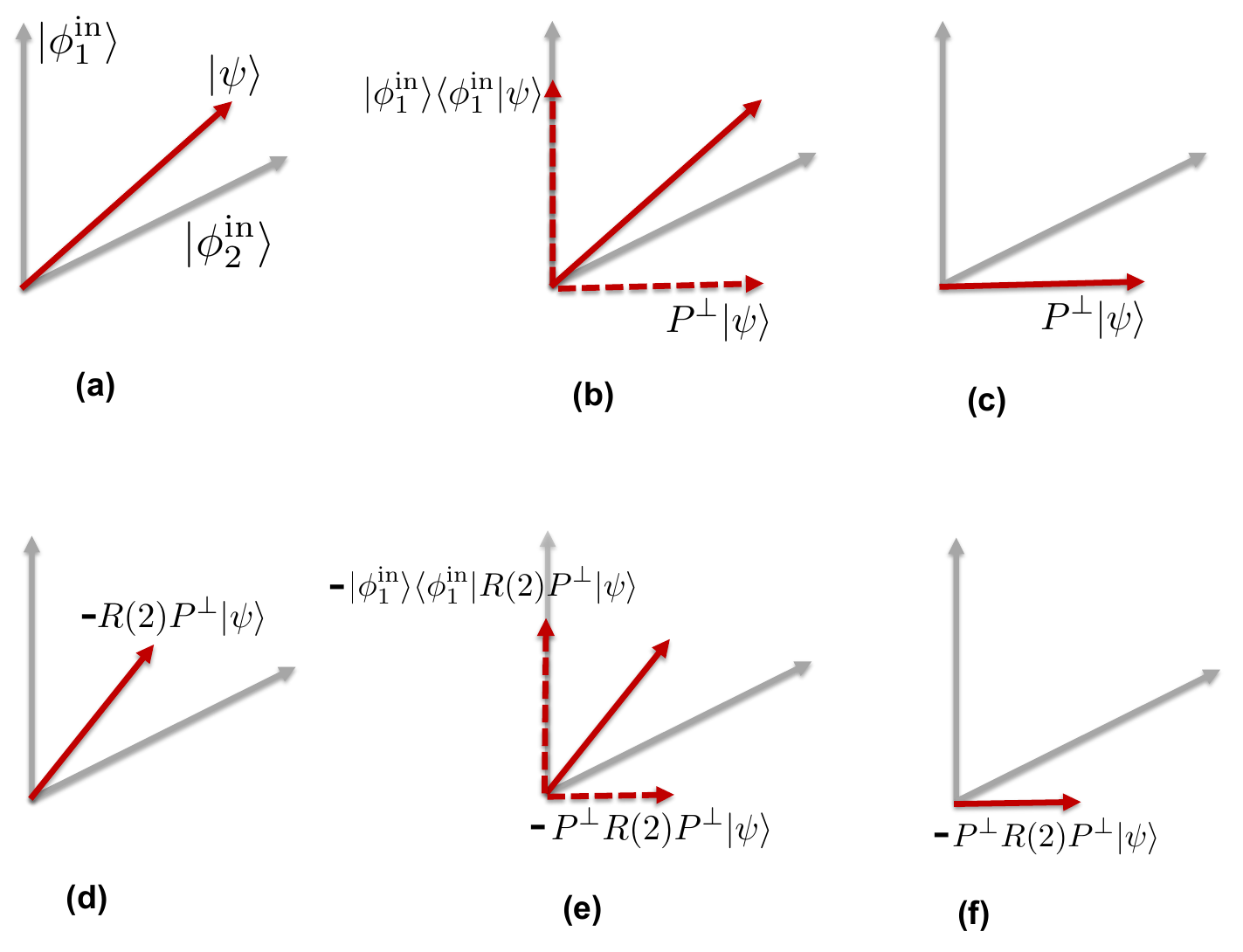}
  \caption{\textbf{A  geometric interpretation of the algorithm:}  As it can be seen from Eqs.(\ref{Eq_expan},\ref{rec2}), step (i) of the circuit in Fig.(\ref{Fig}) finds the coordinates of the input state $|\psi\rangle$ with respect to the sample input states $\{|\phi^\i_k\rangle\}$, and stores this information in the ancilla qubits.  
   In this Figure, we  
 present a geometric interpretation of step (i), for the case of a subspace with dimension $d=2$.    ($\textbf{a}$) Let $|\psi\rangle$ be an arbitrary state in a subspace with dimension $d=2$ spanned by $|\phi^\i_1\rangle$ and $|\phi^\i_2\rangle$. ($\textbf{b}$) Applying the first controlled-reflection with respect to $|\phi^\i_1\rangle$, the circuit finds $\langle\phi^\i_1|\psi\rangle$ and stores it as the coefficient of $|t=1\rangle_{\textbf{a}}$ (See Eq.(\ref{Eq_expan})). (\textbf{c}) Next, we need to determine the coordinates of the  residual vector $P^\perp|\psi\rangle$, the component of $|\psi\rangle$ orthogonal to $|\phi^\i_1\rangle$. ($\textbf{d}$) To achieve this, the algorithm first reflects the residual vector with respect to $|\phi^\i_2\rangle$, resulting in the vector $ R(2)P^\perp|\psi\rangle$.  Note that in the figure we have depicted $-R(2)P^\perp|\psi\rangle$.  
  ($\textbf{e}$) The assumption that $|\phi^\i_2\rangle$ is is not orthogonal  to $|\phi^\i_1\rangle$ now guarantees that this vector has again a non-zero component in the direction of $|\phi^\i_1\rangle$. Then,  again the algorithm finds their inner product and stores it in the state of ancillary qubits, as the coefficient of $|t=2\rangle_{\textbf{a}}$. ($\textbf{f}$) Next, the algorithm needs to determine the coordinates of the residual vector $ P^\perp R(2)P^\perp|\psi\rangle$. As we repeat this process the residual vector becomes exponentially smaller. Therefore, after few rounds of repetition, by ignoring the residual vector, we can still determine the original state $|\psi\rangle$ from the state of ancillary qubits, with a good approximation.}
  \label{Fig_Geometry}
\end{figure}

%The above discussion clarifies that at the end of step (i) the joint state of ancillary qubits $a_1,\cdots, a_T$ is always restricted to a subspace with dimension $T+1$,   spanned by states $|t\rangle_\textbf{a}: t=0,\cdots, T$, which corresponds to a unary encoding of $t$. In Appendix \ref{}, we present a modified version of this circuit, which employs a binary (rather than unary) encoding of information in the ancillary qubits, and implements the same transformation, using only $\mathcal{O}(\log(T))=\text{poly}(\log(d))$ ancillary qubits.

\subsection{Coordinates of the input state relative to the samples}

%Here, we briefly discuss an interpretation of this algorithm in terms of reference frames, as well as an application of step (i) of 
%the algorithm.

 %Namely,  we argue that step (i) of the algorithm finds the \emph{coordinates} of the input state $|\psi\rangle$  relative to the frame defined by the input samples $|\phi^\i_{k_1}\rangle, \cdots , |\phi^\i_{k_T}\rangle$. Then, step (iv) reconstructs  the state with exactly the same coordinates relative to the frame defined by the output samples $|\phi^\o_{k_1}\rangle, \cdots , |\phi^\o_{k_T}\rangle$.  

  The expansion found in Eq.(\ref{Eq_expan}) and Eq.(\ref{rec2}) indeed describes a general recursive method for specifying any vector $|\psi\rangle\in\mathcal{H}_\i$ in terms of the scalars 
  $$ \langle\phi^\i_1|\psi(t,\textbf{k})\rangle\ \ \ : t=0,\cdots , T-1\ ,$$
   which only depend on the relation between $|\psi\rangle$ and  states $|\phi^\i_{k_1}\rangle, \cdots , |\phi^\i_{k_T}\rangle$ (That is, they remain invariant under unitary transformations applied  on $|\psi\rangle$ and states $|\phi^\i_{k_1}\rangle, \cdots , |\phi^\i_{k_T}\rangle$). We can interpret these scalars as the \emph{coordinates} of vector $|\psi\rangle$ relative to the frame defined by states $\{|\phi^\i_{k}\rangle\}$.  Fig.\ref{Fig_Geometry} explains how this coordinate system works in the case of a 2-dimensional subspace.

Therefore, in this language, step (i) of the algorithm  is a circuit for finding the coordinates of the given state $|\psi\rangle$ relative to the input  samples.  Then, step (iv) reconstructs  the state with exactly the same coordinates relative to the frame defined by the output samples.  Note that because of the no-cloning theorem \cite{wootters1982single}, in order  to find the coordinates  of a quantum state and encode it in the ancillary qubits, we need to erase the state of system.  This method can be useful for other applications, where   instead of implementing operations on the system directly, we first find the coordinates of state of system with respect to other quantum states, implement an operation on the coordinates, and then transform the state back to the physical space.  
 
From this point of view, the given copies of each sample state can be interpreted as a Quantum Reference Frame (QRF) for \emph{directions} in the Hilbert space. A QRF usually refers to a reference frame for a physical degree of freedom, such as position or time, which is treated quantum mechanically \cite{QRF_BRS_07, MS11, Modes, gour2008resource, marvian2008building}. In contrast, here we are using the concept of QRF in a more abstract sense  
 (In particular, while for standard QRF's 
the relevant symmetry group is usually the groups such as $\text{SO(3)}$, $\text{U(1)}$ or $\mathbb{Z}_N$ \cite{QRF_BRS_07, MS11, Modes, gour2008resource, marvian2008building},  in this case the relevant symmetry group is   $\text{SU(D)}$).

\section{OTHER VARIATIONS OF THE ALGORITHM}\label{Sec:Disc}

As we explained at the end of Sec.\ref{Sec:alg}, the proposed algorithm  for the universal quantum emulator works based on a simple and general principle, namely emulating via coherent erasing. Hence, the algorithm can be generalized in several different ways based on this overarching principle.  Some of the possible generalizations are presented in Appendix \ref{App:Gen}, and are briefly discussed below.  \\

%\subsection{Mixed Sample States}

\noindent\textbf{Mixed Sample States:} We show that the algorithm can be generalized to  the case where the sample input-output sets contain mixed states. More precisely, as long as the sample input set, and hence the sample output set,  contains (at least) one state close to a pure state, we can still coherently erase the input state of the system and push it into this  pure state.   Then, we can emulate the action of the unknown unitary, using the same approach we used in the original algorithm. The main difference with the original version of the algorithm is that, instead of the controlled-reflections, in the case of mixed states we implement \emph{controlled-translations}  with respect to the sample states, i.e.,  the unitaries $|0\rangle\langle 0|\otimes I+|1\rangle\langle 1|\otimes e^{-i t \sigma_k}$, for a random value of $t$ and a sample state $\sigma_k$.\\

%\subsection{Prior knowledge about samples}

 \noindent\textbf{Prior knowledge of samples:}    In Appendix \ref{App:Gen}, we also present a more efficient algorithm which works under certain extra assumptions about the  sample input states. Namely, this version of the algorithm assumes the input samples are states in an (unknown)  orthonormal basis for the input subspace, plus one or more states in the Fourier conjugate basis. The working principle behind this algorithm is again emulating via coherent erasing. In this algorithm the state of system is erased via a coherent measurement in the orthonormal basis followed by another coherent measurement in the conjugate basis. It is also worth noting that if some input or output sample states are known, then, rather than  using sample states and the density matrix exponentiation method, one can directly implement the corresponding controlled-reflections.\\

%\indent 3) We show the algorithm can be generalized to emulate the controlled version of unitary $U$, i.e.,  to implement the unitary $|0\rangle\langle 0|\otimes I+|1\rangle\langle 1|\otimes U$. This generalization is crucial for some applications such as quantum phase estimation.   
%\\
%\indent 4) We show that if the sample input states can only  approximately be transformed to the sample output states via a unitary transformation, with an error bounded by $\delta$, then the proposed algorithm emulates this unitary  with error $\text{poly}(d)\times \delta$. \\
%\\

%\subsection{Emulating Projective Measurements}
%We can use the algorithm presented in this paper to emulate projective measurements, using  copies of sample states where each sample comes with a (classical) label that specifies the outcome of the measurement  for this state.

 \noindent\textbf{An algorithm for projective measurements:}  Suppose in the algorithm presented in Fig.(\ref{Fig})  we use the same set of states as both the input and the output samples. In this case, the algorithm basically performs the two-outcome projective measurement that projects the given state into the subspace spanned by the sample states, or its complement.  Any arbitrary projective measurement  can be implemented as a sequence of these two-outcome projective measurements. 
 
This approach, however, only works if the set of sample states in each subspace generates the full matrix algebra in that subspace, as stated in Eq.(\ref{comm}).  
   But in this context, requiring this assumption is unjustifiable: To specify a projective measurement one only needs to specify the subspaces   corresponding to different outcomes, and this can be achieved by sets of states that do not necessarily satisfy this constraint.

   To address this issue, in Appendix \ref{App:meas} we present a different efficient algorithm for performing projective measurements, which does not require this extra assumption. This algorithm also uses random controlled-reflections about the given sample states.

\section{Future Directions}
In the following, we outline potential future directions and applications of Universal Quantum Emulator (UQE).   \\

\begin{figure}[t]
  \includegraphics[scale=.4]{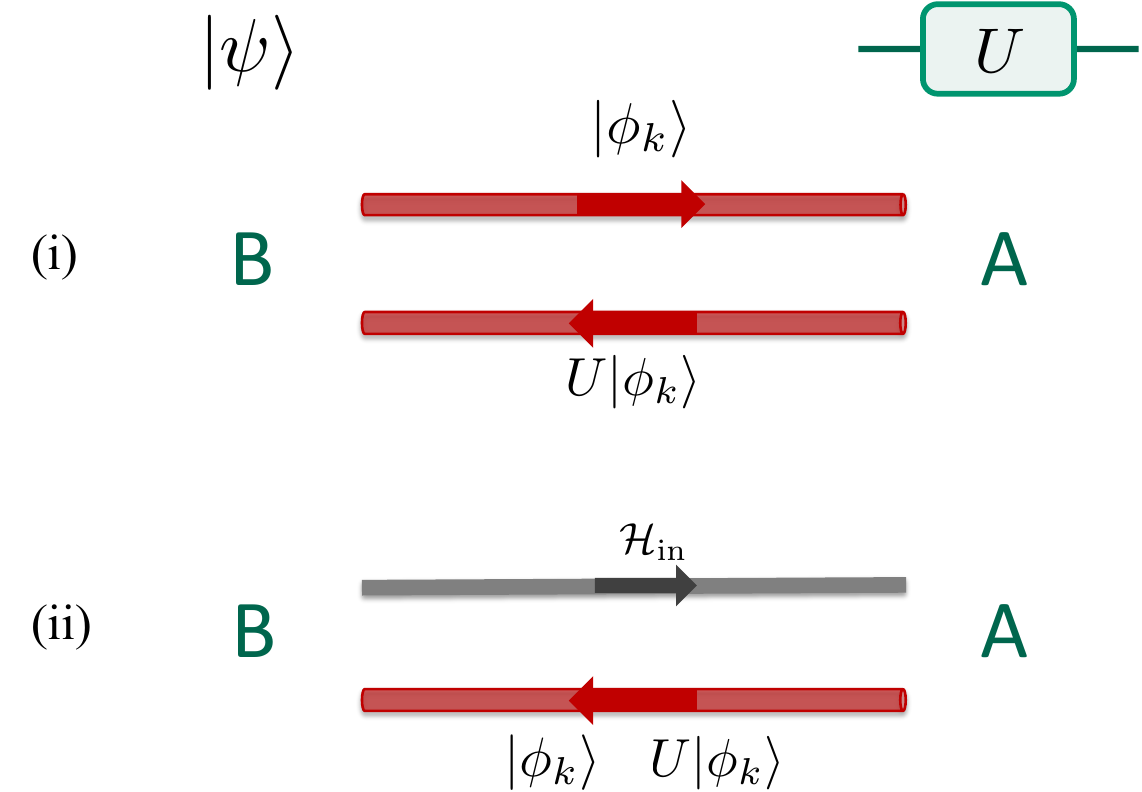}
  \caption{\textbf{Secure Quantum Computation:}  Bob has state $|\psi\rangle$, Alice has the unitary $U$ and the
  goal is to transform Bob's system to state $U|\psi\rangle$.  Alice and Bob want to reveal minimum possible information about their state and unitary to the other party. Here, we describe two slightly different protocols based on Universal Quantum Emulator: $\textbf{(i)}$ Bob sends random states $\{|\phi_k\rangle\}$ to Alice in a random order.  Alice applies unitary $U$ on them and send them back to Bob. Bob uses these sample output states to run the UQE algorithm on state $|\psi\rangle$. $\textbf{(ii)}$  Bob chooses a random subspace $\mathcal{H}_\i$ which contains state $|\psi\rangle$ and sends a classical description of this subspace to Alice.  Alice chooses a larger random subspace  $\widetilde{\mathcal{H}}_\i$ such that ${\mathcal{H}}_\i  \subset \widetilde{\mathcal{H}}_\i$. Then, she chooses random states $|\phi_k\rangle\in{\mathcal{H}}_\i $ apply unitary $U$ on them and send copies of states $\{|\phi_k\rangle\} $ and 
  $\{U|\phi_k\rangle\} $ to Bob. 
   Bob uses these samples to run the UQE algorithm on state $|\psi\rangle$.}
  \label{FigSecure}
\end{figure}

\subsection{Secure Quantum Computation}
A potential area of application of UQE is in the context of secure quantum computation.  
Suppose a client (Bob) has state $|\psi\rangle$ and a server (Alice) has the unitary $U$ and the goal is to transform Bob’s system to state $U|\psi\rangle$.  However, Alice does not want to reveal any information about unitary $U$ to Bob and Bob   
does not want to reveal any information about state  $|\psi\rangle$  to Alice. In the following, we briefly describe two  slightly different protocols which employ the UQE algorithm to achieve these goals (See also Fig.\ref{FigSecure}). \\
 
\noindent\textbf{Protocol I:} (1) Bob picks a random subspace $\mathcal{H}_\i$ that contains state $|\psi\rangle$. (2) He chooses $K$ random states $\{|\phi^\i_k\rangle\ : k=1,\cdots, K\}$, with $K$ larger than or equal to the dimension of subspace $\mathcal{H}_\i$. Then, he generates $n$ copies of each state $|\phi^\i_k\rangle$ and sends all $n\times K$ states to Alice in a random order. (3) Alice runs the unitary $U$ on all these states and send them back to Bob. (4) Bob uses these
states as the sample output states to run the UQE algorithm on the input state $|\psi\rangle$. Note that since he knows
what are the sample input states $\{|\phi^\i_k\rangle\ : k=1,\cdots, K\}$ 
in the first step of the algorithm he can implement the controlled-reflections with respect to these states directly without using density matrix exponentiation.

In this protocol, there is a trade-off between
the amount of information that leaks to the server Alice about state $|\psi\rangle$, on one hand, and the communication and complexity cost of running the scheme, on the other hand: If Bob chooses a
low-dimensional subspace $\mathcal{H}_\i$, then the server obtains a
considerable amount of information about the unknown
state $|\psi\rangle$. In other words, she can create states with a large overlap with $|\psi\rangle$ (for example, by mixing the given samples). On the other hand, choosing a larger subspace requires the client and server to exchange a greater number of systems  and run a longer circuit. \\

\noindent\textbf{Protocol II}: (1) Bob picks a random subspace $\mathcal{H}_\i$ which
contains state $|\psi\rangle$, and sends the classical description of
this subspace to Alice. (2) Alice randomly chooses a
subspace $\widetilde{\mathcal{H}}_\i$ which includes $\mathcal{H}_\i$, i.e., $\mathcal{H}_\i\subset \widetilde{\mathcal{H}}_\i$  but
is sufficiently 
larger than $\mathcal{H}_\i$. She chooses $K$ random
sample input states $\{|\phi^\i_k\rangle\ : k=1,\cdots, K\}$,  where $K$ is 
larger than or equal to the dimension of $\widetilde{\mathcal{H}}_\i$. Then, she generates $2n$ copies of each state, apply $U$ on $n$ copies and sends all $2n\times K$ states to Bob. (3) Bob uses these states to run the UQE algorithm on the input state $|\psi\rangle$. 

Again, note that there is a trade-off  between the leaked information about state $|\psi\rangle$ and the unitary $U$, on one hand, and the required amount of quantum communication and the complexity of the protocol, on the other hand. Understanding the efficiency and security of these protocols is an interesting open question and  requires further investigation.\\

\subsection{Quantum Resource Theories}
Another natural area of application for UQE and its variants discussed in  Appendix \ref{App:Gen}, is in the context of quantum resource theories \cite{chitambar2019quantum}. Any quantum resource theory  is defined by a restriction on the set of operations that can be realized with negligible cost, called the \emph{free} operations of the resource theory. Consequently, there is often an interest in determining the cost of implementing a general non-free unitary transformation in terms of (i) quantum states containing the resource under consideration, and (ii) other possible available non-free operations. 

For example, in the resource theory of thermodynamics  \cite{FundLimitsNature, brandao2013resource,  lostaglio2015quantumPRX}  it is usually assumed that the set of free unitaries are energy-conserving unitaries, i.e., those that conserve the total intrinsic Hamiltonians of the systems. Similarly, in the resource theory of asymmetry \cite{gour2008resource, MS11} and the related area of quantum reference frames \cite{QRF_BRS_07, marvian2008building},  the free unitaries are those that are invariant under the representation of the symmetry under consideration.

The universal quantum emulator can serve as a universal protocol in quantum resource theories for implementing non-free unitary transformations.  This protocol requires consuming resource quantum states, and realizing  unitaries that are either free or non-free but accessible,  depending on the resource theory.

For such applications, in the following, we provide a comprehensive list of all the operations and states necessary for implementing this algorithm.  
It is also worth noting that while the algorithm in Fig.\ref{Fig} has the advantage that it can be implemented without knowledge of the sample input/output states, if one does have such information, then it can be implemented more efficiently using the circuits discussed in Appendix \ref{App:Gen}.

\begin{enumerate}

\item \textbf{Required Operations:} Implementing the protocol requires  the following unitary transformations:
\begin{enumerate} 
\item Controlled-Controlled-SWAP gates, where both control qubits are ancillary qubits.   
\item Single-qubit gates on the ancillary qubits.
 \item One single-qubit measurement on an ancilla qubit, to determine if the given input state is in the input subspace $\mathcal{H}_\i$ or not.  
 \end{enumerate}
 \item \textbf{Consumed Resource States:}  The protocol fully consumes 
\begin{enumerate}
\item One copy of one sample output state $|\phi_1^\o\rangle$.
\end{enumerate}
\item\textbf{Approximate Catalysts:} Unlike the above sample output state, which is fully consumed by the protocol, most sample states and ancilla qubits needed to implement the protocol will be returned to their initial state with a small error, which vanishes in the asymptotic regime. Therefore, to highlight this distinction, we refer to them as \emph{approximate catalysts}. However, it is important to note that they are not true catalysts because their final state is not the same as their initial state.
\vspace{2mm}
\begin{enumerate} 
 \item Ancilla qubits which can be all initialized in a fixed state, say $|0\rangle$.
 \item Sample input-output states.
\end{enumerate}

\item  \textbf{Returned Resource State:} In addition to implementing the desired unitary, the protocol also generates 

\begin{enumerate}
\item One copy of the sample input state $|\phi_1^\i\rangle$.
\end{enumerate}

\end{enumerate}

Interestingly, all the required unitaries are free in the resource theories of quantum thermodynamics and asymmetry under the assumption that the ancilla qubits have a trivial Hamiltonian and a trivial representation of the symmetry, respectively. Furthermore, under this assumption,  in the resource theory of asymmetry, ancilla qubits in arbitrary state are also free.

\subsection{Conclusion}

In this paper, we presented an efficient algorithm for emulating unknown unitary transformations and projective measurements. The algorithm employs a novel randomized technique based on a simple principle, which we termed \emph{emulating via coherent erasing}.   We discussed some potential applications of this algorithm in the contexts of quantum resource theories and secure computation. We also showed that even when the relationship between sample input-output pairs can only be approximated by a unitary transformation, the algorithm implements a transformation that is close to this unitary.  This capability could be useful for applications in quantum machine learning \cite{biamonte2017quantum}  and, for instance, in addressing the unitary Procrustes problem \cite{lloyd2020quantum}—finding a unitary transformation that approximates the conversion of input data to a set of output data.

\section{Acknowledgments}
We would like to thank Alireza Seif for helpful discussion and for bringing Ref. \cite{pichler2016measurement} to our attention.  This work was supported by grants AFOSR No 6929347 and ARO MURI No 6924605 . 
 \bibliography{Ref_v11, Ref_2021_v3}

\newpage

\onecolumngrid
%\appendix

\newpage
\appendix

\maketitle
\vspace{-5in}
\begin{center}

\Large{\textbf{Supplementary Material}} \end{center}
\begin{comment}

\center{ \textbf{Contents}}

\vspace{5mm}

\begin{itemize}

\item Appendix A: Fidelity of emulation for the generalized algorithm

\item Runtime and error analysis

\item Quantum circuit for exponentiating a density operator

\item Modified Circuit with Exponentially Less Ancillary Qubits

\item Physical Interpretation of Density Matrix Exponentiation

\item Approximate transformations 

\item Generalizations

\item Emulating projective measurements

\end{itemize}

\newpage
\end{comment}
\section{Fidelity of emulation for the generalized algorithm}  \label{App:1}

In this section, we present a generalized version of the algorithm discussed in the paper. We also prove Eq.(\ref{main_eq}) for this generalized algorithm.

\subsection{Preliminaries}
Let $W_a^\i(\lambda)$ be an arbitrary unitary acting on the system and ancilla qubit $a$, where the parameter $\lambda$ is an element of a finite set $\Lambda$. Let   
\beq
W_a^\o(\lambda)=(U\otimes I_a) W_a^\i(\lambda) (U\otimes I_a)\ .
\eeq
Consider states $|\phi^\i\rangle$, and   $|\phi^\o\rangle=U|\phi^\i\rangle$.  Let $R(\phi^\i)\equiv e^{i\pi |\phi^\i\rangle\langle \phi^\i|}$ be reflection about the state $|\phi^\i\rangle$, and 
\beq
R_c(\phi^\i)=|0\rangle\langle 0|_c\otimes I+|1\rangle\langle 1|_c\otimes e^{i\pi |\phi^\i\rangle\langle \phi^\i|}
\eeq
be the controlled-reflection  about state $|\phi^\i\rangle$, acting on the system and ancilla qubit $c$.

\subsection{The generalized algorithm}
Here we list the  steps of the algorithm. The quantum circuit for this algorithm is presented in Fig.(\ref{Fig2}). \\ 

\begin{itemize}

%\item \indent\textbf{(i)} %First,  randomly choose states  in the input set  $\{|\phi^\i_k\rangle: k=1,\cdots,K\}$, and apply controlled reflections relative to these states: 
%Let $|\phi^\i_{\lambda_1}\rangle,\cdots , |\phi^\i_{\lambda_T}\rangle $ be $T$ states each chosen from the input set $\{|\phi^\i_k\rangle: k=1,\cdots,K\}$ uniformly at random, where $T$ is a constant of $\mathcal{O}(d^2)$. 
\item [\textbf{(i)}] Let $\lambda_1,\cdots , \lambda_T$ be $T$   random elements of a finite set $\Lambda$, chosen independently according to the distribution $p(\lambda)$. 
Consider $T$  ancillary qubits labeled by $a_1,\cdots , a_T$, all initialized in the state $|-\rangle$. Apply the unitary $W_{a_1}^\i(\lambda_1)$ on the system and the ancillary qubit  $a_1$,  unitary $W_{a_2}^\i(\lambda_2)$ on the system and $a_2$, and so on, until the last unitary $W_{a_T}^\i(\lambda_T)$, which is applied to the system and  $a_T$.  

%\indent\textbf{(ii)} 

\item [\textbf{(ii)}] Apply the controlled-reflection $R_c(\phi^\i)$ on the system and qubit $c$,  initially prepared in state $|-\rangle$. Then, apply a Hadamard gate to the qubit $c$ and measure it in the computational basis.

%\noindent\textbf{(iii)} 

\item [\textbf{(iii)}] Swap a copy of state $|\phi^\o\rangle$ with the state of system, which prepares the system in state $|\phi^\o\rangle$. 

%\noindent\textbf{(iv)} 

\item [\textbf{(iv)}] 
Recall  $\lambda_1,\cdots , \lambda_T$ chosen in step (i) and apply the sequence of unitaries ${W_{a_T}^\o}^\dag({\lambda_T}),\cdots , {W_{a_1}^\o}^\dag({\lambda_1})$ on the system and the $T$ ancillary qubits     $a_T \cdots a_1 $, in the following order: First apply $ {W_{a_T}^\o}^\dag({\lambda_T})$ to the system and  qubit $a_T$, then apply   $ {W_{a_{T-1}}^\o}^\dag({\lambda_{T-1}})$ to the system and $a_{T-1}$, and so on, until the last unitary  $ {W_{a_1}^\o}^\dag({\lambda_1})$ which is applied on the system and qubit $a_{1}$.

\end{itemize}

\subsection{Fidelity of emulation (Proof of Eq.(\ref{main_eq}))}
Recall the definition of (Uhlmann) Fidelity, between two density operator $\sigma_1$  and $\sigma_2$,
\beq
F(\sigma_1,\sigma_2)=\|\sqrt{\sigma_1} \sqrt{\sigma_2}\|_1=\Tr\left(\sqrt{\sqrt{\sigma_1}\sigma_2 \sqrt{\sigma_1}}  \right)\ ,
\eeq
where $\|\cdot\|_1$ is $l_1$ norm, defined as the sum of the singular values of the operator. In the following,  we prove that 
\begin{theorem}\label{Thm_fid}
  Let $\mathcal{E}_U$ be the quantum channel that describes the overall effect of of the algorithm presented in Fig.\ref{Fig2}  on the state of system, in the case where  we do not post-select based on the outcome of measurement in step (ii) (In other words, we do not care if erasing has been successful or not). Then, for any input state  $\rho$ the Uhlmann fidelity of $\mathcal{E}_U(\rho)$ and the desired state $U\rho U^\dag$ satisfies
\beq
\text{F}(\mathcal{E}_U(\rho), U\rho U^\dag)\ge p_\text{erase}(\rho) = \langle\phi^\i|\mathcal{W}^T(\rho)|\phi^\i\rangle\ ,
\eeq
where
\beq
\mathcal{W}(\tau)=\sum_{\lambda\in\Lambda} p(\lambda) \  \Tr_a(W^\i(\lambda) [\tau\otimes |-\rangle\langle-|_a] {W^\i}^\dag(\lambda) )\ ,
\eeq
and  $p_\text{erase}(\rho)$ is the probability that at the end of step (i) the system is found in state $|\phi^\i\rangle$, or, equivalently the probability that the measurement in step(ii) returns outcome b=0.  Furthermore, if we post-select to the cases where $b=0$, which means the erasing has been successful, then for pure input state $\rho$ the fidelity of the output state $\tilde{\rho}_\text{post}$ with the desired state $U\rho U^\dag$ is lower bounded by 
 \beq
\text{F}(\tilde{\rho}_\text{post}, U\rho U^\dag)\ge \sqrt{p_\text{erase}(\rho) } = \sqrt{\langle\phi^\i|\mathcal{W}^T(\rho)|\phi^\i\rangle}\ .
\eeq
\end{theorem}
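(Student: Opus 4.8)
The plan is to isolate the single ``successful-erasure'' branch of the circuit (outcome $0$ in step (ii)) and show that this branch alone already reproduces $U\rho U^\dag$ with fidelity at least $p_\text{erase}(\rho)$, the discarded branch only helping. First I would record that step (ii) — the controlled-reflection $R_c(\phi^\i)$ on $|-\rangle_c$ followed by a Hadamard and a measurement — is precisely the projective measurement $\{P,P^\perp\}$ with $P=|\phi^\i\rangle\langle\phi^\i|$: a one-line computation using $I-R(\phi^\i)=2P$ and $I+R(\phi^\i)=2P^\perp$ shows outcome $0$ projects the system onto $|\phi^\i\rangle$. Writing $V^\i_\lambda=W^\i_{a_T}(\lambda_T)\cdots W^\i_{a_1}(\lambda_1)$ for step (i) and invoking ${W^\o_a}^\dag(\lambda)=(U\otimes I_a){W^\i_a}^\dag(\lambda)(U^\dag\otimes I_a)$, the interior $U^\dag U$ pairs in step (iv) telescope, so the reverse pass equals $(U\otimes I){V^\i_\lambda}^\dag(U^\dag\otimes I)$.

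Next I would compose the four steps for a fixed random string $\lambda=(\lambda_1,\dots,\lambda_T)$ conditioned on outcome $0$. After the projection the joint state factorizes as $|\phi^\i\rangle\langle\phi^\i|\otimes\tilde{\sigma}_\textbf{a}$; the swap in step (iii) replaces $|\phi^\i\rangle$ by $|\phi^\o\rangle$; and applying step (iv) together with $U^\dag|\phi^\o\rangle=|\phi^\i\rangle$ collapses the whole branch to $U\,\Phi_\lambda(\rho)\,U^\dag$, where $\Phi_\lambda(\rho)=\Tr_\textbf{a}\big(M_\lambda[\rho\otimes|-\rangle\langle-|^{\otimes T}]M_\lambda\big)$ and $M_\lambda={V^\i_\lambda}^\dag(P\otimes I)V^\i_\lambda$ is a projector. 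Averaging over $\lambda$ gives the unnormalized outcome-$0$ channel $\mathcal{E}_0(\rho)=U\bar{\Phi}(\rho)U^\dag$ with $\bar{\Phi}=\mathbb{E}_\lambda\Phi_\lambda$, and one verifies $\Tr\bar{\Phi}(\rho)=\langle\phi^\i|\mathcal{W}^T(\rho)|\phi^\i\rangle=p_\text{erase}(\rho)$ by recognizing, as in the main text, that averaging step (i) over $\lambda$ and tracing the ancillas implements $\mathcal{W}^T$.

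The core of the argument is a per-instance fidelity bound. I would purify $\rho$ as $|\psi\rangle_{SR}$, set $|\Psi\rangle=|\psi\rangle_{SR}|-\rangle^{\otimes T}_\textbf{a}$, and observe that $(M_\lambda\otimes I_R)|\Psi\rangle$ purifies $\Phi_\lambda(\rho)$ while $|\Psi\rangle$ itself purifies $\rho$ (with reference $\textbf{a}R$). Uhlmann's theorem then gives $F(\Phi_\lambda(\rho),\rho)\ge\langle\Psi|(M_\lambda\otimes I_R)|\Psi\rangle=\Tr[M_\lambda(\rho\otimes|-\rangle\langle-|^{\otimes T})]=p_\lambda$, the outcome-$0$ probability for that $\lambda$. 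Joint concavity of fidelity upgrades this to $F(\bar{\Phi}(\rho),\rho)\ge\mathbb{E}_\lambda p_\lambda=p_\text{erase}(\rho)$. Finally, unitary invariance gives $F(\mathcal{E}_0(\rho),U\rho U^\dag)=F(\bar{\Phi}(\rho),\rho)$, and monotonicity of fidelity under adding the positive operator $\mathcal{E}_1(\rho)=\mathcal{E}_U(\rho)-\mathcal{E}_0(\rho)$ (the discarded branch) yields $F(\mathcal{E}_U(\rho),U\rho U^\dag)\ge p_\text{erase}(\rho)$, as claimed.

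For the postselected statement with pure $\rho=|\psi\rangle\langle\psi|$, the normalized output is $U\bar{\Phi}(\rho)U^\dag/p_\text{erase}(\rho)$, so the pure-state fidelity formula gives $F(\tilde{\rho}_\text{post},U\rho U^\dag)=\sqrt{\langle\psi|\bar{\Phi}(\rho)|\psi\rangle/p_\text{erase}(\rho)}$. I would lower-bound $\langle\psi|\Phi_\lambda(\rho)|\psi\rangle$ by retaining only the $|-\rangle^{\otimes T}$ term in the ancilla trace, obtaining $\ge p_\lambda^2$, and then apply Jensen's inequality $\mathbb{E}_\lambda p_\lambda^2\ge p_\text{erase}(\rho)^2$ to conclude $F\ge\sqrt{p_\text{erase}(\rho)}$. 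The main obstacle I expect is the bookkeeping around the per-instance Uhlmann step: one must choose the shared purification so that the Uhlmann overlap reduces \emph{exactly} to $\Tr[M_\lambda(\rho\otimes|-\rangle\langle-|^{\otimes T})]=p_\lambda$, and must keep the classical average over $\lambda$ (handled on states via concavity and Jensen) cleanly separated from the quantum averaging already packaged inside $\mathcal{W}$.
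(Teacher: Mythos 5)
Your proof is correct, and it reaches both of the paper's bounds by a route that is structurally parallel but technically different. Both arguments rest on the same two moves: restriction to the successful-erasure branch (you do this via the decomposition $\mathcal{E}_U=\mathcal{E}_0+\mathcal{E}_1$; the paper does it by splitting $\Tr_S(|\Theta(\boldsymbol{\lambda})\rangle\langle\Theta(\boldsymbol{\lambda})|)$ into its $P$ and $P^\perp$ parts and discarding the positive $P^\perp$ term), and a Jensen-type step to pull the average over $\boldsymbol{\lambda}$ inside a square (your concavity and $\mathbb{E}_\lambda p_\lambda^2\ge(\mathbb{E}_\lambda p_\lambda)^2$; the paper's ``variance is non-negative''). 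Where you genuinely differ is the fidelity machinery: the paper keeps the ancillas and compares the global output $\sigma_\text{fin}$ with the \emph{pure} target $U|\psi\rangle|-\rangle^{\otimes T}$, so fidelity is an explicit overlap, and the only nontrivial facts needed are monotonicity under partial trace and, for mixed inputs, a final joint-concavity step. You instead work with reduced states per instance $\boldsymbol{\lambda}$, invoke Uhlmann's theorem with the explicit purifications $(M_\lambda\otimes I_R)|\Psi\rangle$ and $|\Psi\rangle$, and then need the extra fact that $F(A+C,B)\ge F(A,B)$ for $C\ge 0$. That lemma is true (extend a purification of $A$ orthogonally to one of $A+C$ and reuse Uhlmann, or use the measurement characterization of fidelity), but it is not textbook-standard, and both it and your Uhlmann/concavity steps are applied to subnormalized positive operators, so those PSD versions should be stated and justified explicitly. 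What your route buys: mixed inputs are handled from the outset through the reference system $R$, and the channel-level identity $\mathcal{E}_0(\rho)=U\bar{\Phi}(\rho)U^\dag$ with $M_\lambda$ a projector makes the working mechanism transparent. What the paper's route buys is a more elementary toolkit. Your postselected argument is essentially the paper's: retaining only the $|-\rangle^{\otimes T}$ component of the ancilla trace is exactly what the paper's comparison with the pure global target, followed by partial-trace monotonicity, accomplishes.
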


\begin{figure*}
  \includegraphics[width=\textwidth,height=7cm]{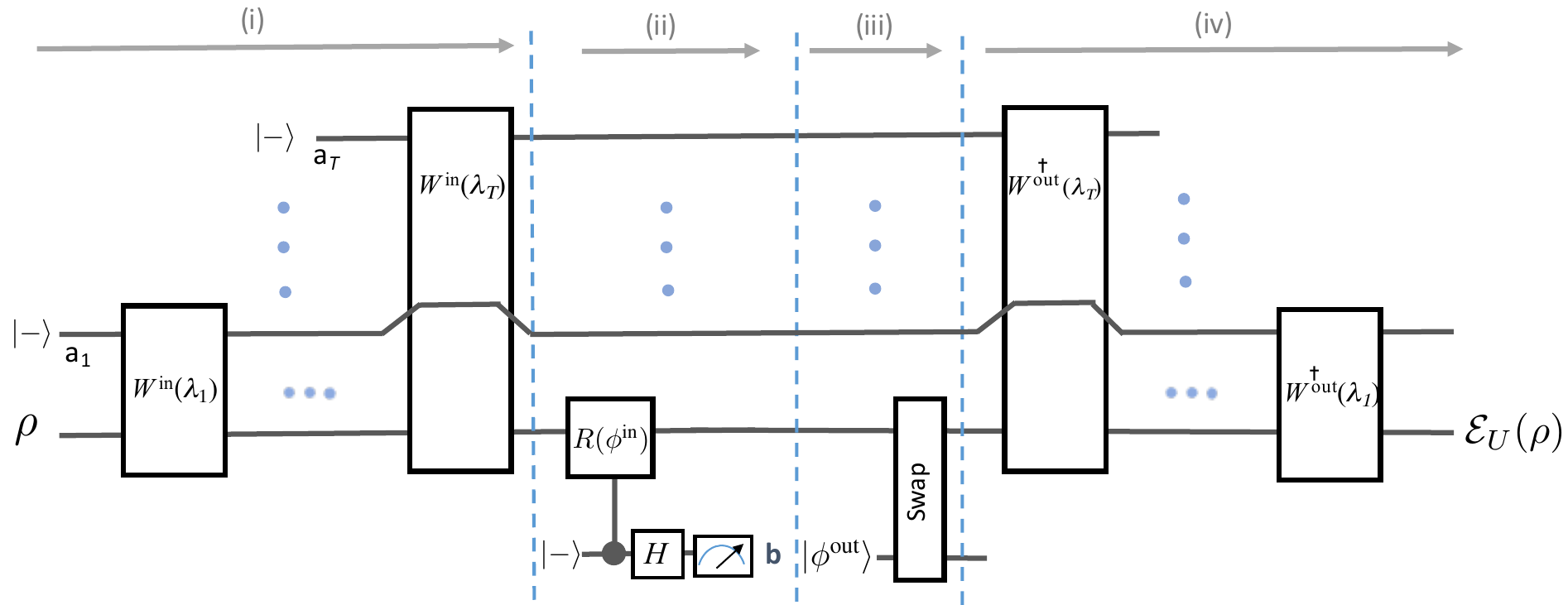}
\caption{The generalized quantum circuit for emulating  unitary $U$. .   
   \label{Fig2}}
\end{figure*}

%Here $R^\i(k)= e^{i \pi |\phi^\i_k\rangle\langle\phi^\i_k|}$ and $R^\o(k)= e^{i \pi |\phi^\o_k\rangle\langle\phi^\o_k|}$ are reflections about the input and output states in $S_\i$ and $S_\o$ respectively.  In step (ii) we perform a measurement in  $\{|0\rangle,|1\rangle\}$ basis.  
  %  As we show in appendix using multiple copies of  the input states 
  %$S_\i=\{|\phi^\i_k\rangle\langle\phi_k^\i|: k=1,\cdots,K\}$ and the corresponding output states $S_\o=\{|\phi^\o_k\rangle\langle\phi_k^\o|=U |\phi^\i_k\rangle\langle\phi_k^\i| U^\dag : k=1,\cdots,K\}$. 
  %Here $R^\i(k)= e^{i \pi |\phi^\i_k\rangle\langle\phi^\i_k|}$ and $R^\o(k)= e^{i \pi |\phi^\o_k\rangle\langle\phi^\o_k|}$ are reflections about these input and output states.  We use the given  copies of  states in $S_\i$ and $S_\o$  to simulate these controlled reflections.  Integers $k_1,\cdots,K_T$ are chosen uniformly at  random from integers  $1,\cdots, K$, where $T=\text{poly}(d)$. The measurement is in the standard basis. If we project the qubit to state $|1\rangle$, we terminate the algorithm and return: "out of the input subspace". Otherwise, the algorithm maps the input state to state  $ U\Pi_\i |\psi\rangle/\sqrt{\langle\psi|\Pi_\i |\psi\rangle}$, where $\Pi_\i$ is the projector to the subspace spanned by the input states $S_\i$.

\begin{proof}
We prove the theorem for the case of initial pure state $\rho=|\psi\rangle\langle\psi|$. The result for general mixed states follows from the joint concavity of Uhlmann fidelity.

Let $|\Theta(\boldsymbol{\lambda})\rangle$ be the joint state of the system and the ancillary qubits $a_1\cdots a_T$ at the end of step (i), for a particular choice of random parameters $\boldsymbol{\lambda}=(\lambda_1,\cdots , \lambda_T)$. This state is given by
\beq\label{def56}
|\Theta(\boldsymbol{\lambda})\rangle=\Big[W_{a_T}^\i(\lambda_T)\cdots W_{a_1}^\i(\lambda_1)\Big] \otimes (|\psi\rangle\otimes |-\rangle^{\otimes T})\ . 
\eeq
Then, if we ignore the outcome of the measurement in step (ii), at the end of step (iii) the joint state of the system and ancillary qubits $a_1\cdots a_T$, for this particular choice of random parameters $\boldsymbol{\lambda}$ is given by
\beq
|\phi^\o\rangle\langle\phi^\o|\otimes \Tr_S(|\Theta(\boldsymbol{\lambda})\rangle\langle\Theta(\boldsymbol{\lambda})|)\ ,
\eeq
where the partial trace is over the  Hilbert space of the main system.  Therefore, at the end of the algorithm the final state is given by 
\beq\label{34t}
\left[{W_{a_1}^\o}^\dag(\lambda_1)\cdots {W_{a_T}^\o}^\dag(\lambda_T)\right] \Big[|\phi^\o\rangle\langle\phi^\o|\otimes \Tr_S(|\Theta(\boldsymbol{\lambda})\rangle\langle\Theta(\boldsymbol{\lambda})|)\Big] \left[{W_{a_T}^\o}(\lambda_T) \cdots {W_{a_1}^\o}(\lambda_1) \right]\ ,
\eeq
and the average joint state of the system and qubits $a_1\cdots a_T$ is given by  
\beq\label{glob_s}
\sigma_\text{fin}=\sum_{\lambda_1,\cdots, \lambda_T\in\Lambda} p(\lambda_1)\cdots p(\lambda_T)\  \left[{W_{a_1}^\o}^\dag(\lambda_1)\cdots {W_{a_T}^\o}^\dag(\lambda_T)\right] \Big[|\phi^\o\rangle\langle\phi^\o|\otimes \Tr_S(|\Theta(\boldsymbol{\lambda})\rangle\langle\Theta(\boldsymbol{\lambda})|)\Big] \left[{W_{a_T}^\o}(\lambda_T) \cdots {W_{a_1}^\o}(\lambda_1) \right]\ ,
\eeq
where we have averaged over  iid random parameters $(\lambda_1,\cdots , \lambda_T)$, where each $\lambda_i\in\Lambda$ happens with probability $p(\lambda_i)$. 

Next, we consider the fidelity  between the actual final state of system and ancilla $\sigma_\text{fin}$ and their ideal final state 
 $U|\psi\rangle|-\rangle^{\otimes T}$. In particular, the squared of this global fidelity is given by 
 \be
  F_\text{glob}^2= \langle-|^{\otimes T} \langle\psi| U^\dag\sigma_\text{fin}U|\psi\rangle|-\rangle^{\otimes T}\ .
 \ee
To calculate this fidelity note that by applying the unitary $U$ on the Hilbert space of system 
 on both sides of Eq.(\ref{def56}), we obtain 
 \beq
(U\otimes I_\textbf{a})|\Theta(\boldsymbol{\lambda})\rangle=\Big[W_{a_T}^\o(\lambda_T)\cdots W_{a_1}^\o(\lambda_1)\Big] (U |\psi\rangle)\otimes |-\rangle^{\otimes T}\ , 
\eeq
  where we have used  the fact that $(U\otimes I_a)  W^\i_a(\lambda) (U^\dag\otimes I_a)=W^\o_a(\lambda)$. Then, for each choice of $\boldsymbol{\lambda}$, the squared fidelity of this ideal state $(U |\psi\rangle)\otimes |-\rangle^{\otimes T}$ with the actual output state in Eq.(\ref{34t}) is given by
 \be
\langle\Theta(\boldsymbol{\lambda})| \Big[|\phi^\i\rangle\langle\phi^\i|\otimes \Tr_S(|\Theta(\boldsymbol{\lambda})\rangle\langle\Theta(\boldsymbol{\lambda})|)\Big] |\Theta(\boldsymbol{\lambda})\rangle\ ,
 \ee 
 where we have used the fact  that $U|\phi^\i\rangle=|\phi^\o\rangle$. Similarly, the  squared fidelity of the ideal state $(U |\psi\rangle)\otimes |-\rangle^{\otimes T}$ with the averaged state $\sigma_\text{fin}$ is given by
 \bes\label{ryr}
\begin{align}
F_\text{glob}^2&= \langle-|^{\otimes T} \langle\psi| U^\dag\sigma_\text{fin}U|\psi\rangle|-\rangle^{\otimes T}\\  &=\sum_{\lambda_1,\cdots, \lambda_T} \ p(\lambda_1)\cdots p(\lambda_T) \ \langle\Theta(\boldsymbol{\lambda})| \Big[|\phi^\i\rangle\langle\phi^\i|\otimes \Tr_S(|\Theta(\boldsymbol{\lambda})\rangle\langle\Theta(\boldsymbol{\lambda})|)\Big] |\Theta(\boldsymbol{\lambda})\rangle\ .
\end{align}
\ees
Next, to establish a lower bound on the squared fidelity we note that 
\begin{align}
\Tr_S(|\Theta(\boldsymbol{\lambda})\rangle\langle\Theta(\boldsymbol{\lambda})|) \ge \langle\phi^\i|\Theta(\boldsymbol{\lambda})\rangle\langle\Theta(\boldsymbol{\lambda}) |\phi^\i\rangle_S \ ,
\end{align}
which in turn implies
\be
|\phi^\i\rangle\langle\phi^\i|\otimes \Tr_S(|\Theta(\boldsymbol{\lambda})\rangle\langle\Theta(\boldsymbol{\lambda})|) \ge  \big[|\phi^\i\rangle\langle\phi^\i|\otimes I_\textbf{a}\big] |\Theta(\boldsymbol{\lambda})\rangle\langle\Theta(\boldsymbol{\lambda})|\big[ |\phi^\i\rangle\langle\phi^\i|\otimes I_\textbf{a} \big]\ ,
\ee
where  $I_S$ and $ I_\textbf{a} $ are, respectively, the identity operators on the system and the ancillary qubits $a_1\cdots a_T$.  Putting this into Eq.(\ref{ryr}) we arrive at
\bes\label{eqw12}
\begin{align}
F_\text{glob}^2&=\sum_{\lambda_1,\cdots, \lambda_T} p(\lambda_1)\cdots p(\lambda_T)\  \langle\Theta(\boldsymbol{\lambda})| \Big[|\phi^\i\rangle\langle\phi^\i|\otimes \Tr_S(|\Theta(\boldsymbol{\lambda})\rangle\langle\Theta(\boldsymbol{\lambda})|)\Big] |\Theta(\boldsymbol{\lambda})\rangle\\ &\ge  \sum_{\lambda_1,\cdots, \lambda_T} p(\lambda_1)\cdots p(\lambda_T)\  \langle\Theta(\boldsymbol{\lambda})| \big[|\phi^\i\rangle\langle\phi^\i|\otimes I_\textbf{a}\big] |\Theta(\boldsymbol{\lambda})\rangle\langle\Theta(\boldsymbol{\lambda})|\big[ |\phi^\i\rangle\langle\phi^\i|\otimes I_\textbf{a} \big] |\Theta(\boldsymbol{\lambda})\rangle\ \\ &=  \sum_{\lambda_1,\cdots, \lambda_T} p(\lambda_1)\cdots p(\lambda_T)\  \Big(\langle\Theta(\boldsymbol{\lambda})| \big[|\phi^\i\rangle\langle\phi^\i|\otimes I_\textbf{a}\big] |\Theta(\boldsymbol{\lambda})\rangle\Big)^2\ \label{wrf}\ .
\end{align}
\ees
Then, using the fact that the  variance of any random variable is non-negative, we find that
\bes\label{eqw13}
\begin{align}
F_\text{glob}^2&\ge  \sum_{\lambda_1,\cdots, \lambda_T} p(\lambda_1)\cdots p(\lambda_T)\  \Big(\langle\Theta(\boldsymbol{\lambda})| \big[|\phi^\i\rangle\langle\phi^\i|\otimes I_\textbf{a}\big] |\Theta(\boldsymbol{\lambda})\rangle\Big)^2\ \\ &\ge \Big(\sum_{\lambda_1,\cdots, \lambda_T} p(\lambda_1)\cdots p(\lambda_T)\  \langle\Theta(\boldsymbol{\lambda})| \big[|\phi^\i\rangle\langle\phi^\i|\otimes I_\textbf{a}\big] |\Theta(\boldsymbol{\lambda})\rangle\Big)^2
\\ 
&=p^2_\text{erase}(|\psi\rangle\langle\psi|)\ , 
\end{align}
\ees
where 
\be
p_\text{erase}(|\psi\rangle\langle\psi|)=\langle\phi^\i| \big[\sum_{\lambda_1,\cdots, \lambda_T} p(\lambda_1)\cdots p(\lambda_T)\ \Tr_\textbf{a}(|\Theta(\boldsymbol{\lambda})\rangle\langle\Theta(\boldsymbol{\lambda})|)\big]|\phi^\i\rangle\ ,
\ee
 is the probability that at the end of step (i) the reduced state of system is in state $|\phi^\i\rangle$. 

Finally, we note that the final state of system is obtained from state $\sigma_\text{fin}$, by  tracing over the ancillary qubits, i.e., $\mathcal{E}_U(|\psi\rangle\langle\psi|)=\Tr_\text{\bf{a}}(\sigma_\text{glob})$\ .
Then, using the  monotonicity of  Uhlmann fidelity under the partial trace we find
\beq
\text{F}(\mathcal{E}_U(|\psi\rangle\langle\psi|), U|\psi\rangle\langle\psi| U^\dag)\ge F_\text{glob} \ge p_\text{erase}(|\psi\rangle\langle\psi|)\ .
\eeq
Using the joint concavity of Uhlmann fidelity one can easily extend this bound to arbitrary initial state $\rho$. 

Next, using the definition $|\Theta(\boldsymbol{\lambda})\rangle=\Big[W_{a_T}^\i(\lambda_T)\cdots W_{a_1}^\i(\lambda_1)\Big] |\psi\rangle\otimes |-\rangle^{\otimes T}$, it can be easily shown that 
\bes
\begin{align}
\sum_{\lambda_1,\cdots, \lambda_T} p(\lambda_1)\cdots p(\lambda_T)\Tr_\textbf{a}(|\Theta(\boldsymbol{\lambda})\rangle\langle\Theta(\boldsymbol{\lambda})|)= \mathcal{W}^T(|\psi\rangle\langle\psi|) ,
\end{align}
\ees
where 
\beq
\mathcal{W}(\tau)=\sum_{\lambda\in\Lambda} p(\lambda)\  \Tr_a\Big( W_a^\i(\lambda)\Big[ \tau \otimes |-\rangle\langle-|_a \Big] {W_a^\i(\lambda)}^\dag\Big) \ ,
\eeq
and the partial trace is over qubit $a$. Therefore, we conclude that for any initial state $\rho$ it holds that $\text{F}(\mathcal{E}_U(\rho), U\rho U^\dag)\ge  p_\text{erase}(\rho)= \langle\phi^\i|  \mathcal{W}^T(\rho)|\phi^\i\rangle .$

In the above analysis, we ignored the outcome of measurement in step (ii), which means that we do not care if the erasing has been successful or not. Now suppose we postselect to the cases where the   erasing has been successful.  In this case after the measurement 
 the joint  state of the system and ancillary qubits $a_1\cdots a_T$ is given by
 \begin{align}
\frac{1}{p_\text{erase}(|\psi\rangle\langle\psi|)}\times  \sum_{\lambda_1,\cdots, \lambda_T} p(\lambda_1)\cdots p(\lambda_T) \big[|\phi^\i\rangle\langle\phi^\i|\otimes I_\textbf{a}\big] |\Theta(\boldsymbol{\lambda})\rangle\langle\Theta(\boldsymbol{\lambda})|\big[ |\phi^\i\rangle\langle\phi^\i|\otimes I_\textbf{a} \big]  ,
\end{align}
where the factor $1/p_\text{erase}(|\psi\rangle\langle\psi|)$ is due to the postselection.  Then, using an argument similar to the one we used before, we can show that  the the squared of the fidelity of the final joint state of the system and qubits $a_1\cdots a_T$ at the end of the algorithm, with state $U|\psi\rangle\otimes |-\rangle^{\otimes T}$   is given by  
\bes
\begin{align}
{F^{\text{post}}_\text{glob}}^2&=\frac{1}{p_\text{erase}(|\psi\rangle\langle\psi|)}\times \sum_{\lambda_1,\cdots, \lambda_T} p(\lambda_1)\cdots p(\lambda_T)\ \langle\Theta(\boldsymbol{\lambda})| \big[|\phi^\i\rangle\langle\phi^\i|\otimes I_\textbf{a}\big] |\Theta(\boldsymbol{\lambda})\rangle\langle\Theta(\boldsymbol{\lambda})|\big[ |\phi^\i\rangle\langle\phi^\i|\otimes I_\textbf{a} \big] |\Theta(\boldsymbol{\lambda})\rangle\ 
\\ &=\frac{1}{p_\text{erase}(|\psi\rangle\langle\psi|)}\times \sum_{\lambda_1,\cdots, \lambda_T} p(\lambda_1)\cdots p(\lambda_T)\ \Big(\langle\Theta(\boldsymbol{\lambda})| \big[|\phi^\i\rangle\langle\phi^\i|\otimes I_\textbf{a}\big] |\Theta(\boldsymbol{\lambda})\rangle\Big)^2\ ,
\end{align}
\ees
Then, using  Eq.(\ref{eqw13}),  we find
\begin{align}
{F^{\text{post}}_\text{glob}}^2&\ge \frac{p^2_\text{erase}(|\psi\rangle\langle\psi|)}{p_\text{erase}(|\psi\rangle\langle\psi|)} =p_\text{erase}(|\psi\rangle\langle\psi|) .
\end{align}
This means that if we postselect to the cases where erasing has been successful, then 
\beq
F(U|\psi\rangle\langle \psi| U^\dag, \tilde{\rho}_\text{post} )\ge F^{\text{post}}_\text{glob} \ge \sqrt{p_\text{erase}(|\psi\rangle\langle\psi|)} .
\eeq

\end{proof}
\newpage

\section{Error analysis}\label{App:Runtime}

In this section,  we first study the error in the output of  the algorithm, assuming  we run the quantum circuit in Fig. \ref{Fig} with ideal controlled-reflections $R_a(k)$. Then, we consider the error in the actual algorithm,  where we use the given copies of sample states to simulate the controlled-reflections.

\subsection{Universal Quantum Emulator with Ideal Controlled-Reflections (Proof of Eq.(\ref{bound42}))}
Let $\mathcal{E}_U$ be the quantum channel describing the overall effect of the circuit in Fig.(\ref{Fig}) on the system, in the case where we ignore the outcome of measurement in  step (ii), and where we use ideal controlled-reflections in the circuit. 
   
Recall that all reflections $\exp({i\pi |\phi_k^\i\rangle\langle\phi_k^\i| }): k=1,\cdots, K$  that define channel $\mathcal{D}$ are block-diagonal with respect to the subspace $\mathcal{H}^\i$ and its orthogonal subspace, and they act trivially on the orthogonal subspace. Therefore, if the support of the input state is initially restricted to $\mathcal{H}^\i$,   under the effect of channels $\mathcal{W}$ and $\mathcal{E}_U$ it remains restricted to this subspace.  Hence, in the following,  to simplify the notation and presentation,  we assume $\mathcal{H}^\i$ is equal to the entire Hilbert space,  which means states $|\phi^\i_k\rangle: k=1,\cdots, K$ span this space. Therefore, in the following we use the notion $P^\perp$ and $P^\perp_\i$  interchangeably. Also, we use the notation  $\mathcal{P}_\i$ and $\mathcal{P}$ interchangeably.\\

We start with Eq.(\ref{main_eq}) proven in Appendix \ref{App:1}, namely
\begin{align}
F(\mathcal{E}_U(\rho),U\rho U^\dag) \ge \langle\phi_1^\i|\mathcal{W}^T(\rho)|\phi_1^\i\rangle\ ,
\end{align}
where $\mathcal{W}(\sigma)=P\sigma P + \mathcal{D}(P^\perp \sigma P^\perp)$. Define 
\be
P^\perp=I-P=I- |\phi^\i_1\rangle\langle\phi_1^\i|\ ,
\ee
to be the projector to the subspace orthogonal to $|\phi^\i_1\rangle$, where $I$ is the identity operator. Let  
 \be
 \mathcal{P}^\perp(\cdot)=P^\perp(\cdot)P^\perp\ .
 \ee
  Using the definition 
  $$\mathcal{W}(\sigma)=P\sigma P + \mathcal{D}(P^\perp \sigma P^\perp)= P\sigma P+\mathcal{D}\circ \mathcal{P}^\perp(\sigma)\ , $$
  we have
  \beq\label{Eq2341}
\mathcal{P}^\perp\circ\mathcal{W}=\mathcal{P}^\perp\circ\mathcal{D}\circ\mathcal{P}^\perp=\mathcal{P}^\perp\circ\mathcal{W}\circ\mathcal{P}^\perp\ .
\eeq
%Furthermore, using the fact that under channels $\mathcal{W}$ and $\mathcal{D}$ any state $\rho$ whose support is restricted to $\mathcal{H}_\i$ is mapped to a state with support restricted to this subspace, we find  that 
%\beq
%\mathcal{P}^\perp_\i\circ\mathcal{W}^T(\rho)=\left[\mathcal{P}^\perp_\i\circ\mathcal{D}\circ\mathcal{P}^\perp_\i\right]^T(\rho)\ .
%\eeq
%This equality basically means that after applying the channel $\mathcal{W}$, the only way the system is not in  state $|\phi^\i_1\rangle$, is that in all the $T$ measureme
This implies that for state $\rho$ restricted to $\mathcal{H}_\i$
\bes\label{Eq23}
\begin{align}
F(\mathcal{E}_U(\rho),U\rho U^\dag) &\ge \langle\phi_1^\i|\mathcal{W}^T(\rho)|\phi_1^\i\rangle\\ &=1- \Tr\left( P^\perp  \mathcal{W}^T(\rho)\right)\\   &=1- \Tr\left(\mathcal{P}^\perp\circ  \mathcal{W}^T(\rho)\right)\\ &= 1-\Tr\left(\left[\mathcal{P}^\perp\circ \mathcal{D}\circ \mathcal{P}^\perp\right]^T(\rho)\right) \ ,
\end{align}
\ees
where to get the last line we have used Eq.(\ref{Eq2341}).  We conclude that
\begin{align}
F(\mathcal{E}_U(\rho),U\rho U^\dag) &\ge \langle\phi_1^\i|\mathcal{W}^T(\rho)|\phi_1^\i\rangle=1- \Tr(\mathcal{D}_\perp^T(\rho))\ ,
\end{align}
where
$$\mathcal{D}_\perp(\cdot)=\mathcal{P}^\perp\circ \mathcal{D}\circ \mathcal{P}^\perp(\cdot)=\frac{1}{K-1} \sum_{k=2}^K A_k (\cdot) A_k\ , 
$$
with
$$ A_k=P_\perp R^\i(k) P_\perp=P_\perp-2 P_\perp |\phi^\i_k\rangle\langle\phi^\i_k| P_\perp \ ,
 $$ 
 and $P_\perp$ is the projector to $(d-1)$-dimensional subspace of $\mathcal{H}_\i$ orthogonal to $|\phi^\i_1\rangle$.

Quantum operation $\mathcal{D}_\perp$ has nice properties which simplify the following analysis. In particular, it has  Hermitian matrix representations:  Let $\{F_\mu:\mu=1\cdots d^2\}$ be an orthonormal basis for the operator space acting on $\mathcal{H}_\i$, such that $\Tr(F^\dag _\mu F_\nu)=\delta_{\mu,\nu}$. Consider the matrix representation of  $\mathcal{D}_\perp$, i.e., the matrix 
$$D_{\mu,\nu}^\perp=\Tr(F^\dag_\mu\ \mathcal{D}_\perp(F_\nu)) \ . $$ 
Then, the fact that any reflection $e^{i\pi |\phi\rangle\langle\phi |}$ is a Hermitian operator, implies that this matrix Hermitian. Diagonalizing this matrix we find
\be
\mathcal{D}_\perp(X)=\sum_\mu \lambda_\mu\  B_\mu \Tr(B^\dag_\mu X)  \ ,
\ee
where $\{\lambda_\mu\}$ are the eigenvalues of $\mathcal{D}_\perp$, which are real, and $\{B_\mu\}$ are the corresponding eigenvectors, which form an orthonormal basis, such that $\Tr(B_\mu B^\dag_\nu)=\delta_{\mu,\nu}$. Equivalently, using the vectorization of this equation, we find that 
$\{\lambda_\mu\}$ are the eigenvalues of the operator
\be
\frac{1}{K-1}\sum_{k=2}^K A_k\otimes A_k^\ast\ ,
\ee
 which is Hermitian. 

 Let $\lambda_\perp=\|\frac{1}{K-1}\sum_{k=2}^K A_k\otimes A_k^\ast\|_\infty $ be the largest eigenvalue of $ \mathcal{D}_\perp$. 
It follows that for any $T$, the map $\mathcal{D}_\perp^T$ has also a Hermitian matrix representation, and its largest eigenvalue is $\lambda^T_\perp$. Then, for any pair of operators $X$ and $Y$, it holds that
\begin{align}
\Big|\Tr\left(X^\dag\mathcal{D}_\perp^T(Y)\right)\Big|&\le \lambda_\perp^T \times \sum_{\mu} |\Tr(B_\mu X)| \times |\Tr(B^\dag_\mu Y)|\ \le \lambda_\perp^T \times \sqrt{\sum_{\mu} |\Tr(B_\mu X)|^2} \times \sqrt{\sum_{\mu} |\Tr(B_\mu Y)|^2}   \\  &= \lambda_\perp^T\ \times  \sqrt{\Tr(X^\dag X)}\sqrt{\Tr(Y^\dag Y)}\ \ ,
\end{align}
where we have used the Cauchy-Schwarz inequality.

This implies that 
\bes
\begin{align}
\Tr\left(\mathcal{D}_\perp^T(\rho)\right)&=\Tr\left(P^\perp \mathcal{D}_\perp^T(\rho)\right)  \le  \lambda_\perp^T  \times \sqrt{d-1}  \ ,
 \end{align}
 \ees
 where we have used $\Tr(\rho^2)\le 1$ and $\Tr(P^\perp)\le d-1$.
 
Putting this into Eq.(\ref{Eq23}), we find that
\begin{align}
F(\mathcal{E}_U(\rho),U\rho U^\dag)&\ge 1-\Tr\left(\mathcal{D}_\perp^T(\rho)\right)\ge 1-\sqrt{d-1}\times |\lambda_\perp|^T\ .
\end{align}
Next, we use this bound together with the Fuchs-van de Graaf inequality, to bound  the trace distance of $\mathcal{E}_U(\rho)$ and $U\rho U^\dag $. We find
\begin{align}
\big\|\mathcal{E}_U(\rho)-U\rho U^\dag \big\|_1 &\le  2\sqrt{1-F^2(\mathcal{E}_U(\rho),U\rho U^\dag)} \le 2  \sqrt{2\times (1-F(\mathcal{E}_U(\rho),U\rho U^\dag))} \\  &\le   2\sqrt{2 (d-1)\times |\lambda_\perp|^{T}}\ .
\end{align}
This means that to achieve error 
\be
\epsilon_\text{id}=\frac{1}{2}\big\|\mathcal{E}_U(\rho)-U\rho U^\dag \big\|_1\ ,
\ee
 in trace distance, it is sufficient to have 
\beq\label{fef}
T\ge  \frac{\ln (2\epsilon^{-2}_\text{id}[d-1]) }{\ln|\lambda|^{-1}_\perp} \ .
\eeq
Then, the fact that $\ln (1+x)\le x$ for $x>-1$, implies 
\be
\frac{1}{1-|\lambda_\perp|} \ge \frac{1}{\ln|\lambda_\perp|^{-1}}\ .
\ee
Therefore, to achieve error $\epsilon_\text{id}$ it is sufficient to have 
\beq\label{fef23}
T\ge  \frac{\ln (2\epsilon^{-2}_\text{id}[d-1]) }{1-\lambda_\perp} \ ,
\eeq
which is Eq.(\ref{bound42}) in the paper.

\subsection{Total error in the algorithm with simulated controlled-reflections (Proof of Eq.(\ref{Ntot}) and Eq.(\ref{Ttot}))}\label{App:error2}

As we showed above, in the idealized case where we are given perfect controlled-reflections by choosing $T$ satisfying the bound in Eq.(\ref{fef23}) we can implement the transformation $|\psi\rangle\rightarrow U|\psi\rangle$ on any initial state $|\psi\rangle\in \mathcal{H}_\i$ with error less than or equal to $\epsilon_\text{id}$ in trace distance.  But, in the main algorithm we need to simulate the controlled-reflections using the given copies of the sample states.

%this technique, it is possible to use $n$ copies of an unknown state $\sigma$ to implement the unitary $\exp({-i t\sigma})$, or its controlled version $|0\rangle\langle 0| \otimes I +   |1\rangle\langle 1| \otimes \exp({-i t\sigma})$, for any real $t$, with error as quantified in terms of trace distance    $\epsilon=\mathcal{O}\big(\frac{t^2}{n} \big)$

%In  the Appendix \ref{app:SWAP} we show that 

As shown in \cite{pichler2016measurement} and reviewed in Appendix \ref{app:SWAP}, using $n$ copies of state $|\phi\rangle$ we can implement the unitary $e^{i\pi|\phi\rangle\langle\phi|}$, or its controlled version, $V=|0\rangle\langle 0|\otimes I+ |1\rangle\langle 1| \otimes e^{i\pi|\phi\rangle\langle\phi|}$, with gate  complexity $\mathcal{O}(\log D\times n)$, and error $\mathcal{O}(1/n)$ in trace distance.  In other words, one can achieve  the error of order $\epsilon_\text{DME}$ in simulating each controlled-reflection with gate complexity  ${\mathcal{O}}(\epsilon^{-1}_\text{DME} \times \log D)$, and using ${\mathcal{O}}(\epsilon^{-1}_\text{DME})$ copies of the corresponding state. 

Using the properties of trace distance, and in particular, its monotonicity under completely-positive trace-preserving maps, and triangle inequality, one can easily show that if we replace each ideal controlled reflections with its approximate version that can be realized with density matrix exponentiation  with error $\epsilon_{\text{DME}}$, the output of the circuit will change by, at most, $\epsilon_{\text{DME}}$. Given that the circuit requires $4T+1$ controlled-reflection,  it follows that we can implement the transformation $|\psi\rangle\rightarrow U|\psi\rangle$\ , with the total error bounded by 
\beq
\epsilon_\text{tot}=[4T(d,\epsilon_\text{id})+1]\times \epsilon_\text{DME} + \epsilon_\text{id}\ ,
\eeq
with the total gate complexity 
\beq
t_\text{tot}=[4T(d,\epsilon_\text{id})+1] \times {\mathcal{O}}(\epsilon^{-1}_\text{DME} \times \log D),
\eeq
and using the total number of input-output sample pairs  
\beq
N_\text{tot}=[4T(d,\epsilon_\text{id})+1] \times \mathcal{O}(\epsilon^{-1}_\text{DME})\ .
\eeq  
Choosing $\epsilon_\text{id}$ and $ \epsilon_\text{DME} $ such that $$T(d,\epsilon_\text{id})\times \epsilon_\text{DME} =\epsilon_\text{id}=\epsilon\ , $$ we find that the transformation $|\psi\rangle\rightarrow U|\psi\rangle$ can be implemented with the total error less than or equal to $\epsilon$, with the gate complexity 
\beq
t_\text{tot}=\mathcal{O}(T^2\times \epsilon^{-1}) \times \log D=\widetilde{\mathcal{O}}\left(\frac{d^2 \epsilon^{-1}  \log D  }{(1-\lambda_\perp)^2 }\right) ,
\eeq
and using the total number of input-output sample pairs  
\beq
N_\text{tot}=\mathcal{O}(T\times \frac{1}{\epsilon/T})= \widetilde{\mathcal{O}}\left(\frac{d^2 \epsilon^{-1}  }{(1-\lambda_\perp)^2 }\right)  \ ,
\eeq  
where $\widetilde{\mathcal{O}}$  suppresses more slowly-growing terms, i.e., multiplicative terms in $\log  \epsilon^{-1}$ and $\log d$.

 \color{black}

\newpage

\section{Quantum circuit for exponentiating a density operator}\label{app:SWAP}

In this section,  we  review   the density matrix exponentiation technique introduced in \cite{lloyd2014quantum}  and 
then discuss the complexity of its implementation via a method used in \cite{pichler2016measurement}.

Let $S$ be the SWAP operator acting on the system  and an ancillary system with equal Hilbert space dimensions, such that $S|\mu\rangle|\nu\rangle=|\nu\rangle|\mu\rangle$, for any pair of states $|\mu\rangle$, and $|\nu\rangle$.  Note that $S$ is a Hermitian unitary operator. Then, for $|\theta|\ll 1$
\beq\label{suz}
\Tr_\text{RF}\left(e^{-i \theta S} [\rho\otimes \sigma] e^{i \theta S}\right)= \rho+i \theta [\rho,\sigma] +\mathcal{O}(\theta^2)=e^{-i \sigma\theta} \rho e^{i \sigma  \theta}+\mathcal{O}(\theta^2)  \ ,
\eeq
where the partial trace is over the system with state $\sigma$ (which can be interpreted as a  quantum reference frame \cite{marvian2008building, QRF_BRS_07}). More precisely, the error, as quantified by the trace distance between the state in the left-hand side and the desired state $e^{-i \sigma\theta} \rho e^{i \sigma  \theta}$ is bounded by $\mathcal{O}(\theta^2)$ \cite{lloyd2014quantum, kimmel2017hamiltonian}.

  As we explain in the following,  the unitary $e^{-i \theta S} $ for arbitrary $\theta$, can be realized using $\mathcal{O}(\log{(D)})$ elementary gates, where $D$ is the dimension of the Hilbert space.   In summary,  using one copy of state $\sigma$ we can implement  the unitary $e^{-i \sigma \theta}$ with the error of order $\mathcal{O}(\theta^2)$, and with   $\mathcal{O}(\log{(D)})$ elementary gates.  Then, having $n$ copies of state $\sigma$, to implement  the unitary $e^{-i \sigma t} $ we can choose $\theta=t/n$ and use each copy to simulate  $e^{-i \sigma \theta}$.  In this case the gate complexity is
\beq\label{t23}
t_\text{tot}= \mathcal{O}( \log(D) \times  n) \ ,
\eeq
and  the total error is
\beq\label{er23}
\epsilon_\text{tot}= \mathcal{O}\big(n\times  \theta^2\big)=\mathcal{O}\big(\frac{t^2}{n} \big)\ . 
\eeq
That is
\be
\Big\|\mathcal{F}_n(\cdot)- e^{-i t\sigma} (\cdot)e^{i t\sigma} \Big\|_\diamond =\mathcal{O}(\frac{t^2}{n})\ ,
\ee
where $\|\cdot \|_\diamond $ denotes the diamond norm distance, and 
\be
 \mathcal{F}_n(\rho)= \Tr_{2,\cdots, n+1}\Big(e^{-i S_{1,n+1} t/n} \cdots e^{-i S_{1,2} t/n} )[\rho\otimes \sigma^{\otimes n}] (e^{i S_{1,2} t/n} \cdots e^{i S_{1,n+1} t/n} \Big)\ ,
\ee
is the quantum channel describing the above process. Here $S_{1, j}$ is the swap operator  on the main system, labeled as 1, and the ancillary system, labeled as $j=2,\cdots, n+1$.

To implement the controlled reflections that are used in the universal quantum emulator, 
we consider the controlled version of Eq.(\ref{suz}), namely
\beq
\Tr_\text{RF}\Big(\exp(-i\theta S_c)[\rho\otimes \sigma] \exp(i\theta S_c)\Big)= |0\rangle\langle 0|\otimes \rho+ |1\rangle\langle 1|\otimes  \exp({-i \sigma \theta}) \rho \exp({i \sigma \theta})+\mathcal{O}(\theta^2)  \ ,
\eeq
where 
\be\label{contS}
\exp(i\theta S_c)=\exp\big(i\theta [|0\rangle\langle 0|\otimes I+|1\rangle\langle 1|\otimes S] \big) =\exp({i\theta}) |0\rangle\langle 0|\otimes I+  |1\rangle\langle 1|\otimes \exp({i\theta S})\ .
\ee   
Then, a similar argument implies that using $n$ copies of state $\sigma$ we can implement  the controlled unitary
\be
|0\rangle\langle 0|\otimes I+ |1\rangle\langle 1|\otimes  \exp(i t \sigma)\ ,
\ee
with error given in Eq.(\ref{er23}) and gate complexity given  in   Eq.(\ref{t23}). 

Next, we explain how the unitaries $\exp(i\theta S)$ and $\exp(i\theta S_c)$ can be implemented. Indeed, we consider a slightly more general task: Suppose $V$ is a general Hermitian unitary. Let 
\beq
V^\text{cont}=|0\rangle\langle 0|_\text{anc}\otimes I+ |1\rangle\langle 1|_\text{anc}\otimes V\ ,
\eeq 
be the controlled version of $V$, which is also Hermitian and unitary.  
Then, consider the identity
\begin{align}
V^\text{cont}\ \exp(i\theta  X_\text{anc}) \ V^\text{cont} &=  \exp(i\theta  V^\text{cont} X_\text{anc} V^\text{cont})  =\exp(i\theta X_\text{anc}\otimes V)\ ,
\end{align}
which corresponds to the circuit that first implements the controlled version of $V$, then applies the single-qubit rotation $ \exp(i{\theta} X_\text{anc})$ on the ancilla qubit, and finally applies the controlled version of $V$ again.  Applying this circuit on the initial state $|+\rangle\otimes |\psi\rangle$, where $|\psi\rangle$ is an arbitrary state of the system,  we obtain state
\beq
[V^\text{cont}\ \exp(i\theta X_\text{anc}) \ V^\text{cont}](|+\rangle\otimes |\psi\rangle)=\exp(i\theta X_\text{anc}\otimes V) (|+\rangle\otimes |\psi\rangle)=|+\rangle\otimes \exp({i\theta V})|\psi\rangle\ .
\eeq 
Therefore, at the end of the circuit ancilla remains in state $|+\rangle$, whereas the system evolves according to unitary  $\exp({i\theta V})$. In conclusion,  ignoring the single gate 
 $\exp(i{\theta} X_\text{anc})$,  the complexity of implementing $V$ as a Hamiltonian is, at most, twice the 
 complexity of implementing the controlled version of $V$ as a unitary.

To realize reflections $R(i\pi |\phi\rangle\langle\phi|)$ via  density matrix exponentiation, we assume the unitary $V=S$ is the swap unitary on $\mathbb{C}^D\otimes \mathbb{C}^D$. Then, the above circuit requires controlled swap, $S_c$.   Assuming each copy of the system contains $n$ qubits and hence has the total Hilbert space of dimension $D=2^n$, $S_c$ can be realized with  $n=\log D$ 2-qubit controlled-SWAPs, also known as the Fredkin gate. 

Similarly, if we choose $V$ to be the operator $S_c$, then the unitary $\exp(i\theta S_c)$ in Eq.(\ref{contS}) can be realized using two copies of controlled-controlled swaps, each of which requires $n=\log D$  controlled Fredkin gates. 

In summary, we conclude that using $n$ copies of state $\sigma$, we can realize the unitary $ \exp(i t \sigma)$ and its controlled version
$|0\rangle\langle 0|\otimes I+ |1\rangle\langle 1|\otimes  \exp(i t \sigma)$ with error given in Eq.(\ref{er23}) and the total number of gates given in Eq.(\ref{t23}).

\newpage

\section{A Modified Circuit with Exponentially Less Ancillary Qubits}\label{App:Sec:comp}

\begin{figure*}
  \includegraphics[scale=1]{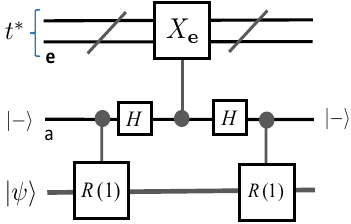}
\caption{Unitaries $V^{\i,\o} $ are defined by the above circuit. The unitary $\textbf{X}_\textbf{e}$ applies a cyclic shift by +1, in the computational basis, i.e., $\textbf{X}_\textbf{e}|z\rangle_\textbf{e}=|z+1 (\text{mod } 2^{t^\ast})\rangle_\textbf{e}.$ The unitaries $R^{\i/\o}(1)$ are reflections about  $|\phi_1^{\i/\o}\rangle$. Note that we have suppressed the $\i$ and $\o$ superscripts in the above circuit.  
   \label{Fig4}}
\end{figure*}

In this section, we present a modified version of the circuit in Fig.\ref{Fig} that generates exactly the same output using only $\lceil \log (T+1)\rceil$ ancillary qubits, which is an exponential improvement 
 over the original circuit that uses $T$ ancillary qubits. This modified circuit has exactly the same performance in terms of the sample complexity and error, whilst its runtime is increased only by a multiplicative factor of $\log T$.   The idea is simple: The circuit in Fig.\ref{Fig} uses a unary encoding to encode information in the ancilla qubits. We replace this with a binary encoding.  

As stated in Eq.(\ref{Eq_expan}) and Eq.(\ref{rec2}), at the end of step (i) of the algorithm the joint state of the main system and ancilla qubits is given by
\beq\label{ap11}
|\phi^\i_1\rangle\otimes \big(\sum_{t=0}^{T-1}  \langle\phi^\i_1|\psi(t,\textbf{k})\rangle \  |t\rangle_\textbf{a}\big) +|\psi(T,\textbf{k})\rangle\otimes |T\rangle_\textbf{a}  \ ,
\eeq
where the (unnormalized) vectors $|\psi(t,\textbf{k})\rangle$ are defined via the recursive relation 
\beq
|\psi(t+1,\textbf{k})\rangle=R^\i(k_{t+1}) P^\perp |\psi(t,\textbf{k})\rangle \ ,
\eeq
and $|\psi(0,\textbf{k})\rangle=|\psi\rangle$. 
 Here, $|t\rangle_\textbf{a}=|1\rangle^{\otimes t}\otimes  |0\rangle^{\otimes (T-t)}$ is  the state of qubits $a_1\cdots a_T$, in which $a_{t+1} \cdots a_{T}$ are all in state $|0\rangle$, and the rest of qubits are in state $|1\rangle$. 
 
 If instead of this unary encoding we use the binary encoding, then we can represent  these $T+1$ orthonormal states using only 
 \be
 t^\ast \ge \log (T+1) 
 \ee
  ancillary qudits. Namely, we replace  
 \be
 |t\rangle_\textbf{a}=|1\rangle^{\otimes t}\otimes  |0\rangle^{\otimes (T-t)} \ \ \longrightarrow \ \ |t\rangle_\textbf{e}=|t_{t^\ast}\rangle\otimes \cdots \otimes |t_1\rangle\ ,
 \ee
 where $\textbf{e}$ denotes $ t^\ast$ ancillary qubits, and $t_{t^\ast}\cdots t_1$ is the binary expansion of $t$. 
 
 Then, we can easily modify the steps in the original circuit in Fig.\ref{Fig} to generate the binary encoded version of state in Eq.(\ref{ap11}), i.e., state 
$|\phi^\i_1\rangle\otimes \big(\sum_{t=0}^{T-1}  \langle\phi^\i_1|\psi(t,\textbf{k})\rangle \  |t\rangle_\textbf{e}\big) +|\psi(T,\textbf{k})\rangle\otimes |T\rangle_\textbf{e} $.

The modified circuit has two main ingredients. First, as before, we need to perform  the projective measurement with projectors $\{P=|\phi^\i_1\rangle\langle\phi_1^\i, P^\perp=I-P\}$. This is done using the circuit $V^\i$ presented in Fig. \ref{Fig4}. In this circuit we use the controlled unitary $|0\rangle\langle 0|_1\otimes I_\textbf{e}+|1\rangle\langle 1|_1\otimes \textbf{X}_\textbf{e}$, where $\textbf{X}_\textbf{e}$ cyclically shifts the state of qubits  $\textbf{e}$ by +1, i.e., 
\be
\textbf{X}_\textbf{e}=\sum_{t=0}^{2^{t^\ast}-2} |t+1\rangle\langle t|_\textbf{e}+|0\rangle\langle 2^{t^\ast}-1|_\textbf{e}=\sum_{t=0}^{2^{t^\ast}-1}   |t+1\ (\text{mod } 2^{t^\ast})\rangle\langle t|_\textbf{e}\ .
\ee 
Note that $\textbf{X}_\textbf{e}$ and its controlled version  $|0\rangle\langle 0|_1\otimes I_\textbf{e}+|1\rangle\langle 1|_1\otimes \textbf{X}_\textbf{e}$ 
can be implemented in time $\Theta(\log(T))$.

The circuit $V^\i$ presented in Fig.\ref{Fig4} coherently performs the projective  measurement with projectors $\{P=|\phi^\i_1\rangle\langle\phi_1^\i, P^\perp=I-P\}$, and if the system is in state $|\phi^\i_1\rangle$, it does nothing, whereas if it is in a state   orthogonal to $|\phi^\i_1\rangle$, it applies a cyclic shift on the state of qubits $\textbf{e}$, i.e., 
\be
V^\i (|-\rangle_a|t\rangle_\textbf{e}|\psi\rangle)=|-\rangle_a\big(\langle\phi^\i_1|\psi\rangle\ |t\rangle_\textbf{e}|\phi^\i_1\rangle+ |t+1\rangle_\textbf{e}P^\perp|\psi\rangle\big)\ .
\ee 
In addition to this coherent measurement, we also need to apply an encoded version of the controlled-reflections. Let $F^{\text{in/out}}_\textbf{e}(t,k_t)$ be the controlled-reflections controlled by qubits $\textbf{e}$, defined by
\be
F^{\text{in/out}}_\textbf{e}(t,k_t)=(I_\textbf{e}-|t\rangle\langle t|_\textbf{e})\otimes I_S+|t\rangle\langle t|_\textbf{e}\otimes R^{\i/\o}(k_t)\ .
\ee
This means that if the ancillary qubits $\textbf{e}$ are in states $|t\rangle_\textbf{e}$,  this gate applies the reflection $R^{\i/\o}(k_t)$ to the system, and otherwise it leaves the system unchanged.

Then, it can be easily seen that for any input state $|\psi\rangle$ of the system, by applying  
$$\prod_{t=1}^T \big[F^\i(t,k_t) V^\i\big]$$
to the system and ancillary qubits, we get
\begin{align}
\prod_{t=1}^T \big[F^\i(t,k_t) V^\i\big] (|-\rangle_a|t\rangle_\textbf{e}|\psi\rangle)&=\big[F^\i(T,k_T)V^\i\cdots  F^\i(1,k_1)V^\i\big](|-\rangle_a|t\rangle_\textbf{e}|\psi\rangle)\\ &=|\phi^\i_1\rangle\otimes  |0\rangle_a\otimes \Big[\sum_{t=0}^{T-1}  \langle\phi^\i_1|\psi(t,\textbf{k})\rangle  |t\rangle_\textbf{e}\big) +|\psi(T,\textbf{k})\rangle |T\rangle_\textbf{e} \Big] \ ,
\end{align}
which yields the desired state in Eq.(\ref{ap11}), up to a change of encoding $ |t\rangle_\textbf{a}\rightarrow  |t\rangle_\textbf{e}$.

Note that unitary $\prod_{t=1}^T \big[F^\i(t,k_t) V^\i\big] $, which implements step (i) of the algorithm is indeed compressing the input state in a $D$-dimensional Hilbert space, in $\mathcal{O}(\log(d))$ ancillary qubits. We conclude that, having multiple copies of states in an unknown $d$-dimensional subspace of the Hilbert space of $\mathcal{O}(\log D)$ qubits, we can compress any input state $|\psi\rangle$ in  $t^\ast=\lceil \log(d+1)\rceil$ ancillary qubits, in a reversible fashion with an error which is exponentially small in $T=2^{t^\ast}$, and therefore doubly exponentially small in $t^\ast$. This method for state compression could be of independent interest.

\newpage

 \section{A Physical Interpretation of Density Matrix Exponentiation}\label{App:DME2}

In this section, we discuss a slightly different point of view to Density Matrix Exponentiation. 
Consider $n$ ancillary systems,  also known as Quantum Reference Frames (QRF), each of which has a Hilbert space isomorphic to the Hilbert space of the main system.    
Suppose initially the main system is in state $\rho$ and each ancillary system is in the pure state $|\phi\rangle$, so that their joint state is 
$$\rho\otimes |\phi\rangle\langle\phi|^{\otimes n}\ .$$ 
Suppose at time $t=0$ we couple the main system to these ancillary systems via the Hamiltonian
\beq
H=J\sum_{j=1}^{n} S_{\text{sys},j}\ ,
\eeq
where  $S_{\text{sys},j}$ is the SWAP operator  acting on the main system and $j$'th QRF, tensor product with the identity operators on the rest of systems.  Then, at time $t\ge 0$ the reduced state of the main system is given by 
\beq
\rho(t)= \Tr_\text{RF}\Big( e^{- i H t}(\rho \otimes |\phi\rangle\langle \phi|^{\otimes n}) e^{ i H t} \Big)\ ,
\eeq
where the partial trace is over all $n$ QRF.  Then, it turns out that in the limit of large $n$ the system remains almost uncorrelated with QRF's, and its evolution can be approximated by a unitary generated by the effective Hamiltonian
\beq
H_\text{eff}=J (n+1) |\phi\rangle\langle\phi |\ . 
\eeq
More precisely, by diagonalizing Hamiltonian $H$ and finding the exact reduced $\rho(t)$ we show that 
\begin{proposition}\label{Thm}
For states $\rho(t)$ and  $e^{-i t H_\text{eff} } \rho e^{i t H_\text{eff}}$  the fidelity is lower bounded as
\be\label{trw}
\text{Fid}(\rho(t), e^{-i t H_\text{eff} } \rho e^{i t H_\text{eff}}) \ge \frac{n-1}{n+1}\ ,
\ee
and the trace distance  is upper bounded by
\beq\label{first}
\frac{1}{2}\left\|\rho(t)-e^{-i t H_\text{eff}} \rho e^{i t H_\text{eff} }\right\|_1 \le \frac{6}{n}\ .
\eeq
Furthermore, for any pure state $|\psi\rangle$ of the main system, the fidelity of the joint state of the system and QRF's with  state $|\Psi_\text{ideal}(t)\rangle=e^{-i H_\text{eff} t}|\psi\rangle\otimes |\phi\rangle^{\otimes n}$ is bounded by
\beq\label{Eq:sec}
\left|\langle\Psi_\text{ideal}(t) |e^{-it H} |\psi\rangle\otimes |\phi\rangle^{\otimes n}\right|\ge \frac{n-1}{n+1}\ .
\eeq
\end{proposition}
Therefore,  in the limit of large $n$ the state of QRF's remains unaffected in this process. 

We note that a similar result has been previously established in  \cite{marvian2008building}  in the context of quantum reference frames.

%Finally, note that $H=J\sum_{j=1}^{n} S_{\text{sys},j}$ is a sparse matrix and its elements are efficiently computable, so $e^{-i H J t}$ can be simulated in  time $\mathcal{O}(\log {D^{n+1}})=\mathcal{O}(n \log{d})$, where $D$ is the dimension of system Hilbert space.

\subsection*{Proof of proposition \ref{Thm}}

We prove these results for the case where $\rho$ the initial state of system is pure. The results easily generalize to mixed states using triangle inequality of trace distance and  joint concavity of fidelity.  As we will see in the following, under this assumption, the dynamics of each system is restricted to a 2d subspace, and we can treat it as a qubit. Then, one can take advantage of the SU(2) symmetry of SWAP Hamiltonian and apply the representation-theoretic techniques used previously in \cite{marvian2008building} to study the time  
evolution. However, the following analysis does not rely on such techniques. 

 An arbitrary pure state of system can be decomposed as
\beq
|\psi\rangle=\alpha|\phi\rangle+\beta |\phi^\perp\rangle\ ,
\eeq
where $|\alpha|^2+|\beta|^2=1$, and $\langle\phi|\phi^\perp\rangle=0$. 
Using this notation, the initial joint state of system and QRF's at $t=0$ is given by
\beq
|\psi\rangle |\phi\rangle^{\otimes n}=\alpha|\phi\rangle^{\otimes (n+1)}+\beta |\phi^\perp\rangle\otimes  |\phi\rangle^{\otimes n},
\eeq
State $$|v\rangle=|\phi\rangle^{\otimes (n+1)}$$ is an eigenstate of $\sum_{i=1}^n S_{\text{sys},i}$ and, hence an eigenstate of $H$:
\beq
H |\phi\rangle^{\otimes (n+1)}=J\sum_{i=1}^n S_{\text{sys},i} |\phi\rangle^{\otimes (n+1)}=J n  |\phi\rangle^{\otimes (n+1)}\ .
\eeq
Next, consider $n+1$ orthonormal states 
\bes
\begin{align}
|\bar{0}\rangle&\equiv |\phi^\perp\rangle\otimes  |\phi\rangle^{\otimes n}\\ 
|\bar{i}\rangle&\equiv S_{\text{sys},i} |\bar{0}\rangle=S_{\text{sys},i} |\phi^\perp\rangle\otimes  |\phi\rangle^{\otimes n}\ ,\ \ \ \   1 \le i \le n\ .
\end{align}
\ees
These states 
span a $n+1$-dimensional subspace, denoted by $\mathcal{H}_1$. It can be easily seen that the operator $\sum_{i=1}^n S_{\text{sys},i} $,   and hence the Hamiltonian $H$, is block-diagonal with respect to the subspace $\mathcal{H}_1$. Diagonalizing $\sum_{i=1}^n S_{\text{sys},i} $ in this subspace, we find that it has three distinct  eigenvalues $-1,n-1$  and $n$ with the corresponding eigenstates 
\bes
\begin{align}
\lambda= n &:\ \ \ |u_+\rangle= \frac{1}{\sqrt{n+1}} \sum_{j=0}^n  |\bar{j}\rangle\ \    &&(\text{non-degenerate in } \mathcal{H}_1)\\
\lambda=-1 &:  \ \ \    |u_-\rangle= \sqrt{\frac{n}{n+1}} |\bar{0}\rangle-  \frac{1}{\sqrt{n(n+1)}} \sum_{j=1}^n  |\bar{j}\rangle  \ \    &&(\text{non-degenerate in } \mathcal{H}_1)   \\
\lambda=n-1 &: \ \ \    |u_k\rangle= \frac{1}{\sqrt{n}}\sum_{j=1}^n  e^{i  2\pi jk/n}|\bar{j}\rangle\ , \ \ \ \ \  1\le k\le n-1  &&(\text{$n-1$ fold degenerate in }  \mathcal{H}_1) 
\end{align}
\ees
 State $|\psi\rangle\otimes |\phi\rangle^{\otimes n}$, the joint initial state of the system and QRFs, lives in the subspace spanned by three eigenstates of $H$, i.e. $|v\rangle, |u_+\rangle$ and ,$|u_-\rangle$ with the corresponding eigenvalues $nJ, nJ$ and $-J$. 
%Then, 
%\beq
%|\phi^\perp\rangle\otimes  |\phi\rangle^{\otimes n}=|\bar{0}\rangle=\frac{1}{\sqrt{N+1}}|u_+\rangle+\sqrt{\frac{N}{{N+1}}} |u_-\rangle
%\eeq
%Therefore,
%\beq
%e^{-i H t} |\phi^\perp\rangle\otimes  |\phi\rangle^{\otimes n}=e^{- i H t}  |\bar{0}\rangle=e^{- i t J N }\frac{1}{\sqrt{N+1}}|u_+\rangle+e^{i t J} \sqrt{ \frac{N}{{N+1}}} |u_-\rangle
%\eeq
It follows that the joint state at time $t$ is given by
\begin{align}
|\Psi(t)\rangle\equiv (e^{-i H t} |\psi\rangle)|\phi\rangle^{\otimes n} &=(e^{-i H t}\otimes I^{\otimes n}) (\alpha|v\rangle+\beta |\bar{0}\rangle)=\alpha e^{- i t J n } |v\rangle+\beta e^{- i t J n }\frac{1}{\sqrt{n+1}}|u_+\rangle+\beta e^{i t J} \sqrt{ \frac{n}{{n+1}}} |u_-\rangle \ .
\end{align}
%It can be easily shown that in the limit where $n$ goes to infinity this state converges to state $e^{- i H_\text{eff}  t} |\psi\rangle\otimes |\phi\rangle^{\otimes n}$, where $H_\text{eff}=J (n+1)|\phi\rangle\langle\phi|$. In other words, we have simulated Hamiltonian $H_\text{eff}$ for time $t$, and the ancillary systems (or reference frames) remains unaffected. To see this consider state 
Next consider
\beq
|\Psi_\text{ideal}(t)\rangle\equiv e^{- i H_\text{eff} t}|\psi\rangle\otimes |\phi\rangle^{\otimes n}=\alpha e^{-i J (n+1) t} |\phi\rangle^{\otimes (n+1)}+\beta |\phi^\perp\rangle\otimes  |\phi\rangle^{\otimes n}=\alpha e^{-i J (n+1) t}|v\rangle+\beta |\bar{0}\rangle\ . 
\eeq
Fidelity of this state with state $|\Psi(t)\rangle=e^{-i H t} |\psi\rangle|\phi\rangle^{\otimes n}$ is given by
\beq
\left|\langle\Psi(t)|\Psi_\text{ideal}(t)\rangle\right|\ge \frac{n-1}{n+1}\ ,
\eeq
which also implies Eq.(\ref{trw}). 

Next, we focus on the reduced state of system at time $t$. It can be easily seen that at any time $t$ the reduced state of system is restricted to the two dimensional subspace spanned by $|0_\s \rangle= |\phi\rangle$ and $|1_\s \rangle= |\phi^\perp_\RF\rangle$. Furthermore, the reduced state of the QRF's is also restricted to a two dimensional subspace spanned by 
\begin{align}
|0_\RF\rangle&\equiv|\phi\rangle^{\otimes n}\\
|1_\RF\rangle&\equiv\frac{1}{\sqrt{n }}\sum_{k=0}^{n-1} |\phi\rangle^{\otimes k}\otimes |\phi^\perp_\RF\rangle\otimes  |\phi\rangle^{\otimes n-k-1} \ .
\end{align}
Then, at any time $t\ge 0$ the joint state of system and QRF's can be written as 
\begin{align}
|\Psi(t)\rangle\equiv e^{-i H t} |\psi\rangle|\phi\rangle^{\otimes n} =e^{i J t}\left[\left(\alpha e^{-i t J (n+1)} |0_\s\rangle+\beta \frac{n+e^{-it J (n+1)}}{n+1} |1_\s\rangle \right)\otimes |0_\RF\rangle+\beta (e^{-it J (n+1)}-1)\frac{\sqrt{n}}{n+1} |0_\s\rangle \otimes |1_\RF\rangle\right]\ .
\end{align}
It follows that at any time $t$ the reduced density operator of system in the basis $\{|\phi\rangle,|\phi^\perp\rangle\}$  is given by
\begin{align}
%\rho(t)=&\left[\alpha^2+\beta^2(1+N|\delta|^2)\right]  &&|\phi\rangle\langle\phi|+ \beta^2(1-N\delta^2)  |\phi^\perp_\RF\rangle\langle\phi^\perp_\RF|\\ &+\alpha\beta^\ast e^{-i (N+1)J t}\times (1+\delta)  &&|\phi\rangle\langle\phi^\perp_\RF|+ \alpha^\ast \beta e^{i (N+1)J t}\times (1+\delta^\ast)  |\phi^\perp_\RF\rangle\langle\phi|
\rho(t)=\left(
\begin{array}{cc}
|\alpha|^2+|\beta|^2 \times n |\delta|^2&\ \   \alpha\beta^\ast e^{-i (n+1)J t}\times (1-\delta) \\ \ \\
 \alpha^\ast\beta e^{i (n+1)J t}\times (1-\delta^\ast)  &  |\beta|^2 (1-n |\delta|^2)  
\end{array}
\right)\ ,
\end{align}
where $\delta=[1-e^{-i (n+1)Jt}](n+1)^{-1}$, and so  $|\delta|\le\frac{2}{n+1}$. Note that the ideal state $e^{-it H_\text{eff}}|\psi\rangle$ in this basis is 
\begin{align}
e^{-it H_\text{eff}}|\psi\rangle\langle\psi| e^{it H_\text{eff}} =\left(
\begin{array}{cc}
|\alpha|^2&\ \   \alpha\beta^\ast e^{-i (n+1)J t} \\ \ \\
 \alpha^\ast\beta e^{i (n+1)J t} &  |\beta|^2  
\end{array}
\right)\ .
\end{align}
Therefore,
\begin{align}
\left\|\rho(t)-e^{-it H_\text{eff}}|\psi\rangle\langle\psi| e^{it H_\text{eff}} \right\|_1\le 2n|\delta|^2+2|\delta|\le \frac{12}{n}\ .
\end{align}

\newpage
\color{black}

\section{Approximate transformations (Proof of Eq.(\ref{bound4}))}\label{App:approx}

In the following we show Eq.(\ref{bound4}):
\be\nonumber
\frac{1}{2}\left\|\mathcal{E}_U(\rho)-\widetilde{\mathcal{E}}(\rho)\right\|_1\le T\times 4\delta \ .
\ee
First, note that 
$$\left\|\exp({i \pi |\phi\rangle\langle\phi|})-\exp({i \pi |\widetilde{\phi}\rangle\langle\widetilde{\phi}|})\right\|_\infty=2\sqrt{1 -|\langle\phi|\widetilde{\phi}\rangle|^2} ,$$
where $\|\cdot \|_\infty$ denotes the operator norm, and 
we have used the fact that $\exp({i \pi |\phi\rangle\langle\phi|})=I-2|\phi\rangle\langle\phi|$.  By Eq.(\ref{bound1}), $|\langle\widetilde{\phi}^\o_k| U|\phi_k^\i\rangle|^2\ge 1-\delta^2$, which implies
\be
\big\| R^\o({k})_a-\widetilde{R}^\o({k})_a\big\|_\infty\le 2\delta\ ,
 \ee
 where  ${R}^\o({k})_a$ and $\widetilde{R}^\o({k})_a$ respectively  denote controlled-reflections with respect to  $|{\phi}^\o_k\rangle=U|{\phi}^\i_k\rangle$ and $|\widetilde{\phi}^\o_k\rangle$. This, in turn,  implies 
\be
\|{W}_a^\o(k)-\widetilde{W}_a^\o(k)\|_\infty \le 4\delta \ ,
\ee
where $\widetilde{W}_a^\o(k)=\widetilde{R}_a^\o(k) H_a \widetilde{R}_a^\o(1)$, and we have used the fact that for any unitaries $V_1, V_2, \widetilde{V}_1, \widetilde{V}_2 $, it holds that 
\be
\|\widetilde{V}_2\widetilde{V}_1-V_2V_1\|_\infty\le  \|\widetilde{V}_2-V_2\|_\infty 
+\|\widetilde{V}_1-V_1\|_\infty\ . 
\ee 
This means that if in the circuit in Fig.\ref{Fig}, we replace ${W}_a^\o(k)=(U\otimes I_a)W_a^\i(k) (U^\dag\otimes I_a)$  with $\widetilde{W}_a^\o(k)$, we realize a unitary whose distance with the original unitary is bounded by $T \times 4\delta $. Finally, we use the fact that for any
unitary $V$ and state $|\psi\rangle$, it holds that
 $$\|V |\psi\rangle\langle\psi|V^\dag-|\psi\rangle\langle\psi|\|_1\le \|V |\psi\rangle\langle\psi|V^\dag-V|\psi\rangle\langle\psi|\|_1+ \| V|\psi\rangle\langle\psi|-|\psi\rangle\langle\psi|\|_1\le 2\|V-I\|_\infty \ .$$
This in turn implies that for any 
 density operator $\sigma$ and pair of unitaries $V$ and 
$\widetilde{V} $, it holds that 
\be
\frac{1}{2}\|V\sigma V^\dag-\widetilde{V} \sigma \widetilde{V}^\dag  \|_1\le \|V-\widetilde{V}\|_\infty\ ,
\ee
which means for any probability distribution $\{p_j\}$ and set of unitaries  $\{V_j\}$ and $\{\widetilde{V}_j\}$, it holds that
 \be
\frac{1}{2}\big\| \sum_j p_j V_j\sigma V_j^\dag- \sum_j p_j  \widetilde{V}_j \sigma \widetilde{V}^\dag_j  \big\|_1\le \sum_j p_j  \|V_j-\widetilde{V}_j\|_\infty\ .
\ee
It follows that if in the circuit in Fig.\ref{Fig}, we replace ${W}_a^\o(k)=(U\otimes I_a)W_a^\i(k) (U^\dag\otimes I_a)$  with $\widetilde{W}_a^\o(k)$, the resulting joint state of the system and ancillary qubits will have trace distance with the original state bounded by $4\delta T$. Finally,  by tracing over the ancilla qubits, and using the monotonicity  of the  trace distance under partial trace, we find $\frac{1}{2}\left\|\mathcal{E}_U(\rho)-\widetilde{\mathcal{E}}(\rho)\right\|_1\le T\times 4\delta$, which is Eq.(\ref{bound4}).

\color{black}

\newpage

\section{Generalizations}\label{App:Gen}
In this section, we discuss several generalizations of the results presented in the paper. These generalizations are summarized in Sec.\ref{Sec:Disc} of the paper.

\subsection{Input-output pairs of mixed states}\label{Sec:mixed}

The algorithm presented in the paper assumes the given samples of the input-output pairs are all pure states. However, as we explained at the end of Sec.\ref{Sec:alg} this algorithm can be generalized to the case where the  sample states contain mixed states. It can be easily shown that, as long as the sample input states $S_\i=\{\rho^\i_k, k=1,\cdots K\}$  contain (at least) one state close to a pure state, we can still coherently erase the sate of system and push it into this  pure state.  Then, we can emulate the action of the unknown unitary $U$, using the same approach we used in the original algorithm.

 Let $\mathcal{H}_\i$ be the subspace spanned by the union of the support of density operators in $S_\i=\{\rho^\i_k, k=1,\cdots K\}$. Then, having copies of sample input-output states in $S_\i$ and $S_\o$, we can determine the action of $U$ on any state in  $\mathcal{H}_\i$,  if and only if the set $S_\i$ generates the full matrix algebra on $\mathcal{H}_\i$. Therefore, in the following we naturally assume this condition is satisfied. 

To erase the state of system coherently we use controlled-translations  with respect to the given sample states, i.e. the unitaries
\beq
T_{a}(k,t)=|0\rangle\langle0|_a\otimes I+|1\rangle\langle1|_a\otimes e^{-i\rho_k t}\ ,
\eeq 
where we have suppressed  \emph{in} and \emph{out} superscript in both sides. As we discussed in Appendix \ref{app:SWAP}, using  the given copies of sample states we can efficiently simulate these unitaries. Recall that the input set $S_\i$ (and  hence $S_\o$) contains at least one pure state. Without loss of generality, let $\rho_1^\i$ and $\rho_1^\o=U\rho_1^\i U^\dag$ be the pure state in the  sample set $S_\i$ and  its corresponding output state in the set $S_\o$, respectively. Then instead of unitaries $W_a(k)=R_{a}(k) H_{a} R_{a}(1)$ used in the original algorithm, we use unitaries $W_a(k,t)=T_{a}(k,t) H_{a} T_{a}(1,\pi)$, where we have suppressed \emph{in} and \emph{out} superscripts, and choose $k$  and $t$ uniformly at random from the sets $1,\cdots ,K$ and $[0,1]$, respectively.  Then, it can be easily shown that state $\rho_\i^1=|\phi^\i_1\rangle\langle\phi^\i_1|$ is the unique fixed point state of  channel $\mathcal{W}=1/K\sum_{k=1}^K \int_0^1 dt \Tr_a\big(W_a^\i(k,t) [\tau\otimes |-\rangle\langle-|_a] {W_a^\i}^\dag(k,t) \big)$ inside $\mathcal{H}_\i$. It follows that we can coherently erase the state of system, and therefore, using the same technique we used in the main algorithm, we can emulate the action of unitary $U$.

\subsection{Emulating controlled unitaries}
In many quantum algorithms, such as quantum phase estimation, one needs to implement the controlled version of a unitary $U$, i.e. the unitary $U^c\equiv|0\rangle\langle 0|\otimes I+|1\rangle\langle 1|\otimes U$. Can we modify the proposed algorithm to implement the controlled version of $U$ as well?  

To answer this question, first note that if the only given resources are multiple copies of  states in $S_\i$ and $S_\o$, it is impossible to distinguish between  unitaries $U$ and  $e^{i\theta} U$, for any phase $e^{i\theta}$. On the other hand, in general the controlled version of  $U$ and  $e^{i\theta} U$ are distinct unitaries, which are not equivalent up to a global phase. This means that even to specify what is the controlled version of $U$ we need extra resources that define and fix this global phase. For instance, we can use multiple copies of the input state 
$|\Phi^\i\rangle=(\alpha|0\rangle+ \beta |1\rangle)\otimes |\phi^\i\rangle\ ,$ 
with $\alpha\neq 0$ and $\beta\neq 0$ and $|\phi^\i\rangle\in\mathcal{H}_\i$, together with  copies of its corresponding output $|\Phi^\o\rangle=U^c |\Phi^\i\rangle$.

Therefore, in the following we assume in addition to the multiple copies of states in $S_\i$ and $S_\o$, we are also given multiple copies of states $|\Phi^\i\rangle$ and $|\Phi^\o\rangle$. Again, we naturally assume the set $S_\i$ generates the full matrix algebra on $\mathcal{H}_\i$. This together with the fact that $|\phi^\i\rangle\in\mathcal{H}_\i$ implies  the set
\beq
 \{|0\rangle\langle 0 |, |1\rangle\langle 1|\}\otimes S_\i\ \  \cup \{|\Phi^\i\rangle\langle \Phi^\i|\}
\eeq 
generates the full matrix algebra on $\mathbb{C}^2\otimes \mathcal{H}_\i$, where $\mathbb{C}^2$ denotes the Hilbert space of the controlled qubit. It follows that, given these resources we can now implement the algorithm proposed in the paper to emulate the controlled-unitary $U^c$ on $\mathbb{C}^2\otimes \mathcal{H}_\i$. In other words, if instead of states in the sets $S_\i$ and $S_\o$ we choose  states  from the sets $\big(\{|0\rangle\langle 0 |, |1\rangle\langle 1\}\otimes S_\i\big)\cup\{|\Phi^\i\rangle\langle \Phi^\i|\}$ and $\big(\{|0\rangle\langle 0 |, |1\rangle\langle 1\}\otimes S_\o\big)\cup\{|\Phi^\o\rangle\langle \Phi^\o|\}$ respectively, we implement unitary $U^c$.

%Finally, we notice that some algorithms  that use controlled version of $U$ can still be implemented  even if we are not given the extra resources $|\Phi^\i\rangle$ and $|\Phi^\o\rangle$, which are needed to fix the  global phase of $U$. To see this suppose we want to perform the phase estimation algorithm to find the eigenvectors of $U$. Suppose we use our algorithm to emulate the controlled unitaries, and to achieve  the desired accuracy for finding the eigenvalues, in total we use $N$ copies of state  $|\Phi^\i\rangle$ and $N$ copies of state  $|\Phi^\o\rangle$ l.  Then, it can be easily seen that if instead of  $|\Phi^\o\rangle^{\otimes N}$ we use $|\Phi^\o(\theta)\rangle^{\otimes N} $, where $|\Phi^\o(\theta)\rangle= [|0\rangle\langle0|\otimes I+|1\rangle\langle1|\otimes e^{i\theta} U] |\Phi^\i\rangle$, then we we implement the phase estimation algorithm for the unitary $e^{i\theta} U$, which has the same eigenvectors. The effect of extra phase $e^{i\theta}$ only shows up as an overall shift in all eigenvalues of $U$. Now suppose instead os using $|\Phi^\o\rangle^{\otimes N}$ or $|\Phi(\theta)^\o\rangle^{\otimes N}$ we implement this algorithm using state 
%\beq
%\frac{1}{2\pi}\int_0^{2\pi} d\theta \ |\Phi^\o(\theta)\rangle\langle \Phi^\o(\theta)|^{\otimes N}\ .
%\eeq 

\subsection{More efficient algorithm with prior information about samples}
The algorithm presented in the paper does not assume any prior information about the sample input states, or the relation between them. On the other hand, as we show in the following, making some assumptions about  the sample input states, we can emulate the unknown unitary more efficiently. Note that we do not assume any prior information about any single sample states; the assumption is only about the relation between them. More precisely, the assumption is about the pairwise inner product between the states in the input set.

 The main idea behind this version of the algorithm is again emulating via coherent erasing. We use the fact that by measuring the system in two conjugate bases, we can completely erase the state of system. Therefore, 
we assume we are given multiple copies of states in an orthonormal basis for the input subspace, and their corresponding output states. We also assume we are given multiple copies of one (or more) state in the conjugate basis, and their corresponding outputs. Then using these sample states we can simulate coherent measurements in both  basis, and coherently erase the state of system.

Let $\{|\theta^\i_k\rangle: k=0\cdots d-1\}$ be $d$ unknown orthonormal states in the $d$-dimensional  \emph{input subspace} $\mathcal{H}_\i$, and  
\beq
|\alpha^\i_j\rangle=\frac{1}{\sqrt{d}}\sum^{d-1}_{k=0} e^{i2\pi k j/d }\ |\theta^\i_k\rangle\ \ \ \ \  :\  j=0\cdots d-1\ ,
\eeq
 be the orthonormal basis for $\mathcal{H}_\i$, which is conjugate to  $\{|\theta^\i_k\rangle: k=0\cdots d-1\}$.  We assume we are given multiple copies of the input states $\{|\theta^\i_k\rangle: k=0\cdots d-1\}$ and  $\{|\alpha^\i_j\rangle : j=0\cdots d-1\}$, and their corresponding output states $\{|\theta^\o_k\rangle=U|\theta^\i_k\rangle: k=0\cdots d-1\}$ and  $\{|\alpha^\o_j\rangle=U|\alpha^\i_j\rangle : j=0\cdots d-1\}$. Note that, even if we are only given multiple copies of states $\{|\theta^\i_k\rangle:\  k=0\cdots d-1\}$ and multiple copies of  \emph{one} of the states in  $\{|\alpha^\i_k\rangle: k=0\cdots d-1\}$, we can efficiently generate all  states in this set (and similarly, for the set $\{|\alpha^\o_k\rangle: k=0\cdots d-1\}$). \footnote{Let $P^\i=\sum_{k=0}^{d-1} k |\theta^\i_k\rangle\langle \theta^\i_k| $. Having multiple copies of states $\{|\theta^\i_k\rangle: k=0\cdots d-1\}$ we can simulate the unitary $\exp(-i P^\i s)$ for any real $s$, and by applying this unitary, for different values of $s$, we can transform  one element of the set   $\{|\alpha^\i_k\rangle: k=0\cdots d-1\}$ to the other elements.  To simulate the unitary $\exp(-i P^\i s)$, we note that  $\exp(-i P^\i s)= \prod_{k=0}^{d-1} \exp({- i s k |\theta^\i_k\rangle\langle \theta^\i_k|})$. As we have seen in  Appendix \ref{app:SWAP} each unitary $\exp({- i s k |\theta^\i_k\rangle\langle \theta^\i_k|})$ can be efficiently simulated using multiple copies of state $|\theta^\i_k\rangle$, in time $\mathcal{O}(\log(D))$. Therefore, we can simulate the unitary $\exp({-i P^\i s})$ in time $\mathcal{O}(d \log (D))$.  }

In the first step of this algorithm we perform a coherent measurement in $\{|\theta^\i_k\rangle: k=0\cdots d-1\}$ basis. To do this we couple the system to an ancillary system  with a $d-$dimensional Hilbert space.    The ancillary system  is initially prepared in the state $|\Gamma\rangle=\sqrt{d^{-1}}\sum_{t=0}^{d-1} |t\rangle$, where $\{|t\rangle: t=0\cdots d-1\}$ is a standard orthonormal basis.    Then, we use the given copies of sample states $\{|\theta^\i_k\rangle: k=0\cdots d-1\}$  to simulate the unitary 
\beq
V^\i_P=\sum_{k,t=0}^{d-1}    e^{i t k2\pi/d }  |\theta^\i_k\rangle\langle \theta^\i_k|\otimes |t \rangle \langle t|= \sum_{t=0}^{d-1}    e^{i t P^\i 2\pi/d}\otimes |t \rangle \langle t|\ ,
\eeq
where $P^\i=\sum_{k=0}^{d-1} k |\theta^\i_k\rangle\langle \theta^\i_k| $. To efficiently simulate this unitary we first note that it has a decomposition as
\beq\label{decomp45}
V^\i_P=\prod_{k=0}^{d-1}\sum_{t=0}^{d-1}    e^{i t |\theta^\i_k\rangle\langle\theta^\i_k| 2\pi k/d}\otimes |t \rangle \langle t|\ .
\eeq
Therefore, to simulate $V^\i_P$ we can simulate the (commuting) unitaries $\sum_{t=0}^{d-1}    e^{i t |\theta^\i_k\rangle\langle\theta^\i_k| 2\pi k/d}\otimes |t \rangle \langle t|$, for $k=0,\cdots , d-1$. Each unitary  $\sum_{t=0}^{d-1}    e^{i t |\theta^\i_k\rangle\langle\theta^\i_k| 2\pi k/d}\otimes |t \rangle \langle t|$   can be efficiently simulated using the given copies of state $|\theta^\i_k\rangle$. Using the results presented in Appendix \ref{app:SWAP}, this simulation can be done in time $\mathcal{O}(\log D)$, where $D$ is the dimension of the Hilbert space.  Then, it follows from the decomposition of $V^\i_P$ given by   Eq.(\ref{decomp45}) that we can simulate $V^\i_P$  in time $\widetilde{\mathcal{O}}(d \log D)$, using the given  copies of sample states  $\{|\theta^\i_k\rangle\ :  k=0\cdots d-1\}$.

Applying the unitary $V^\i_P$ to the system in state  $|\psi\rangle\in \mathcal{H}_\i$ and the ancillary system prepared in state  $|\Gamma\rangle$, we get state
\beq
|\psi\rangle\otimes |\Gamma\rangle=\sum_{k=0}^{d-1} \psi_k |\theta^\i_k\rangle\otimes |\Gamma\rangle  \longrightarrow V^\i_P\big(|\psi\rangle\otimes |\Gamma\rangle\big)= \sum_{k=0}^{d-1} \psi_k\  |\theta^\i_k\rangle \otimes \frac{1}{\sqrt{d}}\sum_{t=0}^{d-1} e^{i t k2\pi/d }|t\rangle\ ,
\eeq
where we have used the decomposition $|\psi\rangle=\sum_{k=0}^{d-1} \psi_k |\theta^\i_k\rangle$.

Then, performing  quantum Fourier transform on the ancillary system we transform the joint state to 
\beq\label{st345}
V^\i_P\big(|\psi\rangle\otimes |\Gamma\rangle\big)= \sum_{k=0}^{d-1} \psi_k\  \big(|\theta^\i_k\rangle \otimes \sum_{t=0}^{d-1} e^{i t k 2\pi/d}|k\rangle\big) \xrightarrow{\text{QFT}} \sum_{k=0}^{d-1} \psi_k\  |\theta^\i_k\rangle \otimes |k\rangle \ .
\eeq
Then, to erase the information in the system we implement the unitary    
\beq
V^\i_Q=\sum_{k,t=0}^{d-1}    e^{i t k2\pi/d }  |\alpha^\i_k\rangle\langle \alpha^\i_k|\otimes |t \rangle \langle t|= \sum_{t=0}^{d-1}    e^{i t Q^\i 2\pi/d}\otimes |t \rangle \langle t|\ ,
\eeq
where $Q^\i=\sum_{l=0}^{d-1} l |\alpha^\i_l\rangle\langle \alpha^\i_l| $. Note that we can efficiently simulate $V^\i_Q$, using a similar approach we used to simulate $V^\i_P$.

Applying $V^\i_Q$ to state in Eq.(\ref{st345}) we find
\beq
\sum_{k=0}^{d-1} \psi_k\  |\theta^\i_k\rangle \otimes |k\rangle
   \longrightarrow V^\i_Q   \sum_{k=0}^{d-1} \psi_k\  |\theta^\i_k\rangle \otimes |k\rangle   =  \sum_{k=0}^{d-1} \psi_k\  e^{i k Q^\i 2\pi/d}  |\theta^\i_k\rangle \otimes |k\rangle = |\theta^\i_0\rangle \otimes  \sum_{k=0}^{d-1} \psi_k\    |k\rangle\ ,
\eeq
where we have used the fact that 
\beq
e^{i k Q^\i 2\pi/d}  |\theta^\i_k\rangle=\sum_{l=0}^{d-1}  e^{i 2\pi lk/d}   |\alpha^\i_l\rangle\langle \alpha^\i_l  |\theta^\i_k\rangle=\frac{1}{\sqrt{d}}\sum_{l=0}^{d-1}  e^{i 2\pi lk/d} e^{-i 2\pi k l/d}  |\alpha^\i_l\rangle = |\theta^\i_{0}\rangle .
\eeq
Therefore, after these three steps  for any input state $|\psi\rangle\in \mathcal{H}_\i$ we have
\beq
\left[V^\i_Q(I\otimes F_\text{a})V^\i_P\right]  |\psi\rangle|\Gamma\rangle = |\theta^\i_0\rangle \otimes  \sum_{k=0}^{d-1} \psi_k\    |k\rangle \ ,
\eeq
where $F_\text{a}$ denotes the quantum Fourier transform on the ancillary system. 

At this point we have completely erased the state of system and transferred  all its information to the ancillary system. Now using a method similar to the one used in the main algorithm, we can transform this state to state $U|\psi\rangle|\Gamma\rangle$: 
We replace $|\theta^\i_0\rangle$ with $|\theta^\o_0\rangle$ and apply ${V^\o_P}^\dag (I\otimes F^\dag_\text{a}){V^\o_Q}^\dag$  to state $|\theta^\o_0\rangle \otimes  \sum_{k=0}^{d-1} \psi_k\    |k\rangle$, where
\begin{align}
V^\o_Q&=\sum_{k,t=0}^{d-1}    e^{i t k2\pi/d }  |\alpha^\o_k\rangle\langle \alpha^\o_k|\otimes |t \rangle \langle t|= \sum_{t=0}^{d-1}    e^{i t Q^\o 2\pi/d}\otimes |t \rangle \langle t|\ ,\\ 
V^\o_P&=\sum_{k,t=0}^{d-1}    e^{i t k2\pi/d }  |\theta^\o_k\rangle\langle \theta^\o_k|\otimes |t \rangle \langle t|= \sum_{t=0}^{d-1}    e^{i t P^\o 2\pi/d}\otimes |t \rangle \langle t|\ .
\end{align}
Note that these unitaries can be efficiently simulated using the same method we used to simulate the unitary $V_P^\i$.

It follows that we can emulate the action of unitary $U$ on the input subspace in time $\widetilde{\mathcal{O}}(d\log(D))$.

\newpage
\section{Performing Projective Measurements with Unknown Projectors}\label{App:meas}
In this section, we consider the problem of emulating projective  measurements. Suppose we are given multiple copies of states that belong to different subspaces of the Hilbert space, with the labels specifying these subspaces. Then, we are interested to simulate the projective measurement that projects any given input state to one of these subspaces. Note that any projective measurement can be realized as  a sequence of two-outcome measurements. Therefore, in the following, we focus on implementing two-outcome measurements. 

Consider the sample states $\{|\phi_k\rangle : 1,\cdots,K \}$.  
In contrast to the case of emulating unitaries, discussed in the main body of the paper, here  we do not make any assumptions about the sample states, or the relation between them. Let  $\mathcal{H}_\Pi$ be the subspace spanned by these states, and $\Pi$ be the projector to this subspace. We are interested in implementing the projective measurement described by the projectors $\{\Pi, I-\Pi\}$, assuming we are given multiple copies of each state in this set.  

In the algorithm we use the controlled-reflections
\beq
R_a(k)=|0\rangle\langle 0|_a\otimes I+|1\rangle\langle 1|_a\otimes e^{i\pi |\phi_{k}\rangle \langle\phi_{k}| }\ .
\eeq
As we explained in Appendix \ref{app:SWAP}, using $n$ copies of state $|\phi_k\rangle$ we can implement this unitary  in time $\mathcal{O}\big(\log(D) \times n\big)$,  with error  $\mathcal{O}(\frac{1}{n})$ in trace distance, where $D$ is the dimension of the Hilbert space.

\begin{figure*}
  \includegraphics[width=\textwidth,height=6.5cm]{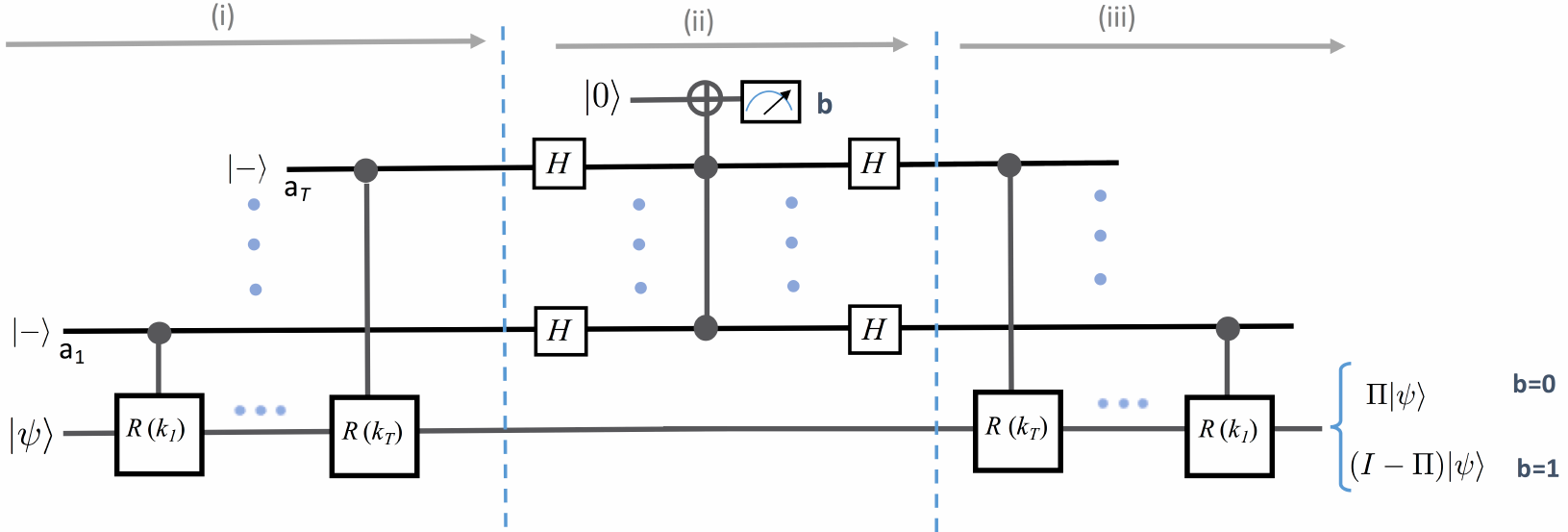}
\caption{The circuit for performing the two-outcome projective measurement with projectors  $\{\Pi, I-\Pi\}$, where $\Pi$ is the projector to the subspace spanned by    the sample states $\{|\phi_k\rangle : 1,\cdots,K \}$.
   \label{Fig3}}
\end{figure*}

\subsection{The Algorithm}

The quantum circuit for this algorithm is presented in Fig.(\ref{Fig3}). The algorithm has the following three steps:\\

\begin{itemize}
\item \noindent\textbf{(i)}  Consider $T$ qubits $a_1\cdots a_T$, which are all initially prepared in state $|-\rangle=(|0\rangle-|1\rangle)/\sqrt{2}$. Let $k_1,\cdots,k_T$ be $T$ independent random integers, each  chosen uniformly at random form $1,\cdots,K$. Apply the controlled-reflection $R_{a_1}(k_1)$ to the system and qubit $a_1$,  $R_{a_2}(k_2)$ to the system and qubit $a_2$, until the last controlled-reflection  $R_{a_T}(k_T)$,  which is applied to the system and qubit $a_T$.

\item \noindent\textbf{(ii)}  Perform the two-outcome projective measurement  $\{|-\rangle\langle -|^{\otimes T}, I-|-\rangle\langle -|^{\otimes T}\}$ on qubits $a_1\cdots a_T$.  This can be implemented efficiently, e.g., by first applying the Hadamard gates on qubits $a_1\cdots a_T$, then applying a  $T$-bit Toffoli gate  controlled by $a_1\cdots a_T$, that is acting on an ancillary qubit initialized in state $|0\rangle$, and  finally applying the Hadamard gates on qubits $a_1\cdots a_T$ again. Then, measuring this ancillary qubit in the computational basis, realizes the desired measurement.

\item\noindent\textbf{(iii)} Recall the random integers  $k_1,\cdots,k_T$ chosen in step (i).  Apply the controlled-reflection $R_{a_T}(k_T)$ to the system and qubit $a_T$, then apply  the controlled-reflection $R_{a_{T-1}}(k_{T-1})$ to the system and qubit $a_{T-1}$, until the last controlled-reflection $R_{a_1}(k_{1})$ which is applied to the system and qubit $a_1$.

\end{itemize}

As we show in the following,    for sufficiently large $T$, the initial state $|\psi\rangle$ is projected to a state close to $\Pi|\psi\rangle/\sqrt{\langle\psi|\Pi|\psi\rangle}$ with probability close to  $\langle\psi|\Pi|\psi\rangle$ (corresponding to outcome $b=0$ in the measurement in step (ii)), and  is projected to a state close to $(I-\Pi)|\psi\rangle/\sqrt{1-\langle\psi|\Pi|\psi\rangle}$ with probability  close to $1-\langle\psi|\Pi|\psi\rangle$ (corresponding to outcome $b=1$ in the measurement).        

\subsection{How it works}

In the following, we  assume the circuit  is implemented with perfect controlled-reflections. Then, the additional error due to the imperfect implementation  of the controlled-reflections can be taken into account, using the same approach we used before for the main algorithm.

Let $\mathcal{E}$ be the quantum channel corresponding to  the overall effect of the circuit in Fig.(\ref{Fig3}) in the case where we ignore the output of measurement in step (ii), i.e., we do not post-select. Let $\mathcal{H}_\Pi$ be the subspace spanned by the sample states $\{|\phi_k\rangle\ :k=1,\cdots,K\}$. 

\begin{theorem}\label{lastThm}
Let $\text{Pr}(b=0)$ be the probability that the measurement in step (ii) of the circuit in Fig.(\ref{Fig3}) returns outcome 0. 
Then,
\be
\Tr(\Pi \rho)-(1- \lambda_\text{min})^T \le \text{Pr}(b=0)\le \Tr(\Pi \rho)  \ ,
\ee
where  $\lambda_\text{min}>0$ is the minimum non-zero eigenvalue of the density operator 
\beq
\sigma_{\text{avg}}\equiv \frac{1}{K} \sum_{k=1}^K |\phi_k\rangle\langle\phi_k|\ .
\ee
 Furthermore, if $\rho$ is block-diagonal with respect to the subspace  $\mathcal{H}_\Pi=\text{Span}_\mathbb{C}\{|\phi_k\rangle\ :k=1,\cdots,K\}$ and the orthogonal subspace,  such that $[\Pi, \rho]=0$,  then the  fidelity of $\mathcal{E}(\rho)$, the output of the circuit,   with $\rho$  the desired output state, is lower bounded by 
\be
F(\mathcal{E}(\rho),\rho)\ge 1-(1- \lambda_\text{min})^T\ .
\ee
\end{theorem}
This theorem implies that in the limit  $T\rightarrow\infty$, the probability of outcome $b=0$ converges to $\Tr(\Pi \rho)$. Furthermore, if $\rho$ is block-diagonal with respect to $\mathcal{H}_\Pi$, it remains unchanged under the measurement. Therefore, the circuit realizes the perfect projective measurement with projectors $\{\Pi, I-\Pi\}$.\\

\begin{proof} In the following, we determine the probability of outcome $b=1$ in step (ii), which is equal to the probability that at the end of step (i) the ancilla qubits $a_1,\cdots, a_T$ are in state $|-\rangle^{\otimes T}$. That is, 
\be
\text{Pr}(b=1)=\frac{1}{K^T}\sum_{k_1,\cdots,k_T=1}^K\Tr\Big([|-\rangle\langle-|^{\otimes T}\otimes I] \ [R_{a_T}(k_T) \cdots R_{a_1}(k_1)]\ \left[ |-\rangle\langle-|^{\otimes T}\otimes \rho\right]\  [R_{a_1}(k_1)\cdots  R_{a_T}(k_T)] \Big) \ .
\ee
Then, using the fact that
\beq
\langle-| R_a(k)|-\rangle_a=\frac{1}{2}(I+e^{i\pi |\phi_k\rangle\langle\phi_k|})=I-|\phi_k\rangle\langle\phi_k|\ ,
\eeq
we find 
\bes\label{meas1}
\begin{align}
\text{Pr}(b=1) &=
\frac{1}{K^T}\sum_{k_1,\cdots,k_T=1}^K \Tr\Big([I-|\phi_{k_T}\rangle\langle\phi_{k_T}|]\cdots [I-|\phi_{k_1}\rangle\langle\phi_{k_1}|]\ \rho\ [I-|\phi_{k_1}\rangle\langle\phi_{k_1}|]\cdots [I-|\phi_{k_T}\rangle\langle\phi_{k_T}|] \Big)\\ &=\Tr\left(\mathcal{M}^T(\rho)\right)\  ,
\end{align}
\ees
where 
\be
\mathcal{M}(\cdot)=\frac{1}{K} \sum_{k=1}^K P_k^\perp(\cdot)P_k^\perp\ ,
\ee
 and $P_k^\perp=I-|\phi_{k}\rangle\langle\phi_{k}|$. Note that $\mathcal{M}(\rho)$ has a simple interpretation: We perform one of $K$ projective measurements $\{|\phi_k\rangle\langle\phi_k|, I-|\phi_k\rangle\langle\phi_k\}$ with equal probability. Then, $\mathcal{M}(\rho)$ is the unnormalized state assuming we have not observed the outcome corresponding to rank-1 projectors $\{|\phi_k\rangle\langle\phi_k|\}$.

Then,
 \begin{align}
\text{Pr}(b=1)&=\Tr\left(\mathcal{M}^T(\rho)\right)\\ &= \Tr\left(\Pi\mathcal{M}^T(\rho)\Pi\right)+\Tr\left(\Pi_\perp\mathcal{M}^T(\rho)\Pi_\perp\right)\\ &=  
\Tr\left(\mathcal{M}^T(\Pi\rho \Pi)\right)+\Tr\left(\mathcal{M}^T(\Pi_\perp\rho \Pi_\perp)\right)\ ,\label{kek}
\end{align}
where $\Pi_\perp=I-\Pi$ is the projector to the orthogonal complement of $\mathcal{H}_\Pi$, and we have used the fact that operators $P_k^\perp$ commute with $\Pi_\perp$ and $\Pi$. Furthermore, because $P_k^\perp$ acts trivially on the subspace orthogonal to $\mathcal{H}_\Pi$,   $\Pi_\perp P_k^\perp =\Pi_\perp$ for all $k$, it can 
be easily seen that
\be\label{kek2}
\Tr\left(\mathcal{M}^T(\Pi_\perp\rho \Pi_\perp)\right)=\Tr(\rho \Pi_\perp)\ .
\ee
This can also be seen more directly by noting that if the input state is in the subspace orthogonal to $\mathcal{H}_\Pi$, then all controlled-reflection acts trivially, which means  with probability 1 we obtain outcome $b=1$. Next, we focus on the term $\Tr\left(\mathcal{M}^T(\Pi\rho \Pi)\right)$.  

For any  operator $X$ we have
\beq\label{sdvv}
\Tr(\mathcal{M}(X))= \frac{1}{K} \sum_{k=1}^K\Tr(X P_k^\perp  )=\Tr(X)- \Tr\Big(X \sum_{k=1}^K \frac{1}{K} |\phi_k\rangle\langle\phi_k|\Big)=\Tr(X)- \Tr(X \sigma_{\text{avg}}) \  , 
\eeq
where  $\sigma_{\text{avg}}\equiv \frac{1}{K} \sum_{k=1}^K |\phi_k\rangle\langle\phi_k|$. The subspace $\mathcal{H}_\Pi$ is defined as the subspace spanned by $\{|\phi_k\rangle: k=1,\cdots,K\}$. Therefore, the density matrix $\sigma_{\text{avg}}$ is automatically full-rank in this subspace. Let $\lambda_\text{min}>0$ be the minimum eigenvalue of  $\sigma_{\text{avg}}$ in this subspace, i.e.,  the minimum nonzero eigenvalue of $\sigma_{\text{avg}}$. Note that for any positive semi-definite operator $X\ge 0$ whose support is restricted to $\mathcal{H}_\Pi$ we have
\beq
\Tr(X \sigma_{\text{avg}}) \ge \lambda_\text{min} \Tr(X)\ .
\eeq
Then, using Eq.(\ref{sdvv}) this means that for any $X\ge 0$, whose support is restricted to $\mathcal{H}_\Pi$ we have
\beq
\Tr(\mathcal{M}(X))=\Tr(X)- \Tr(X \sigma_{\text{avg}})\le \Tr(X) (1- \lambda_\text{min}) \  . 
\eeq
Next, note that if $X$ is a positive semi-definite operator with support restricted to $\mathcal{H}_\Pi$, then $\mathcal{M}(X)$ is also a positive  semi-definite  operator with support restricted to $\mathcal{H}_\Pi$. Therefore, we conclude that
\beq
\Tr(\mathcal{M}^T(X))\le \Tr(X) \times (1- \lambda_\text{min})^T \  . 
\eeq
Using this bound for $X=\Pi \rho \Pi$, and combining Eq.(\ref{kek}) and Eq.(\ref{kek2}) we conclude that 
\be
\Tr(\rho\Pi_\perp) \le \text{Pr}(b=1) =\Tr\left(\mathcal{M}^T(\rho)\right)=\Tr(\rho\Pi_\perp)+\Tr(\mathcal{M}^T(\Pi\rho\Pi))\le \Tr(\rho\Pi_\perp)+  (1- \lambda_\text{min})^T\ ,
\ee
which, in turn implies
\be
\Tr(\rho\Pi)-(1- \lambda_\text{min})^T  \le \text{Pr}(b=0)\le  \Tr(\rho\Pi) \ .
\ee
This completes the proof of the first part of the theorem.

Finally, the second part of the theorem, i.e., the bound on the fidelity,  follows from the following lemma, which is proven in the same way we proved Theorem \ref{Thm_fid}. 
\begin{lemma}
Consider  quantum operation 
\beq
\mathcal{G}(\cdot)=\sum_{\boldsymbol{\lambda}}  p(\boldsymbol{\lambda}) \sum_{\alpha\in A} V^\dag(\boldsymbol{\lambda}) 
P_\alpha V(\boldsymbol{\lambda}) (\cdot)V^\dag(\boldsymbol{\lambda})  P_\alpha  V(\boldsymbol{\lambda}) \ ,
\eeq
which describes the process in which we  apply a unitary $V(\boldsymbol{\lambda})$ to the system, where  $\boldsymbol{\lambda}$ is a random parameter chosen  according to the probability distribution $p(\boldsymbol{\lambda})$, then  we perform a projective measurement on the system with projectors $\{P_\alpha:\alpha\in A\}$, and then finally we apply the unitary $V^\dag(\boldsymbol{\lambda})$.  For any input state $\rho$, the fidelity of $\rho$ and $\mathcal{G}(\rho)$ is lower bounded by 
\beq
F(\rho,\mathcal{G}(\rho))\ge \underset{\alpha\in A}{\max}\    p_\alpha(\rho)\ ,
\eeq
where 
\be
p_\alpha(\rho)=\sum_{\boldsymbol{\lambda}}  p(\boldsymbol{\lambda}) \Tr(V(\boldsymbol{\lambda}) \rho V^\dag(\boldsymbol{\lambda})  P_\alpha)  
\ee
 is the (expected) probability of outcome $\alpha$ in the measurement.
\end{lemma}
This bound essentially means that if one of the outcomes has probability close to 1, then the overall process does not affect the state of system (which is in the same spirit of the Gentle measurement lemma   \cite{wilde2013quantum}).\\
 
\begin{proof}
Consider a pure input state $|\gamma\rangle$. Then, the squared fidelity of  $\mathcal{G}(|\gamma\rangle\langle\gamma|)$ and $|\gamma\rangle\langle\gamma|$ is given by
\begin{align}
F^2(\mathcal{G}(|\gamma\rangle\langle\gamma|),|\gamma\rangle\langle\gamma|)&=\langle\gamma|\mathcal{G}(|\gamma\rangle\langle\gamma|) |\gamma\rangle\\ &=  \sum_{\alpha\in A}  \sum_{\boldsymbol{\lambda}}  p(\boldsymbol{\lambda}) \langle\gamma|V^\dag(\boldsymbol{\lambda})  P_\alpha  V(\boldsymbol{\lambda}) |\gamma\rangle^2\\ &\ge  \underset{\alpha\in A}{\max}\  \sum_{\boldsymbol{\lambda}}  p(\boldsymbol{\lambda}) \langle\gamma|V^\dag(\boldsymbol{\lambda})  P_\alpha  V(\boldsymbol{\lambda}) |\gamma\rangle^2  \\  &\ge \underset{\alpha\in A}{\max}\  \Big(\sum_{\boldsymbol{\lambda}}  p(\boldsymbol{\lambda}) \langle\gamma|V^\dag(\boldsymbol{\lambda})  P_\alpha  V(\boldsymbol{\lambda}) |\gamma\rangle\Big)^2 \\ &= \underset{\alpha\in A}{\max}\  p_\alpha^2(|\gamma\rangle\langle\gamma|)\ ,
\end{align}
where to get the fourth line we have used the fact that the variance of any random variable is non-negative, and to get the last line we have used the fact that the average density operator of the system before the measurement is $\sum_{\boldsymbol{\lambda}}  p(\boldsymbol{\lambda}) V(\boldsymbol{\lambda}) |\gamma\rangle\langle\gamma|V^\dag(\boldsymbol{\lambda})  $, and so the probability of outcome $\alpha$ is  $\sum_{\boldsymbol{\lambda}}  p(\boldsymbol{\lambda}) \langle\gamma|V^\dag(\boldsymbol{\lambda})  P_\alpha  V(\boldsymbol{\lambda}) |\gamma\rangle$. This proves the lemma for the special case of pure states. The result for general mixed states follows from the joint concavity of fidelity.
\end{proof}

Now consider the two special cases of density operators
with support restricted to $\mathcal{H}_\Pi$ or its orthogonal subspace $\mathcal{H}^\perp_\Pi$, such that  $\Pi\rho_\text{inside}\Pi=\rho_\text{inside}$ and  
$\Pi_\perp\rho_\text{outside}\Pi_\perp=\rho_\text{outside}$. Then, applying the first part of the theorem, we find that if $\Pi\rho_\text{inside}\Pi=\rho_\text{inside}$ then $\text{Pr}(b=0)\ge 1- (1-\lambda_\text{min})^T$. In this case the above lemma implies
\be
F(\mathcal{E}(\rho_\text{inside}),\rho_\text{inside}) \ge 1- (1-\lambda_\text{min})^T\ . 
\ee
On the other hand, if $\Pi_\perp\rho_\text{outside}\Pi_\perp=\rho_\text{outside}$, then $\text{Pr}(b=1)=1$, in which case 
\be
F(\mathcal{E}(\rho_\text{outside}),\rho_\text{outside}) = 1\ . 
\ee 
Now  general $\rho$ that is  
 block-diagonal with respect to $\mathcal{H}_\Pi$, can be written as a convex combination of two density operators $\rho_\text{inside}$ and $\rho_\text{outside}$ satisfying the above constraints. Then, the joint concavity of fidelity implies that for any such $\rho$, $F(\mathcal{E}(\rho),\rho)\ge 1-(1- \lambda_\text{min})^T$. 
 This completes the proof of theorem \ref{lastThm}.
\end{proof}
\end{document}